\definecolor{medgreen}{rgb}{0.0, 0.75, 0.0}
\definecolor{darkgreen}{rgb}{0.0, 0.35, 0.0}
\theoremstyle{definition}
\newtheorem{theorem}{Theorem}[section]
\newtheorem{proposition}[theorem]{Proposition}
\newtheorem{corollary}[theorem]{Corollary}
\newtheorem{definition}[theorem]{Definition}
\newtheorem{lemma}[theorem]{Lemma}
\newtheorem{fact}[theorem]{Fact}
\newtheorem{example}[theorem]{Example}
\newtheorem{remark}[theorem]{Remark}
\newcommand{\maximum}{m}
\begin{document}

\onehalfspace

\title{A partial-state space model of unawareness}

\author{Wesley H. Holliday \\ University of California, Berkeley}

\date{{\small Forthcoming in \textit{Journal of Mathematical Economics}}}

\maketitle

\begin{abstract}We propose a model of unawareness that remains close to the paradigm of Aumann's model for knowledge [R.~J.~Aumann, International Journal of Game Theory 28 (1999) 263-300]: just as Aumann uses a correspondence on a state space to define an agent's knowledge operator on events, we use a correspondence on a state space to define an agent's awareness operator on events. This is made possible by three ideas. First, like the model of [A.~Heifetz, M.~Meier, and B.~Schipper, Journal of Economic Theory 130
(2006) 78-94], ours is based on a space of partial specifications of the world, partially ordered by a relation of further specification or refinement, and the idea that agents may be aware of some coarser-grained specifications while unaware of some finer-grained specifications; however, our model is based on a different implementation of this idea, related to forcing in set theory. Second, we depart from a tradition in the literature, initiated by [S.~Modica and A.~Rustichini, Theory and Decision 37 (1994) 107-124] and adopted by Heifetz et al. and [J.~Li, Journal of Economic Theory 144 (2009) 977-993], of taking awareness to be definable in terms of knowledge. Third, we show that the negative conclusion of a well-known impossibility theorem concerning unawareness in [Dekel, Lipman, and Rustichini, Econometrica 66 (1998) 159-173] can be escaped by a slight weakening of a key axiom. Together these points demonstrate that a correspondence on a partial-state space is sufficient to model unawareness of events. Indeed, we prove a representation theorem showing that any abstract Boolean algebra equipped with awareness, knowledge, and belief operators satisfying some plausible axioms is representable as the algebra of events arising from a partial-state space with awareness, knowledge, and belief correspondences.\end{abstract}

\textit{JEL classification}:  C70; C72; D80; D83 \\

\textit{Keywords}:  Unawareness; Knowledge; Belief; Interactive epistemology

\newpage

\section{Introduction}\label{Intro}

In recent decades, models of uncertainty in economics have been enriched so as to also represent \textit{unawareness}. If an agent is uncertain about an event or proposition $E$, then she can conceive of $E$ but does not know whether $E$ obtains. By contrast, if an agent is unaware of $E$, then $E$ is not even ``present to mind'' (\citealt[p.~107]{Modica1994}, \citealt[p.~274]{Modica1999}); the agent has a ``lack of conception''  of $E$ (\citealt[p.~90]{Heifetz2006}, \citealt{Schipper2015}). While unawareness is thereby distinguished from uncertainty, one prominent tradition in the literature, initiated by Modica and Rustichini \citeyearpar{Modica1994} and adopted by Heifetz, Meier, and Schipper~\citeyearpar{Heifetz2006} and Li \citeyearpar{Li2009}, takes unawareness to be definable in terms of knowledge: an agent is unaware of $E$ if and only if she does not know $E$ but also does not know that she does not know $E$. Dually, an agent is aware of $E$ if and only if she knows $E$ or knows that she does not know $E$.

While conceptually parsimonious, taking unawareness to be definable in terms of knowledge as above limits the relevant phenomena that can be modeled. Say that an agent is \textit{overconfident} when he not only believes $E$ but also believes that he knows~$E$, though he does not know $E$, say because $E$ is false.\footnote{As in \citealt{Modica1994,Modica1999}, \citealt{Heifetz2006}, and  \citealt{Li2009} (and the classics  \citealt{Aumann1999,Aumann1999b}, \citealt{Fagin1995}, and \citealt{Hintikka2005}), knowing $E$ requires that $E$ is true. As for belief, for the rest of the paragraph in the main text one may replace belief with \textit{$p$-belief} (\citealt{Monderer1989}), i.e., subjective probability of at least $p$, for any $p\in (0,1]$ for which one agrees that $p$-believing $E$ is inconsistent with knowing $\neg E$ (not $E$). For example, in Aumann's \citeyearpar[(12.2)]{Aumann1999b} framework, knowing $\neg E$ implies $1$-believing $\neg E$, so we can use any $p>0$.} For example, a potential investor in a firm might believe he knows that the firm is profitable, while in fact it is unprofitable and is committing fraud to suggest otherwise. But according to the Modica-Rustichini definition of unawareness, it is impossible for an agent to be aware of $E$ and overconfident with respect to $E$. Since the overconfident agent believes that he knows $E$, he does not know that he does not know $E$.\footnote{This simply applies the principle that if an agent believes $F$, then he does not know $\neg F$ (not $F$), which we assume in the next paragraph as well.}  Then since he also does not know $E$, according to Modica-Rustichini the overconfident agent must be unaware of $E$. But on the intuitive interpretation of awareness as being ``present to  mind'' or of unawareness as ``lack of conception,'' the overconfident agent is perfectly aware of $E$. The investor is perfectly aware of the idea of profitability of the firm; indeed, he believes (falsely) that the firm is profitable. All the behavioral predictions implied by such awareness and belief apply to the investor. Thus, if we wish to model overconfident agents, we cannot accept the Modica-Rustichini definition of unawareness.\footnote{In Appendix \ref{PlausibilityAppendix}, we argue that changing the Modica-Rustichini definition to define unawareness in terms of belief instead of knowledge still does not provide a satisfactory definition of unawareness of events.}

Moreover, under notions of belief as subjective certainty according to which believing $E$ entails believing that you know $E$ (as in \citealt[\S~3]{Stalnaker2006}), the Modica-Rustichini definition makes it impossible for an agent to be aware of $E$ and falsely believe $E$; then assuming belief requires awareness (as in \citealt{Fagin1988}), their definition makes it impossible for an agent to falsely believe $E$.

Yet we should not give up on modeling unawareness in the face of false or overconfident beliefs. The investor's overconfidence may be due in part to his unawareness of the possibility of a sophisticated type of fraud by the firm; one might predict that if he were made aware---or if we consider a different investor who is aware---he would realize that none of his due diligence ruled out such fraud.  Thus, while overconfidence with respect to $E$ should be compatible with awareness of $E$, it may be related to unawareness of some other event $F$. We would like to have a model of unawareness that can capture these~phenomena.

In this paper, we propose a new model of unawareness. Instead of defining awareness in terms of knowledge, we model awareness in a way that remains close to the paradigm of Aumann's \citeyearpar{Aumann1999} model for knowledge: just as Aumann uses a correspondence on a state space to define an agent's knowledge operator on events, we use a correspondence on a state space to define an agent's awareness operator on events.  Like the model of Heifetz, Meier, and Schipper~\citeyearpar{Heifetz2006}, our model is based on the idea of \textit{partial specifications} of the world, partially ordered by a relation of \textit{further specification} or \textit{refinement}, and the idea that agents may be aware of some coarser-grained specifications while unaware of some finer-grained specifications. However, our model is based on a different implementation of this idea with a long history in mathematics and mathematical logic. In particular, as exploited in forcing in set theory (e.g., \citealt{Takeuti1973}), partial specifications ordered by refinement  give rise to a Boolean algebra of events via the regular open sets in the downset topology of the partial order (see Section~\ref{Prelim}); crucially, in this Boolean algebra, the negation operation $\neg$ is not set-theoretic complementation, so there may be partial specifications that belong to neither $E$ nor $\neg E$.\footnote{This is analogous to rejecting the \textit{real states} assumption of \citealt{Dekel1998}, except we consider only events and not formulas in a logical language.} Next, as exploited in so-called possibility semantics for modal logic (\citealt{Humberstone1981,Holliday2014,Holliday2015,Benthem2017,HollidayForthB}), relating these partial specifications via accessibility relations---or equivalently, possibility correspondences---for different agents provides a model of multi-agent knowledge and belief, which generalizes the Kripke frames (\citealt{Kripke1963}) or Aumann structures (\citealt{Aumann1976,Aumann1999}) that drop out as special cases when using only discrete partial orders. The final step we take  here is to add a correspondence representing an agent's awareness of partial possibilities: $\nu\in \mathcal{A} (\omega)$ will mean that in possibility $\omega$, the agent is aware of possibility $\nu$.

Just as in Aumann structures one uses possibility correspondences to define knowledge \textit{of events}, in our structures we will use  $\mathcal{A} $ to define awareness of events, where an event is understood as a set of possibilities in a Boolean algebra of events. This project of modeling awareness of events in some Boolean algebra must be distinguished from the project of modeling awareness of \textit{sentences} in some language. Awareness of sentences may be \textit{hyperintensional} in the sense that where $\llbracket \varphi\rrbracket$ is the set of possibilities in which a sentence $\varphi$ is true in a given model, we may have $\llbracket \varphi_1\rrbracket=\llbracket \varphi_2\rrbracket$ while the agent is aware of the sentence $\varphi_1$ and yet unaware of the sentence $\varphi_2$.\footnote{The denial of hyperintensionality for awareness and knowledge of sentences is what Dekel et al.~\citeyearpar{Dekel1998} call \textit{event sufficiency}.} Hyperintensional models of awareness of sentences have been developed in the literature (e.g.,~\citealt{Fagin1988}, \citealt{Modica1999}, \citealt{Halpern2001}), but  here we follow the event-based tradition of Aumann \citeyearpar{Aumann1976,Aumann1999} with a non-hyperintensional model of awareness of events (see \citealt{Schipper2015} for comparison of event-based and sentence-based approaches). However, we believe these two projects should be related by the following bridge principle: an agent is aware of an event $E$ if and only if she is aware of some sentence $\varphi$ that she understands such that $\llbracket \varphi\rrbracket = E$.\footnote{This bridge principle has consequences for operationalizing the concept of awareness of events. Assuming one has a decision procedure for testing awareness of sentences, one obtains a semi-decision procedure for testing awareness of a given event $E$: enumerate sentences $\varphi_1,\varphi_2,\dots$ that express $E$ (assuming that for the given $E$, there are at most countably infinitely many sentences in the agent's language that express $E$) and check for each sentence the agent's awareness of that sentence.}

Distinguishing awareness of events vs.~sentences is crucial for assessing axioms concerning awareness. For example, awareness of events should not be monotonic with respect to the entailment relation $\leq$ in the Boolean algebra of events (which is just the subset relation $\subseteq$ when events are sets of possibilities); that is, we should not require that if $E\leq F$ ($E$~entails~$F$) and the agent is aware of $E$, then she must be aware of $F$. For example, an agent may be aware of the event $E$ expressed by `Ann and Bob will play a Nash equilibrium' without being aware of the event $F$ expressed by `Ann and Bob will play a correlated equilibrium', due to not having the concept of correlated equilibrium, despite the fact that $E\leq F$. Since $E\leq F$ is equivalent to $E = E\sqcap F$, where $\sqcap$ is the meet operation in the Boolean algebra (corresponding to intersection of sets), monotonicity is equivalent to the principle that if an agent is aware of $E\sqcap F$, then she is aware of $F$. A spurious argument for this principle is that ``if an agent is aware of a conjunction, then she must be aware of each conjunct.'' The argument is spurious because it implicitly assumes awareness of a \textit{sentence}. It is indeed plausible that if an agent is aware of a sentence $\varphi\wedge\psi$, where $\wedge$ is sentential conjunction, then the agent must be aware of $\psi$. But the event $E\sqcap F$ is not intrinsically a conjunction. We may have $E=E\sqcap F =C\sqcup D$, etc. Thus, in contrast with some other models of awareness in the literature (e.g., \citealt{Heifetz2006}, \citealt{Li2009}), our model will not validate the axiom that awareness of $E\sqcap F$ implies awareness of $F$.

Our definition of awareness of events is informally as follows. In a possibility $\omega$, an agent $i$ is aware of an event $E$ if the following condition holds at $\omega$ and its refinements: if $i$ is aware of a possibility $\nu$, then $i$ is aware of any coarsest refinement of $\nu$ belonging to $E$ and any coarsest refinement of $\nu$ belonging to $\neg E$. In other words, if you are aware of $E$, then you should be able to apply the $E$ vs.~$ \neg E$ distinction starting from any possibility of which you are aware. For example, if an agent is aware of the event $E$ expressed by `The Centers for Medicare and Medicaid Services are investigating the firm', then for any possibility $\nu$ of which the agent is aware, if $\nu$ does not already belong to $E$ or $\neg E$, then the agent should be aware of a coarsest further specification of $\nu$ belonging to $E$ and a coarsest further specification of $\nu$ belong to $\neg E$. This model of awareness  satisfies the \textit{symmetry axiom} of \citealt{Modica1994}, stating that an agent is aware of $E$ if and only if she is aware of $\neg E$ (though again without accepting the Modica-Rutschini definition of awareness in terms of knowledge), which helps distinguish being unaware of an event from assigning it zero probability (cf.~\citealt[p.~727]{Schipper2013}). Given symmetry, one may also think of the model as a model of awareness of \textit{distinctions} in the space of possibilities or of awareness of  \textit{binary questions}.

Besides symmetry, another consequence of our model is that each agent $i$ is aware of the trivial event $\Omega$ and the trivial distinction of $\Omega$ vs.~$\varnothing$. This does not imply that $i$ is aware of \textit{each possibility in} $\Omega$. Nor does it imply that $i$ is aware of each \textit{sentence} true throughout $\Omega$; e.g., it does not imply that $i$ is aware of the sentence `every Nash equilibrium is a correlated equilibrium'. This point is related to Stalnaker's \citeyearpar[pp.~85-6]{Stalnaker1984} defense of the fact that in standard state-space models of knowledge, such as Aumann structures, each agent \textit{knows} the trivial event $\Omega$: this does not imply that $i$ knows that the sentence `every Nash equilibrium is a correlated equilibrium' is true, since $i$ may fail to know the metalinguistic fact that that sentence is true throughout $\Omega$. Returning to awareness, the bridge principle proposed above implies that $i$ is aware of $\Omega$ if and only if $i$ is aware of some sentence $\varphi$ she understands such that $\llbracket \varphi\rrbracket=\Omega$, e.g., a sentence $\varphi$ such as `It is raining or it is not raining'. We therefore call the axiom that $i$ is aware of $\Omega$ the \textit{tautology axiom}.

A third consequence of our model is  the \textit{agglomeration axiom}: if $i$ is aware of events $E$ and $F$, then $i$ is also aware of $E\sqcap F$, or set theoretically, $E\cap F$. The corresponding assumption on sentential awareness is that if $i$ is aware of $\varphi$ and of $\psi$, then $i$ is aware of $\varphi\wedge\psi$. This assumes some logical sophistication on the part of $i$. Indeed, we assume that while our agents may have unawareness, they are logically perfect within the domain of their awareness. As Schipper \citeyearpar[p.~78]{Schipper2015} puts it, ``Despite such a lack of conception [i.e., unawareness], agents in economics are still assumed to be fully rational,'' in contrast to some models in computer science  of agents who are both unaware and logically imperfect (e.g., \citealt{Fagin1988}).

After handling awareness, we add knowledge and belief to our model, as in possibility semantics for modal logic. This part of our model is analogous to the standard treatment of knowledge and belief in Kripke frames and Aumann structures, allowing adjustments to deal with the partiality of possibilities. We call the resulting structures \textit{epistemic possibility frames}. We use these frames to show that an influential impossibility result concerning awareness due to Dekel, Lipman, and Rustichini  \citeyearpar{Dekel1998} becomes a possibility result (Fact \ref{PossibilityFact}) under a slight weakening of one of their axioms.

We can now informally state the main technical result about our model, which is a representation theorem. We define an \textit{epistemic awareness algebra} to be a Boolean algebra equipped with an awareness operator satisfying the axioms of symmetry, tautology, and agglomeration, plus knowledge and belief operators satisfying some minimal axioms. We then prove (Theorem \ref{FilterRep}) that any epistemic awareness algebra is representable as the algebra of events of an epistemic possibility frame with the awareness operator represented using the awareness correspondence $\mathcal{A} $ as sketched above and with knowledge and belief operators represented using knowledge and belief correspondences. This theorem shows that our model of awareness given by a partially ordered set equipped with regular open events and the correspondences for awareness, knowledge, and belief is highly versatile: it can represent any situation involving agents' event-based awareness, knowledge, and belief, provided some basic axioms are satisfied. 

There is now a large literature on unawareness in economics and computer science, as surveyed up to around 2014 in \citealt{Schipper2014,Schipper2015}. A non-exhaustive list of more recent contributions in economics includes \citealt{Grant2015}, \citealt{Quiggin2016}, \citealt{Karni2017},  \citealt{Galanis2016,Galanis2018}, \citealt{Piermont2017}, \citealt{Dietrich2018}, \citealt{Guarino2020}, \citealt{Fukuda2021},  \citealt{Schipper2021b}, \citealt{Dominiak2022}, and the special issue introduced by \citealt{Schipper2021}. Here we will briefly discuss  those works most relevant to the present paper.

\subsection{Related models}

The model of unawareness that is most often used in applications and is related in spirit to the model of this paper is that of Heifetz, Meier, and Schipper \citeyearpar{Heifetz2006} mentioned above, known as the HMS model (see \citealt{Heifetz2008}, \citealt{Halpern2008}, and \citealt{Belardinelli2020} on the relation between this model and syntactic, logical approaches).  As noted, the mathematical implementation of the idea of partial states in our model differs from that in the HMS model.  Other differences include the following: (i)~The algebra of events in our model forms a Boolean algebra, whereas the algebra of ``events'' in the HMS model is not even a lattice under their operations for conjunction and disjunction (see Remark \ref{AlgRemark});\footnote{\label{SubjectMatter}One way to make sense of this fact about the HMS model is to think that what HMS call ``events'' are not events in the ordinary sense in decision theory, or what philosophers call (coarse-grained) \textit{propositions}, which are determined entirely by their truth conditions. Instead, they are hyperintensional entities that have not only a truth-conditional component but also a non-truth-conditional component consisting of, e.g., subject matter, or ``expressive power'' (\citealt[p.~80]{Heifetz2006}), etc.; indeed, the poset of ``events'' in the HMS model, as described in Appendix~A of \citealt{Heifetz2008}, is isomorphic to a poset of  pairs $(P,S)$ where $P$, the truth-conditional part, is a subset of a maximally rich state space, and $S$, the part encoding subject matter or expressive power, is a possibly less rich state space, such that $P$ is the inverse image of a subset of $S$ with respect to the  projection of the maximally rich state space onto $S$ (the partial order is then given by $(P,S)\leq (P',S')$ if $P\subseteq P'$ and $S'\preceq S$, where $\preceq$ is HMS's complete lattice order of state spaces). Philosophers have called a pair of a proposition and a subject matter a \textit{directed proposition} (\citealt[p.~49]{Yablo2014}). For these entities, it is easy to see why a lattice axiom like $E=E\sqcap (E\sqcup F)$ (the absorption law) can fail, as it fails in the HMS model (see Remark \ref{AlgRemark}), because the directed proposition $F$ can introduce new subject matter beyond that of $E$, whereas for purely truth-conditional propositions or events, the axiom holds.} (ii)~We model awareness using an Aumann-style possibility correspondence, whereas HMS define awareness in terms of knowledge as in the Modica-Rustichini definition; and (iii)~Our model avoids the impossibility theorem of Dekel, Lipman, and Rustichini  \citeyearpar{Dekel1998} by falsifying a different axiom than the HMS model falsifies (see Appendix~\ref{DLR}). In light of points (i)--(ii), in a way our model constitutes a more ``conservative'' approach to modeling unawareness than the HMS model; we see how far we can go in modeling unawareness using  Boolean algebras of events and ideas from classical modal logic, for which so much theory has already been developed.

Another closely related model of unawareness is that of Fritz and Lederman \citeyearpar{Fritz2015}.  They provide a model of awareness of events that also validates the three axioms of symmetry, tautology, and agglomeration, while also rejecting the principle that awareness of $E\sqcap F$ implies awareness of $F$ (cf.~Theorem 3 in their paper) and rejecting an axiom of Dekel et al.~\citeyearpar{Dekel1998}  that we also reject (namely the Plausibility axiom, discussed in our Appendix \ref{PlausibilityAppendix}). They do so via a very different construction than ours,\footnote{\label{FritzLederman}In particular, they introduce structures $(\Omega, \approx  )$ where $\Omega$ is a nonempty set of states and $\approx $ assigns to each $\omega\in\Omega$ an equivalence relation $\approx_\omega$ on $\Omega$; then the agent is aware of $E\subseteq\Omega$ in state $\omega$, denoted $\omega\in a(E)$, if and only if for all $\rho$ and $\tau$ such that $\rho\approx_\omega \tau$, we have $\rho\in E$ if and only if $\tau\in E$. This approach and ours locate the complexity of modeling awareness in different places. Theirs is more complex in assigning to each state $\omega$ an equivalence relation $\approx_\omega$ on $\Omega$, whereas ours simply assigns a subset $\mathcal{A} (\omega)\subseteq\Omega$; but ours is more complex in working with a partially ordered set $(\Omega,\sqsubseteq)$, whereas theirs simply works with a set $\Omega$.} which we take to be further evidence of the naturality of the three axioms.  There is also a fundamental difference in representational power. On their approach, the family of all $E\subseteq\Omega$ of which an agent is aware at a state must be an \textit{atomic} algebra of sets.\footnote{\label{AtomNote}The atoms of this algebra are the equivalence classes of the relation $\approx_\omega$ from Footnote \ref{FritzLederman}. Recall that an \textit{atom} in a Boolean algebra $(B,\leq)$  (see Section \ref{Prelim} for the definition of Boolean algebras as special partially ordered sets) is an $a\in B$ such that $0<a$ and there is no $b$ with $0<b<a$. A Boolean algebra is \textit{atomic} if for each $b\in B$, there is an atom $a\leq b$. By contrast, it is \textit{atomless} if it has no atoms. An algebra $\mathcal{E}$ of sets is atomic (resp.~atomless) if it is atomic (resp.~atomless) when regarded as a Boolean algebra $(\mathcal{E},\subseteq)$.} By contrast, on our approach, even if our ambient set of events is an atomic Boolean algebra (which it need not be), there is no requirement that the Boolean subalgebra of events of which the agent is aware is atomic (see Example \ref{InfiniteEx}). Finally, Fritz and Lederman \citeyearpar[Appendix B]{Fritz2015}  prove the weak completeness,\footnote{As usual in logic (see, e.g., \citealt[p.~194]{Blackburn2001}), a \textit{weak completeness} theorem states that for every formula $\varphi$, if $\varphi$ is semantically valid, then $\varphi$ is syntactically provable; a \textit{strong completeness} theorem states that for every set $\Gamma$ of formulas and formula $\varphi$, if $\varphi$ is a semantic consequence of $\Gamma$, then $\varphi$ is syntactically provable from assumptions in $\Gamma$.} with respect to a semantics based on their structures, of a modal logic of awareness with axioms of symmetry, tautology, and agglomeration. Although we do not introduce logical syntax in this paper, our representation theorem for arbitrary epistemic awareness algebras (Theorem \ref{FilterRep}) rather immediately yields strong completeness theorems for the logic of awareness---and extensions thereof---but now with respect to a semantics based on our epistemic possibility frames.

Finally, as far as I know, the only other work besides the present paper that considers both awareness in economics and possibility semantics from modal logic is \citealt{Piermont2024}.  Piermont relates both to his concept of a \textit{relativized Boolean algebra}, which is an algebraic structure satisfying some of the laws of Boolean algebras but not $E\sqcup \neg E=1$.\footnote{A standard construction (see, e.g., \citealt[pp.~91-2]{Givant2009}) associates with a Boolean algebra $\mathbb{B}$ and element $a$ in $\mathbb{B}$ a new Boolean algebra $\mathbb{B}(a)$, the \textit{relativization of $\mathbb{B}$ to $a$}, whose bottom element and meet operation coincide with those of $\mathbb{B}$ but whose top element $1_a$ is $a$ and whose complement and join operations are defined by $\neg_a c = \neg c\sqcap a$ and $c\sqcup_a d=(c\sqcup d)\sqcap a$. Examples of relativized Boolean algebras can be obtained by lifting this construction to  $\mathbb{B}_\star=\{(a,b)\mid a,b\in\mathbb{B}, a\leq b\}$ as follows:  $\neg_\star (a,b)=(\neg_b a, b)$, $(a,b)\sqcap_\star (a',b') =(a\sqcap a', b\sqcap b')$, and $(a,b)\sqcup_\star (a',b')=(a\sqcup_{b\sqcap b'}a', b\sqcap b')$, $0_\star =(0,0)$, and $1_\star = (1,1)$.} By contrast, we work with the Boolean algebra of regular open sets canonically associated with a partially ordered set (Theorem \ref{Tarski2}); and we explain away intuitions that due to unawareness  $E\sqcup\neg E$ may fail to equal $\Omega$ by insisting on the distinction between events and sentences (see Example \ref{WatsonExample}) or other hyperintensional entities (see Footnote \ref{SubjectMatter}). Our classical approach makes available all standard results, e.g., from measure theory, that are applicable to Boolean algebras.

\subsection{Organization}\label{Organization}

The rest of the paper is organized as follows. In Section~\ref{Prelim}, we review mathematical preliminaries that underlie our model. In Section~\ref{ModelSection}, we introduce the model of awareness (\ref{AwarenessSection}) and then add knowledge and belief (\ref{KnowledgeSection}), using the model to formalize several examples. In Section~\ref{Representation}, we state our main representation theorem.  In Section~\ref{Conclusion}, we conclude with directions for future work, including a sketch of how to add awareness of sentences and probability to our framework. All substantial proofs are collected in Appendix \ref{Proofs}. 

Appendix~\ref{DLR} reviews the impossibility theorem of Dekel et al.~\citeyearpar{Dekel1998}, which threatens to preclude the development of a model of awareness of events as opposed to sentences. We argue that one of their axioms is too strong, and this is the axiom whose weakening yields a possibility result in Section \ref{KnowledgeSection}.

A Jupyter notebook containing code to verify conditions and compute awareness, knowledge, and belief in all examples in this paper is available at  \href{https://github.com/wesholliday/awareness}{https://github.com/wesholliday/awareness}.

\section{Preliminaries}\label{Prelim}

Standard representations of uncertainty begin with a nonempty set $\Omega$ of \textit{states of the world}, whose powerset we denote by  $\wp(\Omega)$. The set of \textit{events} is then some nonempty  collection $\mathcal{E}\subseteq \wp(\Omega)$ closed under at least finite intersection and complement relative to $\Omega$. Here we instead begin with a partially ordered set (poset) $(\Omega,\sqsubseteq)$. We think of elements of $\Omega$ as \textit{partial possibilities} and take $\omega\sqsubseteq \nu$ to mean that $\omega$ is a \textit{further specification} or \textit{refinement} of $\nu$. For example, the possibility $\nu$ may settle that \textit{Ann plays up} in a game but leave undetermined what Bob plays; then a refinement $\omega$ of $\nu$ may settle not only that \textit{Ann plays up} but also that \textit{Bob plays left}, while another refinement $\omega'$ of $\nu$ may settle that \textit{Ann plays up} and \textit{Bob plays right}. With this picture, not every subset of $\Omega$ is eligible to count as an event. We will delimit the eligible events~shortly.

Given a poset $(\Omega,\sqsubseteq)$, we define the \textit{downward closure} operation $\mathord{\downarrow}$ on $\wp(\Omega)$ by
\[\mathord{\downarrow}E=\{\omega\in \Omega\mid \mbox{for some }\nu\in E, \omega\sqsubseteq \nu\}.\]
A set $E\subseteq \Omega$ is a \textit{downset of $(\Omega,\sqsubseteq)$} if $E=\mathord{\downarrow}E$. For $\omega\in\Omega$, we write $\mathord{\downarrow}\omega$ for $\mathord{\downarrow}\{\omega\}$, called a \textit{principal downset}. Possibilities $\omega$ and $\nu$ are \textit{compatible}, denoted $\omega\between\nu$, if  $\mathord{\downarrow}\omega\cap\mathord{\downarrow}\nu\neq\varnothing$; otherwise they are \textit{incompatible}, denoted $\omega \, \bot \,\nu$. The \textit{downset topology} on $\Omega$ is the topology whose open sets are exactly the downsets of $(\Omega,\sqsubseteq)$. The interior and closure operations for this topology are given by
\begin{eqnarray*}\mathsf{int}(E)&=&\{\omega\in \Omega\mid \mbox{for all }\nu\sqsubseteq \omega,\, \nu\in E\}\\
\mathsf{cl}(E)&=&\{\omega\in \Omega\mid \mbox{for some }\nu\sqsubseteq \omega,\, \nu\in E\}.
\end{eqnarray*}
The  \textit{regularization} operation $\rho:\wp(\Omega)\to\wp(\Omega)$ is then defined by \[\rho(E)= \mathsf{int}(\mathsf{cl}(\mathord{\downarrow}E))=\{\omega\in \Omega\mid \forall \omega'\sqsubseteq \omega\;\,\exists \omega''\sqsubseteq \omega' \colon \omega''\in \mathord{\downarrow}E\}.\]
A set $E\subseteq \Omega$ is \textit{regular open} if $\rho(E)=E$. Let $\mathcal{RO}(\Omega,\sqsubseteq)$ be the collection of all regular open sets. Note that if $\sqsubseteq$ is the discrete partial order (i.e., the identity relation) on $\Omega$, then $\mathcal{RO}(\Omega,\sqsubseteq)$ is just $\wp(\Omega)$.

Regular open sets can be characterized by the following conditions. The proof is straightforward.

\begin{lemma}\label{ROLem} Given a poset $(\Omega,\sqsubseteq)$ and $E\subseteq \Omega$, we have $E\in\mathcal{RO}(\Omega,\sqsubseteq)$ if and only if for all $\omega,\omega'\in\Omega$:
\begin{enumerate}
\item \textit{persistence}: if $\omega\in E$ and $\omega'\sqsubseteq \omega$, then $\omega'\in E$;
\item \textit{refinability}: if $\omega\not\in E$, then $\exists \omega'\sqsubseteq \omega$ $\forall \omega''\sqsubseteq \omega'$ $\omega''\not\in E$.
\end{enumerate}
\end{lemma}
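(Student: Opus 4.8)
The plan is to prove both directions at once by unwinding the definition of $\rho$, reducing the single equation $\rho(E)=E$ to the conjunction of the two stated conditions. The first observation is that for \emph{any} $E\subseteq\Omega$, the set $\rho(E)=\mathsf{int}(\mathsf{cl}(\mathord{\downarrow}E))$ lies in the image of $\mathsf{int}$, and in the downset topology the interior of any set is a downset. Hence if $E=\rho(E)$ then $E$ is a downset, which is exactly persistence (condition 1); this settles the necessity of condition 1 without presupposing anything about $E$. Conversely, persistence says precisely that $E$ is a downset, i.e.\ $\mathord{\downarrow}E=E$, so $\rho(E)$ simplifies to $\mathsf{int}(\mathsf{cl}(E))$. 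Thus it suffices to show that, for a downset $E$, we have $\mathsf{int}(\mathsf{cl}(E))=E$ if and only if refinability (condition 2) holds.

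Next I would dispose of one inclusion for free. When $E$ is a downset we have $\mathsf{int}(E)=E$ and $E\subseteq\mathsf{cl}(E)$ (taking $\nu=\omega$ in the displayed formula for $\mathsf{cl}$), so monotonicity of $\mathsf{int}$ gives $E=\mathsf{int}(E)\subseteq\mathsf{int}(\mathsf{cl}(E))=\rho(E)$. Hence $\rho(E)=E$ is equivalent to the reverse inclusion $\rho(E)\subseteq E$. It remains to identify $\rho(E)\subseteq E$ with refinability, which I would do by contraposition. Expanding the definitions, $\omega\in\mathsf{int}(\mathsf{cl}(E))$ means that for all $\omega'\sqsubseteq\omega$ there is some $\omega''\sqsubseteq\omega'$ with $\omega''\in E$; negating, $\omega\notin\mathsf{int}(\mathsf{cl}(E))$ means there exists $\omega'\sqsubseteq\omega$ such that for all $\omega''\sqsubseteq\omega'$ we have $\omega''\notin E$. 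Therefore the containment $\rho(E)\subseteq E$---equivalently, $\omega\notin E\Rightarrow\omega\notin\mathsf{int}(\mathsf{cl}(E))$---unwinds verbatim to: if $\omega\notin E$, then $\exists\omega'\sqsubseteq\omega\;\forall\omega''\sqsubseteq\omega'\;\omega''\notin E$, which is exactly refinability. Combining the two reductions yields the biconditional.

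I do not expect any genuine obstacle here: the result is essentially a matter of quantifier bookkeeping once one notices that the explicit formula given for $\rho$ already exhibits the refinability pattern. The only points requiring care are (i)~justifying that persistence is \emph{necessary}, which relies on the general fact that interiors are open rather than on any prior assumption that $E$ is a downset, and (ii)~checking the free inclusion $E\subseteq\rho(E)$ so that the equation collapses to the single inclusion that refinability governs. Everything else is direct substitution of the displayed formulas for $\mathsf{int}$, $\mathsf{cl}$, and $\rho$.
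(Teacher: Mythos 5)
Your proof is correct, and it is exactly the ``straightforward'' definition-unwinding argument the paper has in mind (the paper omits the proof of this lemma entirely). The two points you flag as needing care --- deriving persistence from the fact that $\rho(E)$ is an interior and hence a downset, and isolating the free inclusion $E\subseteq\rho(E)$ so that only $\rho(E)\subseteq E$ remains to be matched with refinability --- are handled correctly, and the reduction of $\mathord{\downarrow}E$ to $E$ under persistence is done before the quantifier comparison, as it must be.
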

\noindent Only regular open sets will count as genuine \textit{events} in our model. Persistence states that if a possibility $\omega$ settles that an event holds, so does any refinement of $\omega$. Refinability states that if a possibility  $\omega$ does not settle that an event holds, then there is a refinement $\omega'$ of $\omega$ that settles that the event does \textit{not} hold, in the sense that no possible refinement of $\omega'$ settles that the event holds (cf.~the definition of $\neg$ in Theorem \ref{Tarski2}).

A poset is \textit{separative} if for any $\omega\in \Omega$, its principal downset $\mathord{\downarrow}\omega$ is a regular open set, which may be described as the following event: \textit{the possibility $\omega$ obtains}. One may assume without loss of generality in what follows that all posets are separative.\footnote{\label{Quotient}One can always pass to a quotient poset by identifying $\omega$ and $\omega'$ when $\rho(\{\omega\})=\rho(\{\omega'\})$ and defining the relation on equivalence classes by $[\omega]\sqsubseteq [\nu]$ if $\omega\in \rho(\{\nu\})$, resulting in a separative poset with an isomorphic algebra of regular open sets.} An example of a non-separative poset is a two-element linear order with $\omega\sqsubseteq \nu$; the principal downset $\{\omega\}$ does not satisfy refinability, since $\nu\not\in \{\omega\}$ and yet there is no refinement of $\nu$ all of whose refinements are not in $\{\omega\}$. Collapsing this two-element linear order to a linear order on the singleton $\{\omega\}$ results in a separative poset with an isomorphic algebra of regular open sets (as in Theorem~\ref{Tarski2}), consisting of the empty set and the whole set. Another way to get a separative poset from the two-element linear order---but in a way that results in a non-isomorphic algebra of regular opens---is to add a third element $\omega'$ with $\omega'\sqsubseteq \nu$, $\omega'\not\sqsubseteq \omega$, and $\omega\not\sqsubseteq \omega'$, so we obtain the three-element tree; now $\{\omega\}$ satisfies refinability, since $\nu$ is refined by $\omega'$, all of whose refinements (namely just $\omega'$ itself) are not in $\{\omega\}$.

 Given an arbitrary set $E$ of possibilities, $\rho(E)$ is the event that may be described as: \textit{one of the possibilities in $E$ obtains}. In the language of lattice theory, $\rho$ is not only a \textit{closure operator} (satisfying conditions 1-3 in Lemma~\ref{RhoLem}) on $\wp(\Omega)$ but also a \textit{nucleus} (satisfying 4 in addition to 1-3) on downsets. The proof is straightforward.

\begin{lemma}\label{RhoLem} For any poset $(\Omega,\sqsubseteq)$,  $\rho$ satisfies the following for all $E,F\subseteq \Omega$: 1. if $E\subseteq F$, then $\rho(E)\subseteq \rho(F)$;  2. $E\subseteq \rho(E)$; and 3. $\rho(\rho(E))=\rho(E)$. Moreover, if $E,F$ are downsets, then 4.~$\rho(E)\cap \rho(F)\subseteq \rho(E\cap F)$.
\end{lemma}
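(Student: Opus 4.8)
The plan is to treat conditions 1--3 as the statement that $\rho$ is a closure operator, which is essentially routine topology, and to isolate the density argument for condition 4 as the real content. Throughout I would freely switch between the two descriptions of $\rho$: the topological form $\rho(E)=\mathsf{int}(\mathsf{cl}(\mathord{\downarrow}E))$ and the explicit quantifier form given in the text, using whichever is more convenient for each clause. The key structural facts I would lean on are that each of $\mathord{\downarrow}$, $\mathsf{cl}$, and $\mathsf{int}$ is monotone, that $\mathord{\downarrow}E$ is a downset and hence open in the downset topology, and that $\rho(E)$, being an interior, is itself open.

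For condition 1 (monotonicity), it suffices to observe that $\rho$ is a composite of the monotone operations $\mathord{\downarrow}$, $\mathsf{cl}$, and $\mathsf{int}$, so $E\subseteq F$ yields $\rho(E)\subseteq\rho(F)$. For condition 2 (extensivity), I would use that $\mathord{\downarrow}E$ is open, so $\mathord{\downarrow}E=\mathsf{int}(\mathord{\downarrow}E)$, and chain $E\subseteq\mathord{\downarrow}E=\mathsf{int}(\mathord{\downarrow}E)\subseteq\mathsf{int}(\mathsf{cl}(\mathord{\downarrow}E))=\rho(E)$, where the final inclusion is monotonicity of $\mathsf{int}$ applied to $\mathord{\downarrow}E\subseteq\mathsf{cl}(\mathord{\downarrow}E)$.

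For condition 3 (idempotence), the inclusion $\rho(E)\subseteq\rho(\rho(E))$ is immediate from condition 2 applied to $\rho(E)$. For the reverse inclusion, since $\rho(E)$ is open we have $\mathord{\downarrow}\rho(E)=\rho(E)$, so $\rho(\rho(E))=\mathsf{int}(\mathsf{cl}(U))$ where $U=\mathsf{int}(\mathsf{cl}(\mathord{\downarrow}E))$. The claim then reduces to the standard Kuratowski identity $\mathsf{int}\,\mathsf{cl}\,\mathsf{int}\,\mathsf{cl}=\mathsf{int}\,\mathsf{cl}$, which I would dispatch in one line: $U\subseteq\mathsf{cl}(\mathord{\downarrow}E)$ and $\mathsf{cl}(\mathord{\downarrow}E)$ is closed, so $\mathsf{cl}(U)\subseteq\mathsf{cl}(\mathord{\downarrow}E)$, whence $\mathsf{int}(\mathsf{cl}(U))\subseteq\mathsf{int}(\mathsf{cl}(\mathord{\downarrow}E))=U=\rho(E)$.

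The remaining condition 4 is where the downset hypothesis does real work, and it is the step I expect to require the most care. Because $E$ and $F$ are downsets, we have $\mathord{\downarrow}E=E$, $\mathord{\downarrow}F=F$, and $\mathord{\downarrow}(E\cap F)=E\cap F$, so the explicit form simplifies: $\omega\in\rho(E)$ iff for every $\omega'\sqsubseteq\omega$ there is $\omega''\sqsubseteq\omega'$ with $\omega''\in E$, i.e.\ $E$ is \emph{dense below} $\omega$. Given $\omega\in\rho(E)\cap\rho(F)$, to show $\omega\in\rho(E\cap F)$ I would fix an arbitrary $\omega'\sqsubseteq\omega$ and chain the two density conditions in sequence: first use $\omega\in\rho(E)$ to obtain $\omega''\sqsubseteq\omega'$ with $\omega''\in E$; then, since $\omega''\sqsubseteq\omega$, use $\omega\in\rho(F)$ to obtain $\sigma\sqsubseteq\omega''$ with $\sigma\in F$. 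The crux is that $\sigma\sqsubseteq\omega''\in E$ together with $E$ being a downset forces $\sigma\in E$, so $\sigma\in E\cap F$ with $\sigma\sqsubseteq\omega'$, as required. The downset hypothesis is thus used twice: once to pass from $\mathord{\downarrow}E$ to $E$ in the density reformulation, and once to conclude $\sigma\in E$ from $\sigma\sqsubseteq\omega''$. (The reverse inclusion $\rho(E\cap F)\subseteq\rho(E)\cap\rho(F)$ comes free from condition 1, so equality holds for downsets, but only the stated inclusion is needed.)
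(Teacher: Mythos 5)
Your proof is correct. The paper omits the argument entirely (it states only that ``the proof is straightforward''), and what you give is exactly the expected standard argument: conditions 1--3 via monotonicity of $\mathord{\downarrow}$, $\mathsf{cl}$, $\mathsf{int}$ together with the Kuratowski identity $\mathsf{int}\,\mathsf{cl}\,\mathsf{int}\,\mathsf{cl}=\mathsf{int}\,\mathsf{cl}$, and condition 4 via the density-below reformulation, chaining the two density conditions and using the downset hypothesis to keep the witness inside $E$.
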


Next we characterize the poset $(\mathcal{RO}(\Omega,\sqsubseteq),\subseteq)$. Recall that a poset ${(L,\leq)}$ is a \textit{lattice} (resp.~\textit{complete lattice}) if every two-element subset $\{a,b\}\subseteq L$ (resp.~every subset $\{a_i \}_{i\in I} \subseteq L$) has a least upper bound with respect to $\leq$, called the \textit{join} and denoted $a\sqcup b$ (resp.~$\underset{i\in I}{\bigsqcup} a_i $) and greatest lower bound with respect to $\leq$, called the \textit{meet} and denoted  $a\sqcap b$ (resp.~$\underset{i\in I}{\bigsqcap} a_i $). A lattice is \textit{distributive} if for all $a,b,c\in L$, we have $a\sqcap (b\sqcup c)= (a\sqcap b)\sqcup (a\sqcap c)$. A lattice is \textit{bounded} if it has a greatest element with respect to $\leq$, denoted $1$, and a least element with respect to $\leq$, denoted $0$.\footnote{Every complete lattice is bounded, since the least upper bound of $\varnothing$ is 0 and the greatest lower bound of $\varnothing$ is 1.}  A bounded lattice is \textit{complemented} if for every $a\in L$, there is an  $\neg a\in L$, called a \textit{complement of $a$}, such that $a\sqcup\neg a=1$ and $a\sqcap\neg a =0$ (the complement of $a$ is unique if the lattice is distributive).  A \textit{Boolean algebra} is a complemented distributive lattice $\mathbb{B}=(B,\leq)$. We abuse notation and write $a\in \mathbb{B}$ for $a\in B$.

It is a classic result in lattice theory that the fixpoints of any closure operator on $\wp(\Omega)$, ordered by $\subseteq$, form a complete lattice with meet as intersection and join as closure of union (\citealt[Thm.~5.2]{Burris1981}); and as Tarski \citeyearpar{Tarski1937} observed, the fixpoints of the regularization operation $\rho$ form a complete Boolean algebra.  For modern proofs of the following, see, e.g., \citealt[Thms.~1.30, 1.40]{Takeuti1973}.

\begin{theorem}\label{Tarski2} For any poset $(\Omega,\sqsubseteq)$, the poset $(\mathcal{RO}(\Omega,\sqsubseteq),\subseteq)$ is a complete Boolean algebra, called the \textit{regular open algebra of $(\Omega,\sqsubseteq)$}, in which the Boolean complement, meet, and join are given by:
\begin{eqnarray}\neg E&=&\mathsf{int}(\Omega\setminus E)=\{\omega\in\Omega\mid \forall \omega'\sqsubseteq \omega\;\, \omega'\not\in E\}\label{NegDef}\\
\underset{i\in I}{\bigsqcap}E_i &=&\underset{i\in I}{\bigcap}E_i \\
\underset{i\in I}{\bigsqcup}E_i &=&\rho\Big(\underset{i\in I}{\bigcup}E_i \Big)=\{\omega\in\Omega\mid \forall \omega'\sqsubseteq \omega\;\,\exists \omega''\sqsubseteq \omega'\;\,\exists i\in I\colon \omega''\in E_i \}.\label{JoinDef}
\end{eqnarray}
Conversely, each complete Boolean algebra $(B,\leq)$ is isomorphic to $\mathcal{RO}(B_+,\leq_+)$, where $B_+$ is the set of nonzero elements of the algebra and $\leq_+$ is $\leq$ restricted to $B_+$, via the map $b\mapsto \{a\in B_+\mid a\leq b\}$.
\end{theorem}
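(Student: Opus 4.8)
The plan is to prove the two directions separately: in the forward direction I lean on the fact (Lemma~\ref{RhoLem}) that $\rho$ is a closure operator and a nucleus, and in the converse direction on the infinite distributive law available in any complete Boolean algebra.

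For the forward direction, I would first invoke the classical result that the fixpoints of the closure operator $\rho$, ordered by $\subseteq$, form a complete lattice. Since an arbitrary intersection $\bigcap_{i\in I} E_i$ of $\rho$-fixpoints satisfies $\rho(\bigcap_{i} E_i)\subseteq \rho(E_j)=E_j$ for each $j$ and hence $\rho(\bigcap_{i}E_i)\subseteq\bigcap_i E_i$, intersections of regular open sets are regular open and furnish the meet; the join is then forced to be the least fixpoint above every $E_i$, namely $\rho(\bigcup_i E_i)$. This yields the displayed formulas for $\bigsqcap$ and $\bigsqcup$. The substantive work is to show that this complete lattice is \emph{Boolean} with complement $\neg E=\mathsf{int}(\Omega\setminus E)$. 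Here I would observe that the downsets of $(\Omega,\sqsubseteq)$ form a complete Heyting algebra under $\subseteq$ whose pseudocomplement is exactly $E\mapsto\mathsf{int}(\Omega\setminus E)$, that $\rho$ coincides with the double-negation nucleus on downsets, and then appeal to the standard fact that the fixpoints of a double-negation nucleus form a Boolean algebra with complement given by the pseudocomplement; this simultaneously delivers distributivity and identifies the complement with the displayed formula. Alternatively the complement laws can be checked by hand: $E\sqcap\neg E=E\cap\mathsf{int}(\Omega\setminus E)=\varnothing$ is immediate, while $E\sqcup\neg E=\rho(E\cup\neg E)=\Omega$ follows from refinability (Lemma~\ref{ROLem}), since for any $\omega'\sqsubseteq\omega$ either $\omega'\in E$ or refinability produces some $\omega''\sqsubseteq\omega'$ lying in $\mathsf{int}(\Omega\setminus E)=\neg E$.

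For the converse, I would work with the candidate map $b\mapsto\hat{b}:=\{a\in B_+\mid a\leq b\}$ and verify in turn that it is well defined, injective, and an order isomorphism onto $\mathcal{RO}(B_+,\leq_+)$. Well-definedness reduces by Lemma~\ref{ROLem} to persistence (immediate) and refinability of $\hat b$: if $a\in B_+$ and $a\not\leq b$, then $a\sqcap\neg b\neq 0$, and $a':=a\sqcap\neg b$ witnesses refinability, since every nonzero $a''\leq a'\leq\neg b$ satisfies $a''\sqcap b=0$ and hence $a''\not\leq b$. Injectivity together with order-reflection both follow from the observation that $b\in\hat b$ whenever $b\neq 0$, so that $\hat b\subseteq\hat c$ forces $b\leq c$; and an order isomorphism between Boolean algebras automatically preserves all the Boolean structure.

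The main obstacle is surjectivity: given a regular open $S\subseteq B_+$ I must produce a $b$ with $\hat b=S$, and for the natural choice $b=\bigsqcup S$ (available since $B$ is complete) I must prove $\widehat{\bigsqcup S}=S$. The inclusion $S\subseteq\widehat{\bigsqcup S}$ is trivial. For the reverse, the key sublemma is that if $a\in B_+$ has no nonzero lower bound in $S$, then $a\sqcap\bigsqcup S=0$: persistence of $S$ forces $a\sqcap s=0$ for every $s\in S$ (otherwise $a\sqcap s$ would be a nonzero element of $S$ below $a$), and then the infinite distributive law of the complete Boolean algebra $B$ gives $a\sqcap\bigsqcup S=\bigsqcup_{s\in S}(a\sqcap s)=0$. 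Now given $a\in B_+$ with $a\leq\bigsqcup S$ but $a\notin S$, refinability supplies some $a'\leq a$ with no nonzero lower bound in $S$; the sublemma then yields $a'=a'\sqcap\bigsqcup S=0$, contradicting $a'\in B_+$. Hence $\widehat{\bigsqcup S}=S$, and the bijection $b\mapsto\hat b$ is an isomorphism of Boolean algebras. I expect this appeal to infinite distributivity in $B$ to be the crux of the argument, since it is precisely what ties the purely order-theoretic join $\bigsqcup S$ back to the regular open structure of $S$.
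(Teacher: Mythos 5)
The paper does not actually prove this theorem: it is stated as a classical result of Tarski, with the reader referred to the literature (e.g., Takeuti--Zaring, Thms.~1.30, 1.40) for modern proofs. So there is no in-paper argument to compare against; judged on its own, your proof is correct and is essentially the standard one. The forward direction via the closure-operator lattice plus the identification of $\rho$ on downsets with the double-negation nucleus of the Heyting algebra of open sets is a clean and correct way to get completeness, distributivity, and the formula $\neg E=\mathsf{int}(\Omega\setminus E)$ all at once; the only caveat is that your ``alternative'' hand-check of the complement laws does not by itself establish distributivity (which needs the nucleus property, item 4 of Lemma~\ref{RhoLem}), so the nucleus route should be regarded as the actual proof and the hand-check as a sanity check. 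The converse direction is also right: persistence and refinability of $\widehat{b}$ are verified correctly, the order-embedding argument via $b\in\widehat{b}$ for $b\neq 0$ is fine, and your surjectivity argument---reducing to the sublemma that an element with no nonzero lower bound in a regular open $S$ is disjoint from $\bigsqcup S$, via the infinite distributive law $a\sqcap\bigsqcup S=\bigsqcup_{s\in S}(a\sqcap s)$ valid in any complete Boolean algebra---is exactly the crux and is handled correctly, with refinability supplying the element $a'$ that yields the contradiction. This matches the standard textbook treatment the paper cites.
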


\begin{remark}\label{AlgRemark} In the regular open algebra of a poset, we always have $ \underset{i\in I}{\bigcup}E_i \subseteq \underset{i\in I}{\bigsqcup}E_i $ and often $ \underset{i\in I}{\bigcup}E_i \subsetneq \underset{i\in I}{\bigsqcup}E_i $. This represents a deep difference between our approach and that  of Heifetz et al.~\citeyearpar{Heifetz2006}, who define their disjunction of events in such a way that often $E_j \not\leq \underset{i\in I}{\bigsqcup}E_i $ for $j\in I$, where $\leq$ is their partial order on events (see Appendix~\ref{DLR}). Since their $E\sqcup F$ is not necessarily an upper bound of $\{E,F\}$ with respect to $\leq$, their algebra of events with $\sqcup$ and $\sqcap$ is not even a lattice,\footnote{In equational terms, it violates the absorption law of lattices that $E\sqcap (E\sqcup F) = E$ (see Footnote \ref{SubjectMatter} for discussion).} let alone a Boolean algebra. In addition, their negation operation $\neg$ is non-classical, as it violates the equivalence of $E\leq F$ and $\neg F\leq\neg E$ from Boolean algebras.
\end{remark}

In applications to reasoning under uncertainty (e.g., involving probability) a Boolean algebra of events is often not complete (though it may be countably complete), so we do not want to restrict attention to only representing complete Boolean algebras of events. To represent arbitrary Boolean algebras, we can equip a poset $(\Omega,\sqsubseteq)$ with a distinguished Boolean subalgebra  of $\mathcal{RO}(\Omega,\sqsubseteq)$.

\begin{definition}\label{PossFrames} A \textit{possibility frame} is a triple $(\Omega,\sqsubseteq,\mathcal{E})$ where $(\Omega,\sqsubseteq)$ is a  poset and $\mathcal{E}$ is a nonempty subset of $\mathcal{RO}(\Omega,\sqsubseteq)$ closed under binary intersection and the operation $\neg$ from~(\ref{NegDef}).
\end{definition}

Compare the notion of a possibility frame to the more familiar notion of a \textit{field of sets}, a pair $(\Omega,\mathcal{E})$ where $\Omega$ is a nonempty set and $\mathcal{E}$ is an algebra of subsets of $\Omega$ as at the beginning of this section. It is a classic result of Stone \citeyearpar{Stone1936} that for each Boolean algebra $\mathbb{B}$, there is a field of sets $(\Omega,\mathcal{E})$ such that $\mathbb{B}$ is isomorphic to $(\mathcal{E},\subseteq)$. Since every field of sets may be regarded as a possibility frame $(\Omega,\sqsubseteq,\mathcal{E})$ in which $\sqsubseteq$ is the discrete partial order, Stone's theorem immediately implies that each Boolean algebra is representable by a possibility frame. But we will need non-discrete partial orders to model unawareness using possibility frames. As a consequence of Theorem \ref{FilterRep}, we will obtain for each Boolean algebra $\mathbb{B}$ a possibility frame $(\Omega,\sqsubseteq,\mathcal{E})$ with a non-discrete partial order such that $\mathbb{B}$ is isomorphic to~$(\mathcal{E},\subseteq)$.

 Given a poset $(\Omega,\sqsubseteq)$ and $E\in \mathcal{RO}(\Omega,\sqsubseteq)$, we denote the set of maximal elements of $E$ by
\[\mathrm{max}(E)=\{\omega\in E\mid \mbox{there is no }\nu\in E\colon \omega\sqsubset \nu\},\]
where $\omega\sqsubset \nu$ means that $\omega\sqsubseteq \nu$ and $\nu \not\sqsubseteq \omega$. Intuitively, $\mathrm{max}(E)$ contains the \textit{coarsest} or \textit{least refined} possibilities that settle that $E$ holds. It is a natural thought that for any nonempty event $E$, there should be a \textit{unique} coarsest possibility belonging to $E$, describable as ``the possibility that $E$ holds''; this is indeed the case for the possibility frame $(B_+,\leq_+,\mathcal{RO}(\Omega,\sqsubseteq))$ used in Theorem \ref{Tarski2}, and in fact we shall see that every Boolean algebra (not only complete ones) can be represented by a possibility frame satisfying this condition (Theorem \ref{FilterRep}). However, since in applications we typically wish to draw as few possibilities as possible to model a given situation,\footnote{Cf.~Examples \ref{GameEx} and \ref{OverconfidentEx}. In the former, there is no coarsest possibility in the event $\underline{Middle}$ (resp.~$\underline{Up}$, $\underline{Down}$), and in the latter, there is no coarsest possibility in the event $Fraud$.} we impose only the following less demanding condition.

\begin{definition}\label{QP} A possibility frame $(\Omega,\sqsubseteq,\mathcal{E})$ is \textit{quasi-principal} if for any $E\in \mathcal{E}$ and $\omega\in E$, we have $\omega\in \mathord{\downarrow}\mathrm{max}(E)$.
\end{definition}
\noindent In other words, any possibility that settles that $E$ holds is a refinement of some coarsest possibility that settles that $E$ holds. Of course any possibility frame with $\Omega$ finite satisfies this condition.

 \section{Model}\label{ModelSection}
 
 In this section, we introduce our model in two stages: first concentrating on awareness in Section \ref{AwarenessSection} and then adding knowledge and belief alongside awareness in Section \ref{KnowledgeSection}.
 
 Just as the basic datum in Aumann's \citeyearpar{Aumann1999} model of knowledge is a correspondence $\mathcal{K} :\Omega \to \wp(\Omega)$, the basic datum in our model of awareness is a correspondence $\mathcal{A} :\Omega\to\wp(\Omega)$. Intuitively, $\omega'\in \mathcal{K} (\omega)$ means that in state $\omega$, the agent's knowledge does not rule out state $\omega'$, i.e., everything the agent knows in $\omega$ is true in $\omega'$. A standard gloss on $\omega'\in\mathcal{K}(\omega)$ is that ``in $\omega$, the agent considers $\omega'$ possible,'' but this is not a good gloss; for the agent might be totally unaware of $\omega'$, while at the same time the agent's knowledge does not rule out $\omega'$, so $\omega'\in \mathcal{K} (\omega)$.\footnote{There is another reason that ``the agent considers $\omega'$ possible'' is not a good gloss on $\omega'\in \mathcal{K}(\omega)$, which is independent of unawareness. For example, in a game, Ann might be fully confident that Bob does not hold an ace, whereas Bob does in fact hold an ace. Then Ann does not ``consider it possible'' that Bob holds an ace, and yet Ann's knowledge does not rule out that Bob holds an ace, because Ann's knowledge cannot rule out the actual state of the world.} Turning to $\mathcal{A}$, the intuitive interpretation of $\omega'\in\mathcal{A} (\omega)$ is that in $\omega$, the agent is aware of $\omega'$ as a logical possibility. This does not mean that the agent ``considers $\omega'$ possible'' in the sense of thinking that $\omega'$ might  actually obtain, for the agent may be fully convinced that it does not obtain; but the agent can at least entertain $\omega'$. For example, I can entertain possibilities in which, e.g., fusion power is the leading source of global energy, even though such possibilities are not consistent with my current knowledge. Presumably some people, e.g., small children, cannot even presently entertain such possibilities.
 
Mathematically, it is important to recall what makes modeling knowledge with a correspondence possible. Suppose we begin with a knowledge operator $\mathbf{K} : \wp(\Omega)\to\wp(\Omega)$; in fact, it is convenient to start with the dual operator $\widehat{\mathbf{K}}: \wp(\Omega)\to\wp(\Omega)$ such that $\widehat{\mathbf{K}}(E)=\neg \mathbf{K} \neg (E)$, so $\omega\in  \widehat{\mathbf{K}} (E)$ means that $E$ is consistent with the agent's knowledge in $\omega$. The key idea of Aumann's model, like those of Hintikka \citeyearpar{Hintikka2005} and Kripke \citeyearpar{Kripke1963}, is to reduce $\widehat{\mathbf{K}} $ to its behavior on singleton events $\{\nu\}$, in the sense that we want:
\begin{equation}\omega\in \widehat{\mathbf{K}}(E)\mbox{ if and only if for some }\nu\in E: \omega\in \widehat{\mathbf{K}}(\{\nu\}).\label{ReduceToSingletons}\end{equation}
If this holds, and only if this holds,  we can represent the operator $\widehat{\mathbf{K}}:  \wp(\Omega)\to\wp(\Omega)$ using a simpler correspondence $\mathcal{K} :\Omega \to \wp(\Omega)$ defined by 
\begin{equation}\mbox{$\nu\in \mathcal{K} (\omega)$ if and only if $\omega\in \widehat{\mathbf{K}} (\{\nu\})$},\label{InduceKnowledgeCorr}\end{equation}
where the representation of $\widehat{\mathbf{K}} $ has the form: 
\begin{equation}\omega\in \widehat{\mathbf{K}}(E)\mbox{ if and only if for some }\nu\in E: \nu\in\mathcal{K} (\omega).\label{ReduceToCorr}\end{equation}
This is analogous to reducing a probability measure on $\wp(\Omega)$ to its values on singleton events, which is always possible in the finite case and is also possible in the countably infinite case assuming the probability measure is countably additive. Similarly, the reducibility of $\widehat{\mathbf{K}} $ to its behavior on singleton events as in (\ref{ReduceToSingletons})---and hence the representability of $\widehat{\mathbf{K}} $ using a  correspondence as in (\ref{ReduceToCorr})---is equivalent to $\widehat{\mathbf{K}} $ being \textit{completely additive}, in the sense that for any family of events $\{E_j\}_{j\in J}\subseteq \wp(\Omega)$:
\[\widehat{\mathbf{K}} (\underset{j\in J}{\bigcup} E_j)=\underset{j\in J}{\bigcup} \widehat{\mathbf{K}} (E_j).\] 

Our approach to awareness is analogous to Aumann's approach to knowledge: we will reduce the behavior of an awareness operator $\mathbf{A} :\wp(\Omega)\to\wp(\Omega)$ to its behavior on special events, but since we work with a partially ordered set rather than a set, these special events will be \textit{principal downsets} $\mathord{\downarrow}\omega$ rather than singletons $\{\omega\}$. Thus, our awareness correspondence $\mathcal{A} :\Omega\to\wp(\Omega)$ will be such that 
\begin{equation}\mbox{$\nu\in\mathcal{A} (\omega)$ if and only if $\omega\in \mathbf{A} (\mathord{\downarrow}\nu)$.}\label{InduceAwareCorr}\end{equation}
Unlike Aumann's representation of knowledge, however, the representation of $\mathbf{A} $ using $\mathcal{A} $ will not have a form like (\ref{ReduceToCorr}) above. It will have a different form, given in Definition \ref{PossFrameAware}.\ref{PossFrameAware3} below, that reflects the distinction between an event being consistent with an agent's knowledge and the agent being aware of the event.

 \subsection{Awareness}\label{AwarenessSection}
 
As discussed above, to model awareness we equip possibility frames as in Definition \ref{PossFrames} with an awareness correspondence $\mathcal{A} $. When modeling multiple agents, we can simply introduce a correspondence $\mathcal{A}_i$ for each agent~$i$, but for simplicity here we present everything in the single-agent setting. When  $\nu \in \mathcal{A} (\omega)$, we say that in possibility $\omega$, the agent is aware of possibility $\nu$.

\begin{definition}\label{PossFrameAware} A \textit{possibility frame with awareness} is a tuple $\mathscr{F}=(\Omega,\sqsubseteq, \mathcal{E} , \mathcal{A})$  such that:
\begin{enumerate}
\item\label{PossFrameAware1} $(\Omega,\sqsubseteq,\mathcal{E})$ is a quasi-principal possibility frame with a maximum element $\maximum $ in the poset $(\Omega,\sqsubseteq)$;
\item\label{AwarenessConditions} $\mathcal{A} :\Omega\to\wp(\Omega)$ is a correspondence satisfying the following  conditions for all $\omega,\omega',\nu\in\Omega$:
\begin{enumerate}
\item awareness nonvacuity: $\maximum \in \mathcal{A} (\omega)$;
\item\label{PossFrameAware2b} awareness expressibility: if $\nu \in \mathcal{A} (\omega)$, then $\mathord{\downarrow}\nu\in \mathcal{E}$;
\item awareness persistence: if $\omega'\sqsubseteq \omega$, then $\mathcal{A} (\omega)\subseteq \mathcal{A} (\omega')$;
\item awareness refinability: if $\nu \not\in \mathcal{A} (\omega)$, then $\exists \omega'\sqsubseteq \omega$ $\forall \omega''\sqsubseteq \omega'$  $\nu \not\in \mathcal{A} (\omega'')$;
\item awareness joinability: if $\nu \in \mathcal{A} (\omega)$, $E,E'\in\mathcal{E}$, and $\mathrm{max}(E\cap \mathord{\downarrow}\nu)\cup \mathrm{max}(E'\cap \mathord{\downarrow}\nu)\subseteq \mathcal{A} (\omega)$, then $\mathrm{max}((E\sqcup E')\cap\mathord{\downarrow}\nu)\subseteq \mathcal{A} (\omega)$.
\end{enumerate}
\item\label{PossFrameAware3} $\mathcal{E}$ is closed under the operation $E\mapsto \mathbf{A}  (E)$ defined by
\begin{itemize}
\item\label{EventClosure} $\omega\in \mathbf{A} (E)$ if and only if $\forall \omega'\sqsubseteq \omega$ $\forall \nu\in \mathcal{A} (\omega')$ $\mathrm{max}(E\cap\mathord{\downarrow}\nu)\cup \mathrm{max}(\neg E\cap\mathord{\downarrow}\nu)\subseteq \mathcal{A} (\omega')$.
\end{itemize}

\end{enumerate}
\noindent Finally, we call $\mathscr{F}$ \textit{standard} if for all $\omega,\nu\in\Omega$, $\nu\in\mathcal{A} (\omega)$ implies $\omega\in\mathbf{A} (\mathord{\downarrow}\nu)$.\end{definition}

The interpretations of the conditions on $\mathcal{A} $ are as follows. Awareness nonvacuity says that each agent $i$ is aware of at least the coarsest possibility of all. Awareness expressibility says that if $i$ is aware of a possibility, then the principal downset generated by that possibility is a genuine event, eligible to be thought about.\footnote{Recall from Section \ref{Prelim} that in a separative poset, every principal downset is a regular open set in the downset topology.} Awareness persistence says that if $\omega$ settles that $i$ is aware of a possibility $\nu $, and $\omega'$ refines $\omega$, then $\omega'$ still settles that $i$ is aware of $\nu $. Awareness refinability says that if $\omega$ does not settle that $i$ is aware of $\nu $, then there is a refinement $\omega'$ of $\omega$ that settles that $i$ is definitely \textit{not} aware of $\nu $, so no refinement $\omega''$ of $\omega'$ settles that $i$ is aware of $\nu $. Finally, awareness joinability says that if $i$ is aware of $\nu $ and of the coarsest refinements of $\nu$ belonging to the event $E$, and similarly for $E'$, then $i$ must be aware of the coarsest refinements of $\nu $ belonging to the event  \textit{$E$ or $E'$}. There is a convenient equivalent condition if $\Omega$ is finite, which quantifies over possibilities rather than arbitrary events:  if $i$ is aware of $\nu $ and some refinements $\nu_1,\dots,\nu_n$ of $\nu $, then $i$ must be aware of the coarsest refinements of $\nu $ belonging to the event that \textit{one of the $\nu_j $'s obtains}.

\begin{restatable}{lemma}{FiniteJoin} Suppose $\mathscr{F}=(\Omega,\sqsubseteq, \mathcal{E} , \mathcal{A})$ satisfies part \ref{PossFrameAware1} of Definition \ref{PossFrameAware} and awareness expressibility. If (i) $\mathscr{F}$ satisfies awareness joinability, then (ii) for all $\nu \in \mathcal{A} (\omega)$ and $\nu_1,\dots,\nu_n\in \mathcal{A} (\omega)\cap \mathord{\downarrow}\nu$, we have $\mathrm{max}((\mathord{\downarrow}\nu_1\sqcup\dots\sqcup\mathord{\downarrow}\nu_n)\cap\mathord{\downarrow}\nu)\subseteq \mathcal{A} (\omega)$. Conversely, if $\mathrm{max}(E)$ is finite for each $E\in\mathcal{E}$, then (ii) implies (i).
\end{restatable}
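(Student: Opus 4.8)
The plan is to prove the two implications separately. For (i) $\Rightarrow$ (ii) I would argue by induction on $n$; for the converse I would reduce binary joinability to the $n$-ary statement (ii) by expressing $E\cap\mathord{\downarrow}\nu$ and $E'\cap\mathord{\downarrow}\nu$ as joins of the principal downsets of their maximal elements. Throughout I would use that $\mathcal{E}$, being closed under $\cap$ and $\neg$, is a Boolean subalgebra of $\mathcal{RO}(\Omega,\sqsubseteq)$ and hence closed under finite joins $\sqcup$, and that awareness expressibility puts $\mathord{\downarrow}\nu\in\mathcal{E}$ whenever $\nu\in\mathcal{A}(\omega)$.

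For (i) $\Rightarrow$ (ii), the base case $n=1$ is immediate: since $\nu_1\sqsubseteq\nu$ we have $\mathord{\downarrow}\nu_1\cap\mathord{\downarrow}\nu=\mathord{\downarrow}\nu_1$, whose unique maximal element is $\nu_1\in\mathcal{A}(\omega)$. For the inductive step, write $G_n=\mathord{\downarrow}\nu_1\sqcup\dots\sqcup\mathord{\downarrow}\nu_n$; each $\mathord{\downarrow}\nu_j\in\mathcal{E}$ by expressibility, so $G_n\in\mathcal{E}$ by closure under joins. The induction hypothesis gives $\mathrm{max}(G_n\cap\mathord{\downarrow}\nu)\subseteq\mathcal{A}(\omega)$, and the base case gives $\mathrm{max}(\mathord{\downarrow}\nu_{n+1}\cap\mathord{\downarrow}\nu)=\{\nu_{n+1}\}\subseteq\mathcal{A}(\omega)$. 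Applying awareness joinability with $E=G_n$ and $E'=\mathord{\downarrow}\nu_{n+1}$ then yields $\mathrm{max}((G_n\sqcup\mathord{\downarrow}\nu_{n+1})\cap\mathord{\downarrow}\nu)\subseteq\mathcal{A}(\omega)$, which is the claim for $n+1$ since $G_n\sqcup\mathord{\downarrow}\nu_{n+1}=\mathord{\downarrow}\nu_1\sqcup\dots\sqcup\mathord{\downarrow}\nu_{n+1}$.

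For (ii) $\Rightarrow$ (i), assume $\mathrm{max}(F)$ is finite for each $F\in\mathcal{E}$, and suppose the hypotheses of joinability hold for $\nu,E,E'$. Set $F=E\cap\mathord{\downarrow}\nu$ and $F'=E'\cap\mathord{\downarrow}\nu$; both lie in $\mathcal{E}$ (closure under $\cap$, with $\mathord{\downarrow}\nu\in\mathcal{E}$ by expressibility) and so have finite max-sets, say $\mathrm{max}(F)=\{\mu_1,\dots,\mu_k\}$ and $\mathrm{max}(F')=\{\mu'_1,\dots,\mu'_l\}$, all contained in $\mathcal{A}(\omega)\cap\mathord{\downarrow}\nu$ by hypothesis. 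The key step is to observe that, by quasi-principality, $F=\mathord{\downarrow}\mathrm{max}(F)=\bigcup_j\mathord{\downarrow}\mu_j$; since $F$ is regular open and $\rho$ fixes regular open sets, this set union coincides with the Boolean join, so $F=\mathord{\downarrow}\mu_1\sqcup\dots\sqcup\mathord{\downarrow}\mu_k$, and likewise for $F'$. Using distributivity of $\cap$ over $\sqcup$, we get $(E\sqcup E')\cap\mathord{\downarrow}\nu=F\sqcup F'=\mathord{\downarrow}\mu_1\sqcup\dots\sqcup\mathord{\downarrow}\mu_k\sqcup\mathord{\downarrow}\mu'_1\sqcup\dots\sqcup\mathord{\downarrow}\mu'_l$. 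Feeding the finite list $\mu_1,\dots,\mu_k,\mu'_1,\dots,\mu'_l$ into (ii) yields $\mathrm{max}((\mathord{\downarrow}\mu_1\sqcup\dots\sqcup\mathord{\downarrow}\mu'_l)\cap\mathord{\downarrow}\nu)\subseteq\mathcal{A}(\omega)$, which equals $\mathrm{max}((E\sqcup E')\cap\mathord{\downarrow}\nu)$, exactly the desired conclusion.

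I expect the main obstacle to be the bookkeeping identifying a set-theoretic union of principal downsets with the Boolean join, i.e.\ verifying $\bigcup_j\mathord{\downarrow}\mu_j=\bigsqcup_j\mathord{\downarrow}\mu_j$ via regular openness and the nucleus property of $\rho$ (Lemma \ref{RhoLem}), together with the distributivity manipulation $(E\sqcup E')\cap\mathord{\downarrow}\nu=(E\cap\mathord{\downarrow}\nu)\sqcup(E'\cap\mathord{\downarrow}\nu)$; the finiteness hypothesis enters precisely to keep the list fed into (ii) finite.
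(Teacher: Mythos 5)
Your proof is correct and follows essentially the same strategy as the paper's: induction for (i)$\Rightarrow$(ii) with joinability applied to $G_n$ and $\mathord{\downarrow}\nu_{n+1}$, and, for the converse, feeding the finitely many maximal elements of $E\cap\mathord{\downarrow}\nu$ and $E'\cap\mathord{\downarrow}\nu$ into (ii) and identifying $(\mathord{\downarrow}\mu_1\sqcup\dots\sqcup\mathord{\downarrow}\mu'_l)\cap\mathord{\downarrow}\nu$ with $(E\sqcup E')\cap\mathord{\downarrow}\nu$. The only difference is cosmetic: the paper verifies that identity by a direct two-inclusion element-chase (using quasi-principality pointwise to find a maximal element above each $\mu''\in E\cup E'$), whereas you derive it algebraically from $E\cap\mathord{\downarrow}\nu=\mathord{\downarrow}\mathrm{max}(E\cap\mathord{\downarrow}\nu)=\bigsqcup_j\mathord{\downarrow}\mu_j$ together with regular openness and distributivity, which works equally well.
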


\noindent Particular frames used in applications may of course satisfy additional conditions on $\mathcal{A} $ (cf.~Remark \ref{ExtraConditions}). 

The definition of the $\mathbf{A} $ operation in part \ref{EventClosure} formalizes the account of awareness of events sketched in Section \ref{Intro}, namely that in a possibility $\omega$, an agent $i$ is aware of an event $E$ if the following condition holds at $\omega$ and its refinements: if $i$ is aware of a possibility $\nu $, then $i$ is aware of any coarsest refinement of $\nu $ belonging to $E$ and any coarsest refinement of $\nu $ belonging to $\neg E$, where  $\neg$ is the negation  in $\mathcal{RO}(\Omega,\sqsubseteq)$ from (\ref{NegDef}). 

\begin{remark} This definition of awareness of events---or equivalently binary questions---generalizes to a definition of awareness of arbitrary partitional questions, such as ``In what month was Ann born?'' Given a family $\{E_1,\dots,E_n\}$ of disjoint events such that $E_1\sqcup\dots\sqcup E_n=\Omega$, we  say that in $\omega$, agent $i$ is aware of $\{E_1,\dots,E_n\}$  if the following condition holds at $\omega$ and its refinements: for $1\leq k\leq n$, if $i$ is aware of a possibility $\nu $, then $i$ is aware of any coarsest refinement of $\nu $ belonging to $E_k$.\end{remark}

As for the requirement in part \ref{EventClosure} that $\mathcal{E}$ be closed under $\mathbf{A} $, note this requires that $\mathbf{A} (E)$ is a regular open set, which is indeed the case. The quantification over $\omega'\sqsubseteq\omega$ in the definition of $\mathbf{A} $ guarantees persistence  for $\mathbf{A} (E)$ (recall Lemma \ref{ROLem}), while awareness persistence and refinability guarantee refinability for $\mathbf{A} (E)$.\footnote{Conversely, in \textit{standard} frames as in Definition \ref{PossFrameAware}, the requirement that $\mathcal{RO}(\Omega,\sqsubseteq)$ be closed under $\mathbf{A}$ implies awareness persistence and awareness refinability.}

\begin{restatable}{lemma}{ROclosure}\label{ROclosure} Let $(\Omega,\sqsubseteq)$  be a poset and $\mathcal{A} :\Omega\to\wp(\Omega)$ satisfy awareness persistence and awareness refinability. Then for any $E\in\mathcal{RO}(\Omega,\sqsubseteq)$, we have $\mathbf{A} (E)\in\mathcal{RO}(\Omega,\sqsubseteq)$.
\end{restatable}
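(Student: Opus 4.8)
The plan is to verify that $\mathbf{A}(E)$ satisfies the two conditions that characterize regular open sets in Lemma \ref{ROLem}, namely \emph{persistence} and \emph{refinability}. Membership in $\mathbf{A}(E)$ is defined by a condition that already universally quantifies over refinements, so persistence should fall out for free, while refinability is where the two hypotheses on $\mathcal{A}$ must be used.

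For persistence, I would argue directly from the shape of the defining condition. Since $\omega\in\mathbf{A}(E)$ asserts a property of \emph{every} $\omega'\sqsubseteq\omega$---that for all $\nu\in\mathcal{A}(\omega')$ the set $\mathrm{max}(E\cap\mathord{\downarrow}\nu)\cup\mathrm{max}(\neg E\cap\mathord{\downarrow}\nu)$ lies in $\mathcal{A}(\omega')$---and since $\sqsubseteq$ is transitive, any refinement $\sigma\sqsubseteq\omega$ inherits exactly the same property: every $\tau\sqsubseteq\sigma$ is also a refinement of $\omega$, so the condition demanded of $\tau$ is already guaranteed by $\omega\in\mathbf{A}(E)$. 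Hence $\sigma\in\mathbf{A}(E)$, and persistence holds without appeal to the hypotheses on $\mathcal{A}$.

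The substance is in refinability. Suppose $\omega\notin\mathbf{A}(E)$. Unfolding the definition, I obtain a refinement $\omega'\sqsubseteq\omega$, a possibility $\nu\in\mathcal{A}(\omega')$, and a witness $\mu\in\mathrm{max}(E\cap\mathord{\downarrow}\nu)\cup\mathrm{max}(\neg E\cap\mathord{\downarrow}\nu)$ with $\mu\notin\mathcal{A}(\omega')$. I would then apply awareness refinability to $\mu$ at $\omega'$ to produce $\sigma\sqsubseteq\omega'$ such that $\mu\notin\mathcal{A}(\tau)$ for every $\tau\sqsubseteq\sigma$, and claim that this $\sigma$ witnesses refinability for $\mathbf{A}(E)$. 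To verify the claim, fix any $\tau\sqsubseteq\sigma$ and show $\tau\notin\mathbf{A}(E)$: the key observation is that the set $\mathrm{max}(E\cap\mathord{\downarrow}\nu)\cup\mathrm{max}(\neg E\cap\mathord{\downarrow}\nu)$ depends only on $E$ and $\nu$, hence is unchanged as we descend. Awareness persistence applied along $\tau\sqsubseteq\sigma\sqsubseteq\omega'$ yields $\nu\in\mathcal{A}(\tau)$, so $\tau'=\tau$ and $\nu'=\nu$ constitute a legitimate instance in the defining condition for $\mathbf{A}(E)$; and by the choice of $\sigma$ we have $\mu\notin\mathcal{A}(\tau)$, so this same $\mu$ shows the required containment fails at $\tau$. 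Thus $\tau\notin\mathbf{A}(E)$, completing refinability.

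I expect the only delicate point to be the quantifier bookkeeping: one must track that the fixed witness $\mu$ remains in the (unchanging) max-set while, as we pass to refinements of $\sigma$, awareness persistence \emph{preserves the availability} of $\nu$ and awareness refinability \emph{preserves the unavailability} of $\mu$. The two hypotheses play complementary roles here and both are genuinely needed. No computation is involved; the argument is a definition chase carried out in the correct order.
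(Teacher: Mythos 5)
Your proof is correct and follows essentially the same route as the paper's: persistence from the built-in $\forall\omega'\sqsubseteq\omega$ quantifier, and refinability by applying awareness refinability to the offending witness $\mu$ and awareness persistence to carry $\nu$ down to each refinement of the resulting $\sigma$. The only cosmetic difference is that the paper phrases the final step as a proof by contradiction while you argue directly; the content is identical.
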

\noindent Then defining \textit{unawareness} by $\mathbf{U} (E)=\neg \mathbf{A} (E)$,  we have $\mathbf{U} (E)\in \mathcal{RO}(\Omega,\sqsubseteq)$ as well. Lemma \ref{ROclosure} implies that given awareness persistence and refinability, Definition \ref{PossFrameAware}.\ref{EventClosure} holds automatically when $\mathcal{E}=\mathcal{RO}(\Omega,\sqsubseteq)$.

Finally, the condition that $\mathscr{F}$ is standard is simply the condition that awareness of a possibility $\nu$  implies awareness of the distinction between $\mathord{\downarrow}\nu$ and $\neg\mathord{\downarrow}\nu$, which reduces to the condition that if $i$ is aware of $\nu$ and $\nu'$, then $i$ is also aware of any coarsest refinements of $\nu'$ that are incompatible with $\nu$: if $\nu,\nu'\in\mathcal{A} (\omega)$, then $\mathrm{max}(\{\nu^* \in\mathord{\downarrow}\nu'\mid \nu^*\,\bot\,\nu\})\subseteq \mathcal{A} (\omega)$. Standardness implies the equivalence given in (\ref{InduceAwareCorr}) above.

\begin{lemma} For any standard possibility frame with awareness $\mathscr{F}=(\Omega,\sqsubseteq, \mathcal{E} , \mathcal{A})$ and $\omega,\nu\in \Omega$, we have $\nu\in \mathcal{A} (\omega)$ if and only if $\omega\in\mathbf{A} (\mathord{\downarrow}\nu)$.
\end{lemma}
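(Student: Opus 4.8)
The plan is to observe that the two directions have very different characters, with essentially all the content living in one of them. The left-to-right implication ($\nu\in\mathcal{A}(\omega)$ implies $\omega\in\mathbf{A}(\mathord{\downarrow}\nu)$) is nothing but the definition of \emph{standard}, so it requires no argument at all. For the right-to-left implication I would unpack the defining condition of the $\mathbf{A}$ operator and instantiate its universal quantifiers at the top element of the poset.

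First, to avoid a clash with the fixed $\nu$ in the statement, I would rewrite the condition for $\omega\in\mathbf{A}(\mathord{\downarrow}\nu)$ with a fresh bound variable $\mu$: it says that for every $\omega'\sqsubseteq\omega$ and every $\mu\in\mathcal{A}(\omega')$, we have $\mathrm{max}(\mathord{\downarrow}\nu\cap\mathord{\downarrow}\mu)\cup\mathrm{max}(\neg\mathord{\downarrow}\nu\cap\mathord{\downarrow}\mu)\subseteq\mathcal{A}(\omega')$. Now I would specialize this to $\omega'=\omega$ and $\mu=\maximum$, the maximum element of $(\Omega,\sqsubseteq)$ supplied by Definition \ref{PossFrameAware}.\ref{PossFrameAware1}. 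Awareness nonvacuity gives $\maximum\in\mathcal{A}(\omega)$, so this instance of the condition is genuinely available. Since $\maximum$ is the top of the poset, $\mathord{\downarrow}\maximum=\Omega$, whence $\mathord{\downarrow}\nu\cap\mathord{\downarrow}\maximum=\mathord{\downarrow}\nu$; and since $\nu$ is the greatest element of its own principal downset, $\mathrm{max}(\mathord{\downarrow}\nu)=\{\nu\}$. The instantiated condition therefore forces $\{\nu\}\subseteq\mathrm{max}(\mathord{\downarrow}\nu\cap\mathord{\downarrow}\maximum)\cup\mathrm{max}(\neg\mathord{\downarrow}\nu\cap\mathord{\downarrow}\maximum)\subseteq\mathcal{A}(\omega)$, i.e.\ $\nu\in\mathcal{A}(\omega)$, as desired.

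There is essentially no obstacle here: the only point worth flagging is the recognition that the top element $\maximum$, together with awareness nonvacuity, is precisely the witness that collapses the quantifier over $\mu$, since $\mathord{\downarrow}\maximum$ is the whole space and so does not prune $\mathord{\downarrow}\nu$. Notably, neither the negation summand $\mathrm{max}(\neg\mathord{\downarrow}\nu\cap\mathord{\downarrow}\mu)$ nor the strict refinements $\omega'\sqsubset\omega$ are used in this direction; the claim is a direct consequence of awareness nonvacuity and the existence of a maximum element, combined with the elementary fact that a principal downset has its generator as its unique maximal point.
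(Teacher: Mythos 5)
Your proof is correct and follows essentially the same route as the paper's: the left-to-right direction is the definition of standardness, and the right-to-left direction instantiates the definition of $\mathbf{A}(\mathord{\downarrow}\nu)$ at the possibility $\maximum\in\mathcal{A}(\omega)$ guaranteed by awareness nonvacuity, using that $\nu$ is the unique coarsest refinement of $\maximum$ belonging to $\mathord{\downarrow}\nu$. You merely spell out more explicitly than the paper does why $\mathrm{max}(\mathord{\downarrow}\nu\cap\mathord{\downarrow}\maximum)=\{\nu\}$.
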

\begin{proof} The left-to-right direction is just the definition of standardness. For the right-to-left direction, if $\omega\in\mathbf{A} (\mathord{\downarrow}\nu)$, then since $m\in\mathcal{A} (\omega)$ by awareness nonvacuity, it follows from the definition of $\mathbf{A} $ that in $\omega$, $i$ is aware of the coarsest refinement of $m$ belonging to $\mathord{\downarrow}\nu$, which is $\nu$ itself, so $\nu\in \mathcal{A} (\omega)$.\end{proof}

We now give our first two examples of using possibility frames with awareness for modeling. For
simplicity, we continue to concentrate on the awareness of a single agent; but one can enrich each of our examples
to a multi-agent example by adding additional possibilities and correspondences for other agents' awareness.

\begin{example}\label{WatsonExample} We begin with perhaps the simplest example of an interesting possibility frame with awareness, formalizing a story discussed in \citealt{Geanakoplos1989}, \citealt{Modica1994}, and \citealt{Modica1999}. The story concerns Sherlock Holmes's assistant, Watson: if Watson hears a dog bark, then he will know---and hence be aware of---the event of the dog barking; but if he does not hear the dog bark, then he will not even be aware (at least at the relevant time) of the distinction between the dog barking vs.~not barking.  We formalize this using the frame in Figure \ref{WatsonFig}: $\Omega=\{\maximum , b,\overline{b}\}$; the partial order $\sqsubseteq$ is depicted by the arrows, so, e.g., we have  $b\sqsubseteq \maximum $ (arrows point toward more refined possibilities); the awareness correspondence for Watson, whom we will call $i$ for short, is given by $\mathcal{A} (\maximum )=\{\maximum \}$, $\mathcal{A} (b)=\Omega$, and $\mathcal{A} (\overline{b})=\{\maximum \}$; and $\mathcal{E}=\mathcal{RO}(\Omega,\sqsubseteq)$. It is easy to check that this is a standard possibility frame with awareness.

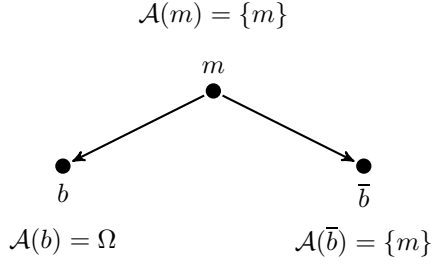
\begin{figure}[h]
\begin{center}
\begin{tikzpicture}[yscale=1, ->,>=stealth',shorten >=1pt,shorten <=1pt, auto,node
distance=2cm,thick,every loop/.style={<-,shorten <=1pt}]
\tikzstyle{every state}=[fill=gray!20,draw=none,text=black]
\node[circle,draw=black!100,fill=black!100, label=above:$\maximum $,inner sep=0pt,minimum size=.175cm] (0) at (0,0) {{}};
\node[circle,draw=black!100,fill=black!100, label=below:$b$,inner sep=0pt,minimum size=.175cm] (00) at (-2,-1) {{}};
\node[circle,draw=black!100,fill=black!100, label=below:$\overline{b}$,inner sep=0pt,minimum size=.175cm] (01) at (2,-1) {{}};

\node at (0,1) {{$\mathcal{A} (\maximum )=\{\maximum \}$}};
\node at (-2,-2) {{$\mathcal{A} (b)=\Omega$}};
\node  at (2,-2) {{$\mathcal{A} (\overline{b})=\{\maximum \}$}};

\path (0) edge[->] node {{}} (00);
\path (0) edge[->] node {{}} (01);

\end{tikzpicture}
\end{center}\label{WatsonFig}
\caption{A possibility frame with awareness representing Watson in the story of \citealt{Geanakoplos1989}. Refinement arrows implied by reflexivity are not drawn.}
\end{figure}

Let $Barks =\{b\}$; this set satisfies persistence and refinability, so by Lemma \ref{ROLem} it is an event in $\mathcal{RO}(\Omega,\sqsubseteq)$. Now it is immediate from the definition of $\mathbf{A} $ that when $i$ is aware of all possibilities, $i$ is also aware of all events. Hence in possibility $b$, $i$ is aware of $Barks$ and $\neg Barks$. By contrast, in $\overline{b}$, $i$ is unaware of these events, for the following reason: although in $\overline{b}$, $i$ is aware of the coarsest possibility $\maximum $, we have $\maximum \not\in Barks$ and $\maximum \not\in\neg Barks$ (the latter because $b\sqsubseteq\maximum $ and $b\in Barks$); then since in $\overline{b}$, $i$ is \textit{only} aware of $\maximum $, $i$ is not aware of the coarsest refinement of $\maximum $ belonging to $Barks$ (namely, $b$) or of the coarsest refinement of $\maximum $ belonging to $\neg Barks$ (namely, $\overline{b}$). In short, in $\overline{b}$, $i$ is not aware of the distinction $Barks$ vs.~$\neg Barks$. 

Moreover, in $\overline{b}$, $i$ is unaware of his unawareness of $Barks$. The reason is similar to the above: although in $\overline{b}$, $i$ is aware of $\maximum $, we have $\maximum \not\in \mathbf{A} (Barks) = \{b\}$ and $\maximum \not\in \neg \mathbf{A} (Barks) = \{\overline{b}\}$ (the latter because $b\sqsubseteq\maximum $ and $b\in \mathbf{A} (Barks)$); then since in $\overline{b}$, $i$ is \textit{only} aware of $\maximum $, $i$ is not aware of the coarsest refinement of $\maximum $ belonging to   $\mathbf{A} (Barks) $ (namely, $b$) or of the coarsest refinement of $\maximum $ belonging to $\neg\mathbf{A}  (Barks)$ (namely, $\overline{b}$). In short, in $\overline{b}$, $i$ is not aware of the distinction $\mathbf{A} (Barks)$ vs.~$\neg \mathbf{A} (Barks)$. 

Finally, we must continue, as stressed in Section \ref{Intro}, to avoid conflating events with sentences. For example, since we can write $\Omega=Barks\sqcup\neg Barks$, where $\sqcup$ is the join in the Boolean algebra $\mathcal{RO}(\Omega,\sqsubseteq)$, does it follow from Watson's being aware of $\Omega$ that he is aware of $Barks$? As noted in Section \ref{Intro}, it does not. $\Omega$ is the trivial event, not to be confused with the linguistic item `$Barks\sqcup \neg Barks$' of the analyst's language. Surely if Watson were aware of a sentence in a language that embeds `$Barks$', then he would be aware of `$Barks$'. But Watson's awareness of the trivial event $\Omega$ does not imply any such awareness of a sentence.
\end{example}

\begin{example}\label{GameEx} Consider a game in which a column player is aware that the row player can move up or down but is unaware that the row player has a third move, middle. Informally, such a situation is represented by the game matrix at the top of Figure \ref{GameFrame} in which the middle row is greyed out. Formally, we can represent the unawareness of the column player, whom we call $i$, using the frame in Figure \ref{GameFrame}; that this is a standard possibility frame with awareness can be checked by hand or more quickly with the notebook cited in Section~\ref{Organization}. There are two games the players could play: game $G$ in which the row player only has two moves, represented by the subtree with root $g$, and game $\underline{G}$ in which the row player has three moves, represented by the subtree with root $\underline{g}$. In each colored state, $i$ is aware only of the red possibilities; note $i$'s awareness of $\underline{g}$ is in effect just awareness of the possibility of \textit{not playing $G$}, without any awareness of further refinements of that possibility.  But before computing $i$'s awareness of events, one should become comfortable with the treatment of `not' and `or' coming from Theorem \ref{Tarski2}. For example, although the partial possibility $\underline{\ell}$ does not belong to the event $\underline{Middle} = \mathord{\downarrow}\{\underline{lm},\underline{rm}\}$ of the row player playing middle in $\underline{G}$, we have $\underline{\ell}\not\in \neg \underline{Middle}$, since $\underline{\ell}$ is refined by $\underline{\ell m}$ and $\underline{\ell m}\in \underline{Middle}$. Also note that where $\underline{Up}=\mathord{\downarrow}\{\underline{lu},\underline{ru}\}$ and $\underline{Down}=\mathord{\downarrow}\{\underline{ld},\underline{rd}\}$, we have $\underline{\ell}\in \underline{Up}\sqcup \underline{Middle}\sqcup \underline{Down}$, despite the fact that $\underline{\ell}$ does not belong to the union of these events; this is because every proper refinement of $\underline{\ell}$ does belong to the union. Thus, when dealing with partial possibilities, one must resist the temptation to interpret `not' and `or' using set-theoretic complement and union.

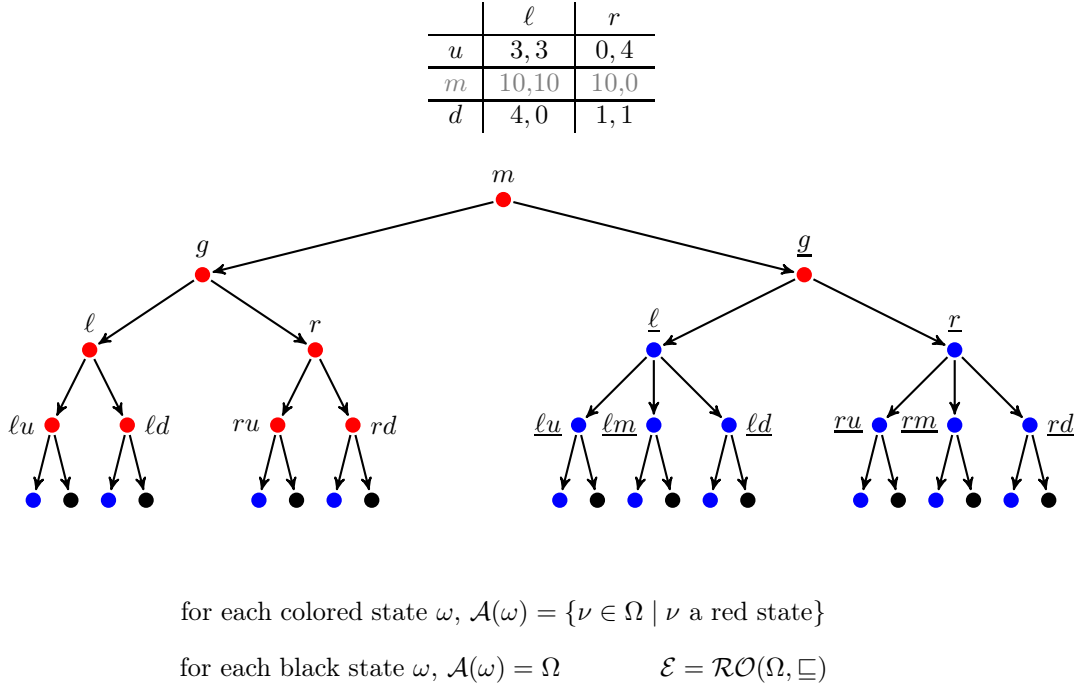
\begin{figure}[h]
\begin{center}
\begin{tabular}{c|c|c}
& $\ell$ & $r$ \\
\hline
$u$ & $3,3$ & $0,4$ \\
\hline
$\textcolor{gray}{m}$ & \textcolor{gray}{10,10} & \textcolor{gray}{10,0} \\
\hline
$d$ & $4,0$ & $1,1$ 
\end{tabular}
\end{center}
\begin{center}
\begin{tikzpicture}[yscale=1, ->,>=stealth',shorten >=1pt,shorten <=1pt, auto,node
distance=2cm,thick,every loop/.style={<-,shorten <=1pt}]
\tikzstyle{every state}=[fill=gray!20,draw=none,text=black]

\node[circle,draw=red!100,fill=red!100, label=above:$\maximum $,inner sep=0pt,minimum size=.175cm] (epsilon) at (4,1) {{}};

\node[circle,draw=red!100,fill=red!100, label=above:$g$,inner sep=0pt,minimum size=.175cm] (g) at (0,0) {{}};
\node[circle,draw=red!100,fill=red!100, label=above:$\ell$,inner sep=0pt,minimum size=.175cm] (l) at (-1.5,-1) {{}};
\node[circle,draw=red!100,fill=red!100, label=left:$\ell u$,inner sep=0pt,minimum size=.175cm] (lu) at (-2,-2) {{}};
\node[circle,draw=blue!100,fill=blue!100,inner sep=0pt,minimum size=.175cm] (lu1) at (-2.25,-3) {{}};
\node[circle,draw=black!100,fill=black!100,inner sep=0pt,minimum size=.175cm] (lu2) at (-1.75,-3) {{}};
\node[circle,draw=red!100,fill=red!100, label=right:$\ell d$,inner sep=0pt,minimum size=.175cm] (ld) at (-1,-2) {{}};
\node[circle,draw=blue!100,fill=blue!100,inner sep=0pt,minimum size=.175cm] (ld1) at (-1.25,-3) {{}};
\node[circle,draw=black!100,fill=black!100,inner sep=0pt,minimum size=.175cm] (ld2) at (-.75,-3) {{}};
\node[circle,draw=red!100,fill=red!100, label=above:$r$,inner sep=0pt,minimum size=.175cm] (r) at (1.5,-1) {{}};
\node[circle,draw=red!100,fill=red!100, label=left:$r u$,inner sep=0pt,minimum size=.175cm] (ru) at (1,-2) {{}};
\node[circle,draw=blue!100,fill=blue!100, label=left:,inner sep=0pt,minimum size=.175cm] (ru1) at (.75,-3) {{}};
\node[circle,draw=black!100,fill=black!100, label=left:,inner sep=0pt,minimum size=.175cm] (ru2) at (1.25,-3) {{}};
\node[circle,draw=red!100,fill=red!100, label=right:$rd$,inner sep=0pt,minimum size=.175cm] (rd) at (2,-2) {{}};
\node[circle,draw=blue!100,fill=blue!100, label=left:,inner sep=0pt,minimum size=.175cm] (rd1) at (1.75,-3) {{}};
\node[circle,draw=black!100,fill=black!100, label=left:,inner sep=0pt,minimum size=.175cm] (rd2) at (2.25,-3) {{}};

\path (g) edge[->] node {{}} (l);
\path (l) edge[->] node {{}} (lu);
\path (lu) edge[->] node {{}} (lu1);
\path (lu) edge[->] node {{}} (lu2);
\path (l) edge[->] node {{}} (ld);
\path (ld) edge[->] node {{}} (ld1);
\path (ld) edge[->] node {{}} (ld2);
\path (g) edge[->] node {{}} (r);
\path (r) edge[->] node {{}} (ru);
\path (r) edge[->] node {{}} (rd);
\path (ru) edge[->] node {{}} (ru1);
\path (ru) edge[->] node {{}} (ru2);
\path (rd) edge[->] node {{}} (rd1);
\path (rd) edge[->] node {{}} (rd2);

\node[circle,draw=red!100,fill=red!100, label=above:$\underline{g}$,inner sep=0pt,minimum size=.175cm] (g') at (8,0) {{}};
\node[circle,draw=blue!100,fill=blue!100, label=above:$\underline{\ell}$,inner sep=0pt,minimum size=.175cm] (l') at (6,-1) {{}};
\node[circle,draw=blue!100,fill=blue!100, label=left:$\underline{\ell u}$,inner sep=0pt,minimum size=.175cm] (lu') at (5,-2) {{}};
\node[circle,draw=blue!100,fill=blue!100,inner sep=0pt,minimum size=.175cm] (lu1') at (4.75,-3) {{}};
\node[circle,draw=black!100,fill=black!100,inner sep=0pt,minimum size=.175cm] (lu2') at (5.25,-3) {{}};
\node[circle,draw=blue!100,fill=blue!100, label=left:$\underline{\ell m}$,inner sep=0pt,minimum size=.175cm] (lm') at (6,-2) {{}};
\node[circle,draw=blue!100,fill=blue!100,inner sep=0pt,minimum size=.175cm] (lm1') at (5.75,-3) {{}};
\node[circle,draw=black!100,fill=black!100,inner sep=0pt,minimum size=.175cm] (lm2') at (6.25,-3) {{}};
\node[circle,draw=blue!100,fill=blue!100, label=right:$\underline{\ell d}$,inner sep=0pt,minimum size=.175cm] (ld') at (7,-2) {{}};
\node[circle,draw=blue!100,fill=blue!100,inner sep=0pt,minimum size=.175cm] (ld1') at (6.75,-3) {{}};
\node[circle,draw=black!100,fill=black!100,inner sep=0pt,minimum size=.175cm] (ld2') at (7.25,-3) {{}};
\node[circle,draw=blue!100,fill=blue!100, label=above:$\underline{r}$,inner sep=0pt,minimum size=.175cm] (r') at (10,-1) {{}};
\node[circle,draw=blue!100,fill=blue!100, label=left:$\underline{ru}$,inner sep=0pt,minimum size=.175cm] (ru') at (9,-2) {{}};
\node[circle,draw=blue!100,fill=blue!100, label=left:,inner sep=0pt,minimum size=.175cm] (ru1') at (8.75,-3) {{}};
\node[circle,draw=black!100,fill=black!100, label=left:,inner sep=0pt,minimum size=.175cm] (ru2') at (9.25,-3) {{}};
\node[circle,draw=blue!100,fill=blue!100, label=left:$\underline{rm}$,inner sep=0pt,minimum size=.175cm] (rm') at (10,-2) {{}};
\node[circle,draw=blue!100,fill=blue!100, label=left:,inner sep=0pt,minimum size=.175cm] (rm1') at (9.75,-3) {{}};
\node[circle,draw=black!100,fill=black!100, label=left:,inner sep=0pt,minimum size=.175cm] (rm2') at (10.25,-3) {{}};
\node[circle,draw=blue!100,fill=blue!100, label=right:$\underline{rd}$,inner sep=0pt,minimum size=.175cm] (rd') at (11,-2) {{}};
\node[circle,draw=black!100,fill=black!100, label=left:,inner sep=0pt,minimum size=.175cm] (rd1') at (11.25,-3) {{}};
\node[circle,draw=blue!100,fill=blue!100, label=left:,inner sep=0pt,minimum size=.175cm] (rd2') at (10.75,-3) {{}};

\path (g') edge[->] node {{}} (l');
\path (l') edge[->] node {{}} (lu');
\path (lu') edge[->] node {{}} (lu1');
\path (lu') edge[->] node {{}} (lu2');
\path (l') edge[->] node {{}} (lm');
\path (lm') edge[->] node {{}} (lm1');
\path (lm') edge[->] node {{}} (lm2');
\path (l') edge[->] node {{}} (ld');
\path (ld') edge[->] node {{}} (ld1');
\path (ld') edge[->] node {{}} (ld2');
\path (g') edge[->] node {{}} (r');
\path (r') edge[->] node {{}} (ru');
\path (ru') edge[->] node {{}} (ru1');
\path (ru') edge[->] node {{}} (ru2');
\path (r') edge[->] node {{}} (rm');
\path (rm') edge[->] node {{}} (rm1');
\path (rm') edge[->] node {{}} (rm2');
\path (r') edge[->] node {{}} (rd');
\path (rd') edge[->] node {{}} (rd1');
\path (rd') edge[->] node {{}} (rd2');

\path (epsilon) edge[->] node {{}} (g);
\path (epsilon) edge[->] node {{}} (g');

\node at (4,-4.5) {{for each colored state $\omega$, $\mathcal{A} (\omega)= \{\nu\in\Omega\mid \nu\mbox{ a red state}\}$}};

\node at (2.23,-5.25) {{for each black state $\omega$, $\mathcal{A} (\omega)=\Omega$}};

\node at (7.2,-5.25) {{$\mathcal{E}=\mathcal{RO}(\Omega,\sqsubseteq)$}};

\end{tikzpicture}
\end{center}
\caption{A possibility frame with awareness representing a column player's unawareness that the row player has an extra move $m$. Refinement arrows implied by reflexivity or transitivity are not drawn.}\label{GameFrame}
\end{figure}

Turning to awareness, observe that at each of the colored states $\omega$, it is not settled that $i$ is aware of $\underline{Middle}$:  $\omega\not\in\mathbf{A} ( \underline{Middle})$. For in the colored states,  $i$ is aware of $\underline{g}$, $\underline{g}\not\in \underline{Middle}$, and $\underline{g}\not\in \neg \underline{Middle}$ (since $\underline{g}$ is refined by $\underline{\ell m}$, which belongs to $\underline{Middle}$), yet $i$ is not aware of any coarsest refinement of $\underline{g}$ that belongs to $\underline{Middle}$, namely $\underline{\ell m}$ or $\underline{r m}$. Hence each colored \textit{leaf} $\omega$ of the tree settles that $i$ is \textit{not} aware of $\underline{Middle}$: $\omega\in \neg\mathbf{A} (\underline{Middle})$. By contrast, in the black leaves, $i$ is aware of every event, in virtue of being aware of every possibility. It follows that at each colored leaf $\omega$ of the tree, $i$ is unaware of her unawareness of $\underline{Middle}$: $\omega\in \neg\mathbf{A} \neg\mathbf{A} (\underline{Middle})$. For in the colored states, $i$ is aware of possibilities that are refined by colored leaves and black leaves, yet $i$ is unaware of all leaves.  Thus, at each colored leaf, $i$ is not only unaware of the distinction $\underline{Middle}$ vs.~$\neg \underline{Middle}$ but also of the distinction $\mathbf{A}  (\underline{Middle})$ vs.~$\neg\mathbf{A}  (\underline{Middle})$. Yet at all colored states, $i$ is aware of the events $Up=\mathord{\downarrow}\{\ell u,ru\}$ and $Down =\mathord{\downarrow}\{\ell d,rd\}$ of playing up and down in $G$, since for every red state $\omega$, every coarsest refinement of $\omega$ in $Up$, $\neg Up$, $Down$, and $\neg Down$ is itself a red state.

A question raised about the model in Figure \ref{GameFrame} is whether we should delete the two possibilities refining $\underline{\ell m}$ and then turn $\underline{\ell m}$ black (resulting in another frame satisfying the constraints of Definition \ref{PossFrameAware}), representing full awareness in state $\underline{\ell m}$ (and similarly for $\underline{rm}$): for in a state where the row player plays middle, doesn't the column player necessarily observe this? And isn't the column player therefore aware of $\underline{Middle}$ (as in the current black refinement of $\underline{\ell m}$)? In games with imperfect information, the column player might not observe the row player's move; but even if one assumes full observability, the analyst can assign zero probability to the event consisting of the singleton set of the blue possibility below $\underline{\ell m}$, rather than deleting that possibility from the model. One argument in favor of not deleting the possibility is that an agent can be aware of such a logical possibility (as the agent is in the black states) even if they are certain that it will not obtain (recall our intuitive explanation of $\omega'\in\mathcal{A}(\omega)$ at the beginning of Section \ref{ModelSection}).
\end{example}

We close this section with a simple example of an infinite possibility frame with awareness.

\begin{example}\label{InfiniteEx} Let $2^{\leq\omega}$ be the set of all finite or countably infinite binary strings (such as $0110$, etc.) ordered such that $\sigma\sqsubseteq \tau$ if $\tau$ is an initial segment of $\sigma$ (so $0110\sqsubseteq 011$, etc., reversed relative to the usual initial segment notation). This could represent all possibilities for finitely many or countably infinitely many flips of a coin. Now consider the possibility frame $(2^{\leq\omega},\sqsubseteq, \mathcal{RO}(2^{\leq\omega},\sqsubseteq))$, where each event in $  \mathcal{RO}(2^{\leq\omega})$ is equal to the union of an arbitrary set $\{\pi_i\}_{i\in I}$ of infinite strings and the set of all finite strings all of whose infinite extensions belong to $\{\pi_i\}_{i\in I}$. Equip the possibility frame with the awareness correspondence $\mathcal{A} $ defined by \[\mathcal{A} (\sigma)=\{\tau \in 2^{\leq\omega} \mid \tau\mbox{ a finite binary string}\}.\] Then $\mathcal{A} $ satisfies awareness nonvacuity (as the empty string is the maximal element of $(2^{\leq\omega},\sqsubseteq)$), expressibility (since $\mathcal{RO}(2^{\leq\omega},\sqsubseteq)$ contains all principal downsets $\mathord{\downarrow}\tau$), persistence and refinability (since $\mathcal{A} (\sigma)=\mathcal{A} (\tau)$ for all $\sigma,\tau \in 2^{\leq\omega}$), and joinability (since if $\mathrm{max}(E)\cap\mathord{\nu}$ and  $\mathrm{max}(E')\cap\mathord{\nu}$  contain only finite strings, then so does $\mathrm{max}((E\sqcup E')\cap\mathord{\downarrow}\nu)$).  This frame represents an agent who can conceive of any finite sequence of coin flips but cannot conceive of infinite sequences. Then the set of events of which the agent is aware forms an \textit{atomless} (recall Footnote \ref{AtomNote}) Boolean subalgebra (see Corollary \ref{SubalgCor}) of the atomic Boolean algebra $\mathcal{RO}(2^{\leq\omega},\sqsubseteq)$. The atoms of $\mathcal{RO}(2^{\leq\omega},\sqsubseteq)$ are the singleton sets of infinite binary strings, of which the agent is unaware.\end{example}

\subsection{Knowledge, belief, and awareness}\label{KnowledgeSection}

We now add knowledge and belief correspondences to our possibility frames with awareness. The basic distinction is that what the agent $i$ knows depends on the true information $i$ has received, whereas belief is subjective in the same sense as in subjective probability, which models belief quantitatively.\footnote{As sketched in Section~\ref{Conclusion}, we could use $p$-belief instead of belief, but for simplicity we use belief when introducing our model.} Take $\nu \in \mathcal{K} (\omega)$ (resp.~$\nu \in \mathcal{B} (\omega)$) to mean that every event that $i$ knows (resp.~believes) in $\omega$---or would know (resp.~believe) if made aware of the event---holds true in $\nu$. In this sense \textit{$\nu$ conforms to what $i$ knows} (resp.~believes).

\begin{definition}\label{EpistemicFrame} An \textit{epistemic possibility frame} is a tuple $\mathscr{F}=(\Omega,\sqsubseteq, \mathcal{E}, \mathcal{A},\mathcal{K},\mathcal{B})$ such that:
\begin{enumerate}
\item $(\Omega,\sqsubseteq, \mathcal{E}, \mathcal{A})$ is a possibility frame with awareness;
\item each $\mathcal{R} \in \{\mathcal{K} ,\mathcal{B} \} $ is a correspondence $\mathcal{R} :\Omega\to\wp(\Omega)$ satisfying the following for all $\omega,\omega',\nu\in\Omega$: 
\begin{enumerate}
\item $\mathcal{R} $-monotonicity: if $\omega'\sqsubseteq \omega$, then  $\mathcal{R} (\omega')\subseteq \mathcal{R} (\omega)$; 
\item $\mathcal{R} $-regularity: $\mathcal{R} (\omega)\in\mathcal{RO}(\Omega,\sqsubseteq)$;
\item $\mathcal{R} $-refinability: if $\nu \in \mathcal{R} (\omega)$, then $\exists \omega'\sqsubseteq \omega$ $\forall \omega''\sqsubseteq \omega'$ $\exists \nu'\sqsubseteq \nu$: $\nu'\in \mathcal{R} (\omega'')$;
\item epistemic factivity: $\omega\in \mathcal{K} (\omega)$;
\item doxastic consistency: $\mathcal{B} (\omega)\neq\varnothing$;
\item doxastic inclusion: $\mathcal{B} (\omega)\subseteq \mathcal{K} (\omega)$.
\end{enumerate}
\item\label{Eclosure} for each $\mathcal{R} \in \{\mathcal{K} ,\mathcal{B} \} $ and $E\in \mathcal{E}$, we have $\{\omega\in\Omega\mid \mathcal{R} (\omega)\subseteq E\}\in\mathcal{E}$.
\end{enumerate}
\noindent We call $\mathscr{F}$ \textit{standard} if the underlying possibility frame with awareness is standard.
\end{definition}

The interpretations of the first three conditions are as follows for knowledge; the belief interpretations are analogous. $\mathcal{R} $-monotonicity---in its equivalent formulation: if $\omega'\sqsubseteq \omega$, then $\nu \not\in \mathcal{R} (\omega)$ implies $\nu \not\in \mathcal{R} (\omega')$---says that if $\nu $ does not conform to $i$'s knowledge in $\omega$, then $\nu$ does not conform to $i$'s knowledge in any refinement $\omega'$ of $\omega$, since $i$ retains in $\omega'$ whatever knowledge she had in $\omega$. $\mathcal{R} $-regularity says that we can view the set of possibilities that conform to $i$'s knowledge as a genuine event that $i$ implicitly knows  (in fact, the strongest such event, given the definition of implicit knowledge below). Finally, $\mathcal{R} $-refinability says that if $\nu $ conforms to $i$'s knowledge in $\omega$, then there is a refinement $\omega'$ of $\omega$ that settles that \textit{some refinement of $\nu $ conforms to $i$'s knowledge}, in the sense that at every refinement $\omega''$ of $\omega'$, some refinement of $\nu $ conforms to $i$'s knowledge in $\omega''$.\footnote{This condition from \citealt{Holliday2015} is weaker than the refinability condition in \citealt{Humberstone1981}. This weakening is useful for the representation theory of modal algebras by possibility frames (see Remark 2.39 of \citealt{Holliday2015}).} Together these conditions imply the following closure property of $\mathcal{RO}(\Omega,\sqsubseteq)$, which shows that it is possible to satisfy the closure property in part \ref{Eclosure} of Definition \ref{EpistemicFrame}.

\begin{restatable}{lemma}{ROclosureTwo}\label{ROclosure2} Let $(\Omega,\sqsubseteq)$  be a poset and $\mathcal{R} :\Omega\to\wp(\Omega)$ satisfy $\mathcal{R} $-monotonicity, $\mathcal{R} $-regularity, and $\mathcal{R} $-refinability. Then for any $E\in\mathcal{RO}(\Omega,\sqsubseteq)$, we have $\{\omega\in\Omega\mid \mathcal{R} (\omega)\subseteq E\}\in \mathcal{RO}(\Omega,\sqsubseteq)$. \end{restatable}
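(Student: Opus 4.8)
The plan is to verify that $B:=\{\omega\in\Omega\mid \mathcal{R}(\omega)\subseteq E\}$ satisfies the two conditions of Lemma \ref{ROLem}---persistence and refinability---which together characterize membership in $\mathcal{RO}(\Omega,\sqsubseteq)$. So I would not work with the interior/closure definitions directly but instead reduce everything to these two order-theoretic conditions.

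For persistence the argument is immediate: if $\omega\in B$ and $\omega'\sqsubseteq\omega$, then $\mathcal{R}$-monotonicity gives $\mathcal{R}(\omega')\subseteq\mathcal{R}(\omega)\subseteq E$, so $\omega'\in B$. This half uses only $\mathcal{R}$-monotonicity and is routine.

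The substance is refinability, and here the plan is to chain all three hypotheses. Suppose $\omega\notin B$, so there is some $\nu\in\mathcal{R}(\omega)$ with $\nu\notin E$. First, since $E\in\mathcal{RO}(\Omega,\sqsubseteq)$, refinability for $E$ yields a refinement $\nu^*\sqsubseteq\nu$ every refinement of which lies outside $E$. Next, because $\mathcal{R}(\omega)$ is itself regular open by $\mathcal{R}$-regularity, it is persistent, so $\nu^*\sqsubseteq\nu\in\mathcal{R}(\omega)$ forces $\nu^*\in\mathcal{R}(\omega)$; this transport step is what lets me feed the $E$-avoiding witness into $\mathcal{R}$-refinability. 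Applying $\mathcal{R}$-refinability to $\nu^*\in\mathcal{R}(\omega)$ then produces a refinement $\omega'\sqsubseteq\omega$ such that for every $\omega''\sqsubseteq\omega'$ there is some $\nu'\sqsubseteq\nu^*$ with $\nu'\in\mathcal{R}(\omega'')$. I would then check that $\omega'$ witnesses refinability for $B$: fix any $\omega''\sqsubseteq\omega'$ and take the corresponding $\nu'\sqsubseteq\nu^*$ in $\mathcal{R}(\omega'')$; since every refinement of $\nu^*$ avoids $E$, we have $\nu'\notin E$, hence $\mathcal{R}(\omega'')\not\subseteq E$, i.e.\ $\omega''\notin B$. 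Thus every refinement of $\omega'$ lies outside $B$, which is exactly refinability.

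The only delicate point---and the one I would flag as the main obstacle---is the order of the two refinement steps together with the transport via persistence: one must first pass from $\nu$ to the \emph{$E$-avoiding} witness $\nu^*$ and only then invoke $\mathcal{R}$-refinability, relying on $\mathcal{R}$-regularity to keep $\nu^*$ inside $\mathcal{R}(\omega)$. Applying $\mathcal{R}$-refinability first to $\nu$ would yield witnesses $\nu'\sqsubseteq\nu$ that need not avoid $E$, and the argument would fail to close. With the steps in this order the three hypotheses dovetail, and no separativity or quasi-principality is needed.
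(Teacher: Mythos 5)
Your proposal is correct and follows essentially the same route as the paper's own proof: persistence from $\mathcal{R}$-monotonicity, and for refinability the same chain of first refining $\nu$ to an $E$-avoiding witness via refinability of $E$, transporting it into $\mathcal{R}(\omega)$ via $\mathcal{R}$-regularity, and then applying $\mathcal{R}$-refinability. The point you flag about the order of the two refinement steps is exactly the structure of the paper's argument.
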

\noindent Thus, if $\mathcal{R}$ satisfies the listed properties, then Definition \ref{EpistemicFrame}.\ref{Eclosure} holds automatically when $\mathcal{E}=\mathcal{RO}(\Omega,\sqsubseteq)$.\footnote{A natural question is whether the converse of Lemma \ref{ROclosure2} holds, i.e., if for any $E\in\mathcal{RO}(\Omega,\sqsubseteq)$, we have ${\{\omega\in\Omega\mid \mathcal{R} (\omega)\subseteq E\}}\in \mathcal{RO}(\Omega,\sqsubseteq)$, does this imply the listed properties of $\mathcal{R}$? In fact, it implies slightly weaker properties of $\mathcal{R}$, as shown in Proposition 2.30 of \citealt{Holliday2015}, but the stronger properties can be assumed without loss of generality (Proposition 2.37 of \citealt{Holliday2015}).}

As in \citealt{Fagin1988}, one may take $\mathbf{L} (E)=\{\omega\in\Omega\mid \mathcal{K} (\omega)\subseteq E\}$ (resp.~$\{\omega\in\Omega\mid \mathcal{B} (\omega)\subseteq E\}$) to be the event of $i$ \textit{implicitly knowing} (resp. \textit{implicitly believing}) $E$ in the sense that $i$ would know (resp.~believe) $E$ if $i$ were aware of $E$. However, we will concentrate here on explicit knowledge $\mathbf{K} $ and belief $\mathbf{B} $:
\begin{eqnarray*}
\mathbf{K} (E)&=& \{\omega\in\Omega\mid \mathcal{K} (\omega)\subseteq E\mbox{ and } \omega\in\mathbf{A} (E)\};\\
\mathbf{B} (E)&=&\{\omega\in\Omega\mid \mathcal{B} (\omega)\subseteq E\mbox{ and }\omega\in \mathbf{A} (E)\}.\end{eqnarray*}
By Lemma \ref{ROclosure2} and the closure of $\mathcal{E}$ under binary intersection,  $\mathcal{E}$ is also closed under $\mathbf{K} $ and $\mathbf{B} $. Moreover, in the multi-agent generalization of our setup with correspondences $\mathcal{K}_i$ and $\mathcal{B}_i$ for each agent $i$, if $\mathcal{E}$ is closed under countably infinite intersections (e.g., if $\mathcal{E}=\mathcal{RO}(\Omega,\sqsubseteq)$), then it is closed under the usual operations of \textit{common knowledge} (\citealt[\S~2]{Aumann1999}) and \textit{common belief} (see, e.g., \citealt{Lismont1994}).

Conditions (d)--(f) of Definition \ref{EpistemicFrame} are the bare minimum constraints for knowledge and belief, familiar from the earliest formal models of knowledge and belief (\citealt{Hintikka2005}). Epistemic factivity implies that if $i$ knows $E$, then $E$ is true; doxastic consistency implies that an agent cannot believe $\varnothing$; and doxastic inclusion (which implies that if $\mathcal{K} (\omega)\subseteq E$, then $\mathcal{B} (\omega)\subseteq E$) implies that if $i$ knows $E$, then $i$ believes $E$. 

\begin{remark}\label{ExtraConditions} Particular frames used in applications may of course satisfy additional constraints, which may imply epistemic and doxastic introspection principles (see \citealt{Ding2019} for a study of when such principles matter for multi-agent reasoning). As usual, introspection principles for $\mathbf{A} $, $\mathbf{K} $, and $\mathbf{B} $ that quantify over events---e.g., for all events $E$, $\mathbf{K} (E)\subseteq\mathbf{K} (\mathbf{K} (E))$---immediately correspond to conditions on $\mathcal{A} $, $\mathcal{K} $, and $\mathcal{B} $ that quantify over events, just by unpacking the definitions of $\mathbf{A} $, $\mathbf{K} $, and $\mathbf{B} $. There is then a technical question, studied in a branch of modal logic known as correspondence theory (\citealt{Benthem1980}), about whether the conditions on $\mathcal{A} $, $\mathcal{K} $, and $\mathcal{B} $ that quantify over events---called second-order conditions---are equivalent to conditions on $\mathcal{A} $, $\mathcal{K} $, and $\mathcal{B} $ that only quantify over possibilities in $\Omega$---called first-order conditions. Correspondence theory for implicit knowledge and belief in possibility semantics is well understood (\citealt{Holliday2015}, \citealt{Yamamoto2017}, \citealt{Zhao2016}). To take one example, in epistemic possibility frames,  Positive Introspection for implicit knowledge---for all $E\in\mathcal{RO}(\Omega,\sqsubseteq)$, $\mathbf{L} (E)\subseteq \mathbf{L} (\mathbf{L} (E))$---holds if and only if  $\mathcal{K} $ is transitive (if $\omega'\in\mathcal{K} (\omega)$ and $\omega''\in\mathcal{K} (\omega')$, then $\omega''\in\mathcal{K} (\omega)$), just as in Kripke frames. However, we will not delve into correspondence theory for awareness or explicit knowledge and belief here, as our focus is instead on representation theory in Section \ref{Representation}. The example frames in this paper satisfy first-order conditions that are sufficient but not necessary for introspection principles. For example, it is sufficient for $\mathbf{A} (E)\subseteq \mathbf{A} (\mathbf{A} (E))$ to hold for all $E\in\mathcal{E}$ that for all $\omega\in\Omega$, if $\mathcal{A} (\omega)\neq \Omega$, then for all $\nu\in\mathcal{A} (\omega)$ and $\nu'\sqsubseteq \nu$, if $\mathcal{A} (\nu')\neq \Omega$, then for some $\omega'\sqsubseteq\omega$, we have $\mathcal{A} (\nu')=\mathcal{A} (\omega')$; and if this also holds with `if $\mathcal{A} (\omega)\neq \Omega$, then for all $\nu\in\mathcal{A} (\omega)$' replaced by `for all $\nu\in\mathcal{K} (\omega)$', then $\mathbf{A} (E)\subseteq \mathbf{K} (\mathbf{A} (E))$ for all $E\in\mathcal{E}$.\end{remark}

We now present two examples of using epistemic possibility frames for modeling.

\begin{example}\label{WatsonKnowledge} Let us add knowledge and belief to Example \ref{WatsonExample}. There are two pairs of knowledge and belief correspondences we might consider for Watson:
 \[\mbox{$\mathcal{K} (b)=\mathcal{B} (b)=\{b\}$ and $\mathcal{K} (\overline{b})=\mathcal{B} (\overline{b})=\mathcal{K} (\maximum )=\mathcal{B} (\maximum )=\Omega$};\] 
  \[\mbox{$\mathcal{K} '(b)=\mathcal{B} '(b)=\{b\}$ and $\mathcal{K} '(\overline{b})=\mathcal{B} '(\overline{b})=\{\overline{b}\}\mbox{ and }\mathcal{K} '(\maximum )=\mathcal{B} '(\maximum )=\Omega$.}\] 
 One can easily check that in both cases, all the conditions of Definition \ref{EpistemicFrame} are satisfied. Moreover, for any event $E$, we have $\mathbf{K} (E)=\mathbf{K} '(E)$ and $\mathbf{B} (E)=\mathbf{B} '(E)$. However, the primed pair of correspondences can be used to capture the idea that \textit{if only Watson were aware in $\overline{b}$ of the distinction between $Bark$ and $\neg Bark$}, then he would know and believe $\neg Bark$ in $\overline{b}$. In either case, the frame illustrates our reason for rejecting part of what Dekel et al.~\citeyearpar{Dekel1998}  call the axiom of Plausibility, which is one direction of the Modica-Rustichini definition of unawareness: $\mathbf{U}(E)\subseteq \neg \mathbf{K}(E)$ and $\mathbf{U}(E)\subseteq \neg \mathbf{K}\neg \mathbf{K}(E)$.  The first inclusion holds in our model by the definition of $\mathbf{K}$, but the second inclusion can fail in light of the distinction between awareness of events and awareness of sentences. First observe that the event $ \mathbf{U} (Barks)$ is \textit{unknowable}: it cannot be known at $b$ or $\maximum $, because it is not true at these states (recall Example \ref{WatsonExample}); and it cannot be known at $\overline{b}$, because although it is true at $\overline{b}$, Watson is not aware of $\mathbf{U}(Barks)$ in $\overline{b}$ (again recall Example \ref{WatsonExample}). Thus, $\neg \mathbf{K}  ( \mathbf{U} (Barks))=\Omega$. But of course $\mathbf{K} (\Omega)=\Omega$ (recall Section \ref{Intro}). So we have a violation of Plausibility at $\overline{b}$, as $\overline{b}\in \mathbf{U} (Barks)$ and $\overline{b}\not\in \neg \mathbf{K}\neg\mathbf{K} (\mathbf{U} (Barks))$. Here Watson is aware of and indeed knows the trivial event $\Omega$, and as \textit{analysts} we see that $\Omega=\neg\mathbf{K} (\mathbf{U} (Barks))$. But it is a fallacy, based on conflating awareness of events with awareness of sentences, to think that if $i$ is aware of $\Omega$, and $\Omega$ can be obtained from $E$ by applying some operations on sets, then $i$ must be aware of $E$. It is similar to the fallacy in thinking that if a student is aware of a number $n$, and $n$ can be obtained from a number $m$ by some mathematical operations, then the student must also be aware of $m$. For more on Plausibility, see Appendix~\ref{PlausibilityAppendix} and Fact~\ref{PossibilityFact}.\end{example}
 
 \begin{remark} We can now prove that in our model, unawareness is not definable in any way from knowledge and belief; so not only does the Modica-Rustichini definition of unawareness in terms of knowledge fail, but other definitions in terms of knowledge and/or belief also fail. It suffices to give two possibility frames that yield the same algebra of events and the same $\mathbf{K}$ and $\mathbf{B}$ operators but different $\mathbf{A}$ operators. Let $\mathscr{F}_1$ have the same three possibilities as in Example \ref{WatsonExample} but with $\mathcal{A}(\omega)=\{m\}$ and $\mathcal{K}(\omega)=\mathcal{B}(\omega)=\Omega$ for all $\omega\in\Omega$. Let $\mathscr{F}_2$ be just like $\mathscr{F}_1$ except that we change $\mathcal{A}$ so that $\mathcal{A}(\omega)=\Omega$ for all $\omega\in\Omega$. In both frames, the agent does not know or believe anything beyond the trivial event $\Omega$; but in $\mathscr{F}_1$, the agent has unawareness (e.g., $\omega\not\in \mathbf{A} (Barks)$ for all $\omega$), while in $\mathscr{F}_2$, the agent has full awareness ($\omega\in\mathbf{A}(E)$ for all possibilities $\omega$ and events~$E$). Thus, the awareness operator is not definable in terms of the knowledge and belief operators.\end{remark}
 
Finally, let us return to the example of the \textit{overconfident} agent with which we began in Section \ref{Intro}.

 \begin{example}\label{OverconfidentEx} A potential investor $i$ in a firm believes that he knows the firm is profitable, while being unaware of a sophisticated type of fraud that the firm is in fact using to cover up unprofitability. This is the case in  state $f_1$ in the possibility frame in Figure \ref{OverconfidentFig} with $\mathcal{E}=\mathcal{RO}(\Omega,\sqsubseteq)$ and the awareness, belief, and knowledge correspondences specified below. Intuitively, in the names for states, $p$ stands for \textit{profitability}, $b$ stands for \textit{belief} (in profitability), $u$ stands for \textit{unawareness} (of possibilities of fraud), and $f$ stands for \textit{fraud}. Then the correspondences are defined as follows:
 
 \begin{itemize}
\item for each colored state $\omega$, $\mathcal{A} (\omega)= \{\nu\in\Omega\mid \nu\mbox{ a red state}\}$;
\item for each black or gray state $\omega$, $\mathcal{A} (\omega)=\Omega$;
\item for each square or blue state $\omega$, $\mathcal{B} (\omega)=\mathord{\downarrow}pb$ and $\mathcal{K} (\omega)=\mathord{\downarrow}\{\omega, pb\}$;
\item for each black state $\omega$, $\mathcal{B} (\omega)= \{pb\overline{u}\}$ and $\mathcal{K} (\omega)=\{\omega,pb\overline{u}\}$;
\item for each diamond or green state $\omega$,  $\mathcal{B} (\omega)=\mathcal{K} (\omega)= \mathord{\downarrow}\{\nu\in\Omega\mid \nu\mbox{ a diamond state}\}$;
\item for each gray state $\omega$, $\mathcal{B} (\omega)=\mathcal{K} (\omega)= \{\nu\in\Omega\mid \nu\mbox{ a gray state}\}$;
\item $\mathcal{B} (p)=\rho(\underset{\omega \sqsubset p}{\bigcup}\mathcal{B} (\omega)) = \mathord{\downarrow}\{p,\overline{p}\overline{b}\}$ and $\mathcal{K} (p)=\rho(\underset{\omega \sqsubset p}{\bigcup}\mathcal{K} (\omega)) = \mathord{\downarrow}\{p,\overline{p}\overline{b}\}$;
\item $\mathcal{B} (\overline{p})=\rho(\underset{\omega \sqsubset \overline{p}}{\bigcup}\mathcal{B} (\omega)) = \mathord{\downarrow}\{p,\overline{p}\overline{b}\}$ and $\mathcal{K} (\overline{p})=\rho(\underset{\omega \sqsubset \overline{p}}{\bigcup}\mathcal{K} (\omega)) = \Omega$;
\item $\mathcal{B} (m)=\mathord{\downarrow}\{p,\overline{p}\overline{b}\}$ and $\mathcal{K} (m)=\Omega$.
\end{itemize}

\begin{figure}[h]

\begin{center}
\begin{tikzpicture}[yscale=1, ->,>=stealth',shorten >=1pt,shorten <=1pt, auto,node
distance=2cm,thick,every loop/.style={<-,shorten <=1pt}]
\tikzstyle{every state}=[fill=gray!20,draw=none,text=black]

\node[circle,draw=red!100,fill=red!100, label=above:$\maximum $,inner sep=0pt,minimum size=.175cm] (epsilon) at (4,1) {{}};

\node[circle,draw=red!100,fill=red!100, label=above:$p$,inner sep=0pt,minimum size=.175cm] (g) at (0,0) {{}};
\node[rectangle,draw=red!100,fill=red!100, label=left:$pb$,inner sep=0pt,minimum size=.175cm] (l) at (-1.5,-1) {{}};
\node[circle,draw=blue!100,fill=blue!100, label=below:$pbu$,inner sep=0pt,minimum size=.175cm] (lu) at (-2,-2) {{}};
\node[circle,draw=black!100,fill=black!100, label=below:$pb\overline{u}$,inner sep=0pt,minimum size=.175cm] (ld) at (-1,-2) {{}};
\node[diamond,draw=red!100,fill=red!100, label=right:$p\overline{b}$,inner sep=0pt,minimum size=.175cm] (r) at (1.5,-1) {{}};
\node[circle,draw=medgreen!100,fill=medgreen!100, label=below:$p\overline{b}u$,inner sep=0pt,minimum size=.175cm] (ru) at (1,-2) {{}};
\node[circle,draw=gray!100,fill=gray!100, label=below:$p\overline{b}\overline{u}$,inner sep=0pt,minimum size=.175cm] (rd) at (2,-2) {{}};
\path (g) edge[->] node {{}} (l);
\path (l) edge[->] node {{}} (lu);
\path (l) edge[->] node {{}} (ld);
\path (g) edge[->] node {{}} (r);
\path (r) edge[->] node {{}} (ru);
\path (r) edge[->] node {{}} (rd);
\node[circle,draw=red!100,fill=red!100, label=above:$\overline{p}$,inner sep=0pt,minimum size=.175cm] (g') at (8,0) {{}};
\node[rectangle,draw=red!100,fill=red!100, label=left:$\overline{p}b$,inner sep=0pt,minimum size=.175cm] (l') at (6,-1) {{}};

\node[circle,draw=blue!100,fill=blue!100, label=left:$\overline{p}bu$,inner sep=0pt,minimum size=.175cm] (lu') at (5.25,-2) {{}};
\node[circle,draw=blue!100,fill=blue!100,label=below:$f_1$,inner sep=0pt,minimum size=.175cm] (lu1') at (5,-3) {{}};
\node[circle,draw=blue!100,fill=blue!100,label=below:$\overline{f}_1$,inner sep=0pt,minimum size=.175cm] (lu2') at (5.5,-3) {{}};

\node[circle,draw=black!100,fill=black!100, label=right:$\overline{p}b\overline{u}$,inner sep=0pt,minimum size=.175cm] (ld') at (6.75,-2) {{}};
\node[circle,draw=black!100,fill=black!100,label=below:$f_2$,inner sep=0pt,minimum size=.175cm] (ld1') at (6.5,-3) {{}};
\node[circle,draw=black!100,fill=black!100,label=below:$\overline{f}_2$,inner sep=0pt,minimum size=.175cm] (ld2') at (7,-3) {{}};
\node[diamond,draw=red!100,fill=red!100, label=right:$\overline{p}\overline{b}$,inner sep=0pt,minimum size=.175cm] (r') at (10,-1) {{}};
\node[circle,draw=medgreen!100,fill=medgreen!100, label=left:$\overline{p}\,\overline{b}u$,inner sep=0pt,minimum size=.175cm] (ru') at (9.25,-2) {{}};
\node[circle,draw=medgreen!100,fill=medgreen!100, label=below:$f_3$,inner sep=0pt,minimum size=.175cm] (ru1') at (9,-3) {{}};
\node[circle,draw=medgreen!100,fill=medgreen!100, label=below:$\overline{f}_3$,inner sep=0pt,minimum size=.175cm] (ru2') at (9.5,-3) {{}};

\node[circle,draw=gray!100,fill=gray!100, label=right:$\overline{p}\overline{b}\overline{u}$,inner sep=0pt,minimum size=.175cm] (rd') at (10.75,-2) {{}};
\node[circle,draw=gray!100,fill=gray!100, label=below:$\overline{f}_4$,inner sep=0pt,minimum size=.175cm] (rd1') at (11,-3) {{}};
\node[circle,draw=gray!100,fill=gray!100, label=below:$f_4$,inner sep=0pt,minimum size=.175cm] (rd2') at (10.5,-3) {{}};

\path (g') edge[->] node {{}} (l');
\path (l') edge[->] node {{}} (lu');
\path (lu') edge[->] node {{}} (lu1');
\path (lu') edge[->] node {{}} (lu2');
\path (l') edge[->] node {{}} (ld');
\path (ld') edge[->] node {{}} (ld1');
\path (ld') edge[->] node {{}} (ld2');
\path (g') edge[->] node {{}} (r');
\path (r') edge[->] node {{}} (ru');
\path (ru') edge[->] node {{}} (ru1');
\path (ru') edge[->] node {{}} (ru2');
\path (r') edge[->] node {{}} (rd');
\path (rd') edge[->] node {{}} (rd1');
\path (rd') edge[->] node {{}} (rd2');

\path (epsilon) edge[->] node {{}} (g);
\path (epsilon) edge[->] node {{}} (g');

\end{tikzpicture}
\end{center}
\caption{The refinement structure of a possibility frame for Example \ref{OverconfidentEx}. Refinement arrows implied by reflexivity or transitivity are not drawn.}
\end{figure}
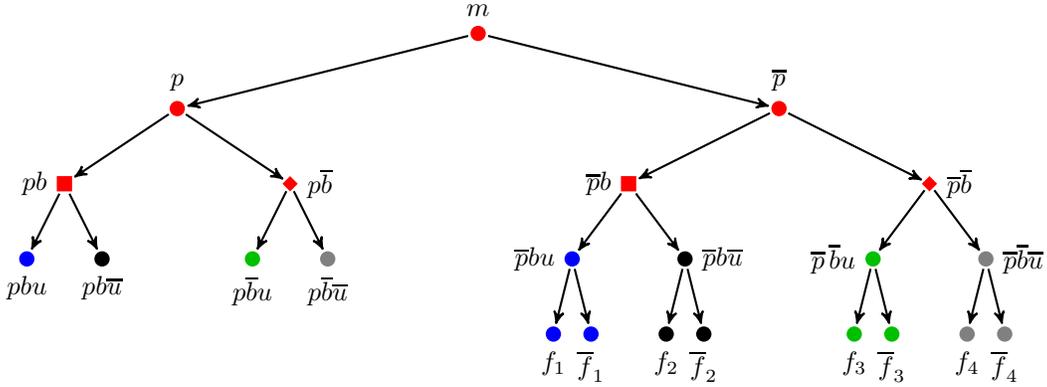\label{OverconfidentFig}

Let $Profit = \mathord{\downarrow}p$ and $Fraud =\{f_1,f_2,f_3,f_4\}$. Then we have the following:
\begin{itemize}
\item In the blue states, $i$ is unaware of the distinction between $Fraud$ and $\neg Fraud$ and believes $Profit$ and believes that he knows $Profit$ (note that $\mathcal{B} (\omega)=\mathord{\downarrow}pb$ for each blue state $\omega$, and in $pb$, $i$ knows $Profit$).
\item In the green states, $i$ is unaware of the distinction between $Fraud$ and $\neg Fraud$ and is undecided (in belief) and uncertain (in knowledge) about $Profit$.
\item In the black states, $i$ is aware of the distinction between $Fraud$ and $\neg Fraud$ but still believes $Profit$ and believes he knows $Profit$ (note that  $\mathcal{B} (\omega)= \{pb\overline{u}\}$ for each black state $\omega$, and in $pb\overline{u}$, $i$ knows~$Profit$).
\item In the gray states, $i$ is aware of the distinction between $Fraud$ and $\neg Fraud$ and is undecided and uncertain about $Profit$.
\end{itemize}
In $f_1$, $\overline{f}_1$, $f_2$, and $\overline{f}_2$, agent $i$ is overconfident in $Profit$: $i$ believes he knows $Profit$, but he does not, since $Profit$ does not obtain in these states.  As analysts---or as other agents interacting with $i$---we might assign low probability to $f_2$ and $\overline{f}_2$, reflecting the view that $i$ is unlikely to be overconfident when $i$ is aware of the possibility of the sophisticated type of fraud. In general, say that a state $\omega$ determines that agent $i$ has \textit{information-based beliefs}\footnote{Similarly, one could define \textit{information-based $p$-beliefs} by replacing $\nu\in \mathcal{B} (\omega')$ in the displayed condition with $i$'s assigning probability greater than $1-p$ to the event $\mathord{\downarrow}\nu$, assuming an extension of our model with probability as sketched in Section~\ref{Conclusion}.} if for all $\omega'\sqsubseteq\omega$ and $\nu\in \Omega$, if in $\omega'$, $i$ is aware of $\nu$ and $i$'s information is in fact consistent with $\nu$, then in $\omega'$, $i$ does not mistakenly believe he can rule out $\nu$: \[\mbox{if $\nu\in \mathcal{A} (\omega')$ and $\nu\in\mathcal{K} (\omega')$, then $\nu\in\mathcal{B} (\omega')$.}\] If $\omega$ determines that $i$ has information-based beliefs, then $\omega$ determines that $i$ can only have false beliefs if he is unaware of some possibilities that are in fact consistent with his information. In the above example, $i$ has information-based beliefs at all leaves of the tree except for $f_2$ and $\overline{f}_2$. In principle, to what extent false beliefs in a population of agents are correlated with unawareness of possibilities consistent with their information, as distinguished from mistakes in reasoning, exposure to misleading evidence, etc., could be investigated experimentally, though we will not attempt to describe an experimental protocol here.

To sum up, in contrast to models that define awareness in terms of knowledge (recall Section~\ref{Intro}), as in \citealt{Modica1994,Modica1999}, \citealt{Heifetz2006}, and \citealt{Li2009}, we are able to model an agent who is aware of $Profit$ while also being overconfident in $Profit$; and we are able to relate the agent's overconfidence in $Profit$ to his unawareness of $Fraud$.\end{example}

 \begin{remark} Examples \ref{WatsonKnowledge} and \ref{OverconfidentEx} satisfy all the principles of Stalnaker's \citeyearpar[p.~179]{Stalnaker2006} joint logic of knowledge and belief, with negative introspection for belief suitably modified to allow for unawareness: $\textbf{K} (E)\subseteq \textbf{K} (\textbf{K} (E))$ and $\textbf{B} (E)\subseteq \textbf{K} (\textbf{B} (E))$  (Positive Introspection); $\neg\mathbf{B} (E)\cap \mathbf{A} \neg\mathbf{B} (E)\subseteq \mathbf{K} \neg\mathbf{B} (E)$ (Weak Negative Introspection for Belief); and $\mathbf{B} (E)\subseteq\mathbf{B} (\mathbf{K} (E))$ (Strong Belief).\end{remark}

Finally, let us relate our model to the impossibility theorem concerning unawareness from \citealt{Dekel1998}, reviewed in detail in Appendix \ref{DLR}. We already discussed Dekel et al.'s axiom of Plausibility in Example~\ref{WatsonKnowledge}. One can check\footnote{This can be checked by hand for Example \ref{WatsonKnowledge} and using the notebook cited in Section \ref{Organization} for Example \ref{OverconfidentEx}.} that the structures $(\mathcal{E},\subseteq, \mathbf{U} ,\mathbf{K} ,\neg)$ arising from Examples \ref{WatsonKnowledge} and \ref{OverconfidentEx} satisfy all of Dekel et al.~other axioms (in particular, what they call AU Introspection and KU Introspection) and the following weakening of Plausibility:
 \begin{itemize}
 \item[] Nontrivial Plausibility: $\mathbf{U}(E)\subseteq \neg \mathbf{K}(E)$, and if $\neg \mathbf{K}(E)\neq \Omega$, then $\mathbf{U}(E)\subseteq \neg \mathbf{K}\neg \mathbf{K}(E)$.
 \end{itemize}
 Thus, a  slight weakening of Plausibility leads us from  impossibility to the following possibility result. 
 
 \begin{fact}\label{PossibilityFact} There are epistemic possibilities frames such that $(\mathcal{E},\subseteq, \mathbf{U} ,\mathbf{K} ,\neg)$ satisfies the axioms of \citealt[Theorem 1(i)]{Dekel1998} when Plausibility is replaced by Nontrivial Plausibility.\footnote{\label{OtherPrinciples}The frames from Examples \ref{WatsonKnowledge} and \ref{OverconfidentEx} also satisfy the following, building on nomenclature of  \citealt[Prop.~3]{Heifetz2006}: if $\mathbf{K} (E)\neq\varnothing$, then $\mathbf{A} (\mathbf{K} (E))=\mathbf{A} (E)$ (Nontrivial AK-Self Reflection); $\mathbf{A} (\mathbf{A} (E))= \mathbf{A} (E)$ (AA-Self Reflection); and $\mathbf{K} (\mathbf{A} (E)) = \mathbf{A} (E)$ (A-Introspection) (recall Remark \ref{ExtraConditions}). However, we think it would be reasonable for at least the left-to-right inclusions in each of these principles to be violated in other examples, given the distinction between events and sentences we have stressed (e.g., from the facts that $\omega\in\mathbf{A} (F)$ and $F=\mathbf{A} (E)$, it should not be required that $\omega\in\mathbf{A} (E)$, since the agent need not conceive of the event $F$---which has no syntactic structure---in terms of anyone's awareness of $E$). On similar grounds, the second part of Nontrivial Plausibility can also be reasonably doubted as a universal requirement in all cases.}
 \end{fact}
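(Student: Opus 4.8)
The statement is existential, so the plan is simply to produce witnesses and check the axioms on them rather than to carry out any general construction. The natural candidates are the two epistemic possibility frames already built in Examples \ref{WatsonKnowledge} and \ref{OverconfidentEx}; I would claim that the induced structures $(\mathcal{E},\subseteq,\mathbf{U},\mathbf{K},\neg)$ satisfy every axiom of \citealt[Theorem 1(i)]{Dekel1998} except Plausibility, together with Nontrivial Plausibility in its place. Because both frames are finite, the entire verification reduces to a finite computation, which is exactly what makes an existence proof tractable here.

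It is worth separating the axioms that hold in \emph{every} epistemic possibility frame from those that must be checked against the specific examples. On the structural side, $(\mathcal{E},\subseteq)$ is a Boolean algebra with $\neg$ given by (\ref{NegDef}) (Theorem \ref{Tarski2}), so the lattice-theoretic axioms are immediate. Symmetry of unawareness, $\mathbf{U}(E)=\mathbf{U}(\neg E)$, holds by construction: unpacking the definition of $\mathbf{A}$ in Definition \ref{PossFrameAware} and using $\neg\neg E=E$ shows $\mathbf{A}(E)=\mathbf{A}(\neg E)$, whence $\mathbf{U}(E)=\neg\mathbf{A}(E)=\neg\mathbf{A}(\neg E)=\mathbf{U}(\neg E)$. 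The operator $\mathbf{K}$ is factive ($\mathbf{K}(E)\subseteq E$, using $\omega\in\mathcal{K}(\omega)$) and necessitating ($\mathbf{K}(\Omega)=\Omega$, using $\mathbf{A}(\Omega)=\Omega$, the tautology axiom), and it maps $\mathcal{E}$ into $\mathcal{E}$ by Lemma \ref{ROclosure2}. Finally, the first inclusion of Nontrivial Plausibility, $\mathbf{U}(E)\subseteq\neg\mathbf{K}(E)$, is frame-independent: by definition $\mathbf{K}(E)\subseteq\mathbf{A}(E)$, and antitonicity of $\neg$ (Remark \ref{AlgRemark}) then gives $\mathbf{U}(E)=\neg\mathbf{A}(E)\subseteq\neg\mathbf{K}(E)$.

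What remains to be verified on the individual frames are AU Introspection, KU Introspection, and the conditional second clause of Nontrivial Plausibility, namely that $\mathbf{U}(E)\subseteq\neg\mathbf{K}\neg\mathbf{K}(E)$ holds \emph{whenever} $\neg\mathbf{K}(E)\neq\Omega$. For the Watson frame of Example \ref{WatsonKnowledge} this is a short computation by hand: one enumerates the finitely many events $E\in\mathcal{RO}(\Omega,\sqsubseteq)$, reads off $\mathbf{A}(E)$, $\mathbf{K}(E)$, and $\mathbf{U}(E)$ from the explicit correspondences, and compares the relevant inclusions. For the larger overconfidence frame of Example \ref{OverconfidentEx} the identical enumeration is carried out by the companion Jupyter notebook cited in Section \ref{Organization}.

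I expect the only delicate point---and the precise reason the weakening to Nontrivial Plausibility is what is needed---to be the conditional clause. Example \ref{WatsonKnowledge} already exhibits where full Plausibility breaks: taking $E=\mathbf{U}(Barks)$, the state $\overline{b}$ lies in $\mathbf{U}(E)$, yet since $\mathbf{U}(Barks)$ is unknowable one has $\neg\mathbf{K}(E)=\Omega$ and hence $\neg\mathbf{K}\neg\mathbf{K}(E)=\neg\mathbf{K}(\Omega)=\neg\Omega=\varnothing$, so $\overline{b}\notin\neg\mathbf{K}\neg\mathbf{K}(E)$ and the second Plausibility inclusion fails. Nontrivial Plausibility sidesteps exactly this obstruction: its antecedent $\neg\mathbf{K}(E)\neq\Omega$ is false in precisely these cases, rendering the offending inclusion vacuous. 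The content of the final check is thus to confirm that for every $E$ with $\neg\mathbf{K}(E)\neq\Omega$ the inclusion does go through; since the only instances on which it could fail are those with $\neg\mathbf{K}(E)=\Omega$, this is unproblematic, and the verification closes on both frames.
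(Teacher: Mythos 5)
Your proposal matches the paper's own argument: the paper establishes Fact \ref{PossibilityFact} precisely by exhibiting the frames of Examples \ref{WatsonKnowledge} and \ref{OverconfidentEx} and verifying the remaining Dekel et al.\ axioms together with Nontrivial Plausibility by a finite check---by hand for the Watson frame and via the companion notebook for the overconfidence frame---with the first inclusion of Nontrivial Plausibility following frame-independently from $\mathbf{K}(E)\subseteq\mathbf{A}(E)$, just as you say. One caution: your closing claim that ``the only instances on which it could fail are those with $\neg\mathbf{K}(E)=\Omega$'' is itself an outcome of the finite enumeration on these particular frames rather than an a priori fact, so it should be stated as what the check confirms, not as a reason the check is dispensable.
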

 
\section{Representation}\label{Representation}

Having seen how to model some concrete examples involving awareness, knowledge, and belief using possibility frames, we now identify exactly the class of examples that can be so represented. Such an example is specified abstractly by a Boolean algebra of events equipped with awareness, knowledge, and belief operators.

\begin{definition}\label{AwareAlg} An \textit{epistemic awareness algebra} is a tuple $\mathbb{A}=(\mathbb{B}, A, K, B)$ where $\mathbb{B}$ is a Boolean algebra and $A $, $K $, and $B $ are unary operations on $\mathbb{B}$ such that for all $a,b\in \mathbb{B}$ and $\Box \in \{K ,B \}$:
\begin{itemize}
\item\label{AwareOfTop} $A  1=1$ (tautology), $A  a=A \neg a$ (symmetry), and $A a\sqcap A b\leq A (a\sqcap b)$ (agglomeration);
\item $K 1=1$ (knowledge necessitation) and $\Box  a\sqcap \Box  b\leq \Box (a\sqcap b)$ (knowledge/belief agglomeration);
\item if $a\leq b$, then $\Box  a\sqcap A  b\leq \Box  b$ (awareness-restricted monotonicity);
\item $K a\leq a$ (factivity of knowledge) and $B 0=0$ (consistency of belief);
\item $K a\leq B a$ (knowledge-belief entailment) and $B a\leq A a$ (belief-awareness entailment).
\end{itemize}
As usual, an \textit{isomorphism} between epistemic awareness algebras $\mathbb{A}$ and $\mathbb{A}'$ is a bijection from the carrier set of $\mathbb{A}$ to that of $\mathbb{A}'$ that respects the lattice orders and operations: $a\leq b$ if and only if $f(a)\leq' f(b)$; and for each $\triangle \in \{A ,K ,B \} $, we have $f(\triangle (a))=\triangle' (f(a))$.
\end{definition}
Some immediate consequences are that $B 1=1$ (belief necessitation) and $K a\leq A a$ (knowledge-awareness entailment). It also follows that awareness is closed under not only meets but also joins.

\begin{lemma}\label{JoinClosure} For any $a,b\in B$, $ A a \sqcap  A  b\leq  A  (a\sqcup b)$.
\end{lemma}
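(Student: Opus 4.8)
The plan is to derive join-closure of awareness from symmetry and agglomeration via De Morgan duality. The key observation is that in a Boolean algebra we have the De Morgan law $a \sqcup b = \neg(\neg a \sqcap \neg b)$, and the symmetry axiom $A c = A \neg c$ lets us freely toggle between an element and its complement inside the awareness operator. So the strategy is to rewrite $A(a \sqcup b)$ in terms of meets, apply agglomeration, and then undo the complements using symmetry.

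Concretely, first I would use symmetry to observe $A a = A \neg a$ and $A b = A \neg b$, so that $A a \sqcap A b = A \neg a \sqcap A \neg b$. Next I would apply the agglomeration axiom, $A(\neg a) \sqcap A(\neg b) \leq A(\neg a \sqcap \neg b)$, to get $A a \sqcap A b \leq A(\neg a \sqcap \neg b)$. Then I would apply symmetry once more to the right-hand side: by symmetry, $A(\neg a \sqcap \neg b) = A\neg(\neg a \sqcap \neg b)$, and since $\neg(\neg a \sqcap \neg b) = a \sqcup b$ by De Morgan in the Boolean algebra $\mathbb{B}$, we obtain $A(\neg a \sqcap \neg b) = A(a \sqcup b)$. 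Chaining these gives $A a \sqcap A b \leq A(a \sqcup b)$, as desired.

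I do not anticipate a genuine obstacle here; the result is essentially a one-line manipulation, and every step is justified by an explicitly stated axiom (symmetry and agglomeration) together with standard Boolean identities (De Morgan and involutivity of complementation) available since $\mathbb{B}$ is a Boolean algebra. The only point requiring minor care is making sure the two applications of symmetry are applied to the correct arguments and that the De Morgan rewriting $\neg(\neg a \sqcap \neg b) = a \sqcup b$ is stated correctly. The whole argument can be presented as a short equation/inequality chain:
\[
A a \sqcap A b = A(\neg a) \sqcap A(\neg b) \leq A(\neg a \sqcap \neg b) = A\neg(\neg a \sqcap \neg b) = A(a \sqcup b).
\]
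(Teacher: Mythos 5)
Your proof is correct and is exactly the argument in the paper: the same chain $A a \sqcap A b = A\neg a \sqcap A\neg b \leq A(\neg a \sqcap \neg b) = A\neg(\neg a \sqcap \neg b) = A(a \sqcup b)$ using symmetry, agglomeration, and De Morgan. No differences to note.
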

\begin{proof} We have $ A a \sqcap  A  b =  A \neg a\sqcap  A \neg b\leq  A  (\neg a\sqcap\neg b)=  A  \neg (\neg a\sqcap\neg b)= A (a\sqcup b)$.
\end{proof}
 \noindent Of course, we could impose additional axioms, but our representation theorems will be stronger if we can represent \textit{any} epistemic awareness algebra, not just those with special additional properties.

Each epistemic possibility frame gives rise to an epistemic awareness algebra as follows.

\begin{restatable}{proposition}{FrameToAlg}\label{FrameToAlg} If $\mathscr{F}=(\Omega,\sqsubseteq, \mathcal{E}, \mathcal{A}, \mathcal{K},  \mathcal{B})$ is an epistemic possibility frame, then $\mathscr{F}^+=((\mathcal{E},\subseteq), \mathbf{A}, \mathbf{K}, \mathbf{B})$ is an epistemic awareness algebra.
\end{restatable}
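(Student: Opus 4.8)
The plan is to verify each defining axiom of an epistemic awareness algebra for $\mathscr{F}^+=((\mathcal{E},\subseteq), \mathbf{A}, \mathbf{K}, \mathbf{B})$. First I would record the preliminaries: $(\mathcal{E},\subseteq)$ is a Boolean algebra because $\mathcal{E}$ is closed under binary intersection and $\neg$ (Definition \ref{PossFrames}), hence is a subalgebra of the complete Boolean algebra $\mathcal{RO}(\Omega,\sqsubseteq)$ of Theorem \ref{Tarski2}; and $\mathbf{A}$, $\mathbf{K}$, $\mathbf{B}$ genuinely are operations \emph{on} $\mathcal{E}$ by the closure conditions in Definitions \ref{PossFrameAware}.\ref{PossFrameAware3} and \ref{EpistemicFrame}.\ref{Eclosure}.

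Several axioms follow by directly unpacking the definitions of $\mathbf{A}$, $\mathbf{K}$, $\mathbf{B}$. Tautology ($\mathbf{A}(\Omega)=\Omega$) holds because for any $\omega'$ and $\nu\in\mathcal{A}(\omega')$ we have $\mathrm{max}(\Omega\cap\mathord{\downarrow}\nu)=\{\nu\}\subseteq\mathcal{A}(\omega')$ while $\mathrm{max}(\neg\Omega\cap\mathord{\downarrow}\nu)=\mathrm{max}(\varnothing)=\varnothing$. Symmetry ($\mathbf{A}(E)=\mathbf{A}(\neg E)$) is immediate from $\neg\neg E=E$, since the defining condition for $\mathbf{A}$ treats $E$ and $\neg E$ symmetrically. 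Knowledge necessitation follows from tautology; factivity of $\mathbf{K}$ from epistemic factivity ($\omega\in\mathcal{K}(\omega)$); consistency of $\mathbf{B}$ from doxastic consistency ($\mathcal{B}(\omega)\neq\varnothing$); belief--awareness entailment from the literal form of the definition of $\mathbf{B}$; knowledge--belief entailment from doxastic inclusion ($\mathcal{B}(\omega)\subseteq\mathcal{K}(\omega)$); and awareness-restricted monotonicity for $\mathbf{K},\mathbf{B}$ from transitivity of $\subseteq$ together with the awareness conjunct built into the definitions.

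The crux is agglomeration for $\mathbf{A}$, namely $\mathbf{A}(E)\cap\mathbf{A}(F)\subseteq\mathbf{A}(E\cap F)$. Fixing $\omega\in\mathbf{A}(E)\cap\mathbf{A}(F)$, a refinement $\omega'\sqsubseteq\omega$, and $\nu\in\mathcal{A}(\omega')$, I must show both $\mathrm{max}((E\cap F)\cap\mathord{\downarrow}\nu)\subseteq\mathcal{A}(\omega')$ and $\mathrm{max}(\neg(E\cap F)\cap\mathord{\downarrow}\nu)\subseteq\mathcal{A}(\omega')$. For the negative part, since $\neg(E\cap F)=\neg E\sqcup\neg F$ and both $\mathrm{max}(\neg E\cap\mathord{\downarrow}\nu)$ and $\mathrm{max}(\neg F\cap\mathord{\downarrow}\nu)$ lie in $\mathcal{A}(\omega')$ (from $\omega\in\mathbf{A}(E)\cap\mathbf{A}(F)$), awareness joinability applied at $\omega'$ delivers $\mathrm{max}((\neg E\sqcup\neg F)\cap\mathord{\downarrow}\nu)\subseteq\mathcal{A}(\omega')$. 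The positive part is the main obstacle and needs quasi-principality. Given $\mu\in\mathrm{max}((E\cap F)\cap\mathord{\downarrow}\nu)$, I would use that $E\cap\mathord{\downarrow}\nu\in\mathcal{E}$ (awareness expressibility gives $\mathord{\downarrow}\nu\in\mathcal{E}$, and $\mathcal{E}$ is closed under intersection) together with quasi-principality to find $\mu_E\in\mathrm{max}(E\cap\mathord{\downarrow}\nu)$ with $\mu\sqsubseteq\mu_E$; then $\mu_E\in\mathcal{A}(\omega')$ since $\omega\in\mathbf{A}(E)$. Applying the definition of $\mathbf{A}(F)$ at $\omega'$ to the aware possibility $\mu_E$ yields $\mathrm{max}(F\cap\mathord{\downarrow}\mu_E)\subseteq\mathcal{A}(\omega')$, and a second use of quasi-principality produces $\mu'\in\mathrm{max}(F\cap\mathord{\downarrow}\mu_E)$ with $\mu\sqsubseteq\mu'$ and $\mu'\in\mathcal{A}(\omega')$. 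Finally, persistence gives $\mu'\in E$, so $\mu'\in(E\cap F)\cap\mathord{\downarrow}\nu$, and maximality of $\mu$ forces $\mu=\mu'\in\mathcal{A}(\omega')$, closing the positive part.

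The remaining axioms for knowledge and belief then reduce to the $\mathbf{A}$ case. Knowledge/belief agglomeration holds because, for $\mathcal{R}\in\{\mathcal{K},\mathcal{B}\}$, the hypotheses give $\mathcal{R}(\omega)\subseteq E$ and $\mathcal{R}(\omega)\subseteq F$, hence $\mathcal{R}(\omega)\subseteq E\cap F$, while $\omega\in\mathbf{A}(E)\cap\mathbf{A}(F)\subseteq\mathbf{A}(E\cap F)$ by the agglomeration just established. I expect the positive part of $\mathbf{A}$-agglomeration---the interplay of the two applications of quasi-principality with the maximality of $\mu$---to be the only step requiring genuine care; everything else is bookkeeping with the definitions and the frame conditions.
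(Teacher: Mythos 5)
Your proposal is correct and follows essentially the same route as the paper's proof: the negative half of $\mathbf{A}$-agglomeration via awareness joinability and $\neg(E\cap F)=\neg E\sqcup\neg F$, the positive half via quasi-principality to lift $\mu$ to some $\mu_E\in\mathrm{max}(E\cap\mathord{\downarrow}\nu)$ and then applying $\mathbf{A}(F)$ relative to $\mu_E$, with the remaining axioms reduced to routine unpacking. The only (immaterial) difference is at the very end of the positive half, where the paper shows directly that $\mu\in\mathrm{max}(F\cap\mathord{\downarrow}\mu_E)$ using that $E$ is a downset and $\mu_E$ is maximal in $E\cap\mathord{\downarrow}\nu$, whereas you invoke quasi-principality a second time and then collapse $\mu'=\mu$ by maximality and antisymmetry.
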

\noindent An immediate corollary of  $\mathscr{F}^+$ satisfying tautology, symmetry, and agglomeration is that the family of events of which an agent is aware in a possibility forms a Boolean subalgebra of $\mathcal{RO}(\Omega,\sqsubseteq)$.
\begin{corollary}\label{SubalgCor} If $\mathscr{F}=(\Omega,\sqsubseteq, \mathcal{E}, \mathcal{A}, \mathcal{K}, \mathcal{B})$ is an epistemic possibility frame, then for any $\omega\in\Omega$, the family $\{E\in\mathcal{E}\mid \omega\in\mathbf{A} (E)\}$ contains $\Omega$ and is closed under $\cap$ and $\neg$ from (\ref{NegDef}).
\end{corollary}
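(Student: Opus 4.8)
The plan is to read off all three closure properties directly from Proposition~\ref{FrameToAlg}, which guarantees that $\mathscr{F}^+=((\mathcal{E},\subseteq),\mathbf{A},\mathbf{K},\mathbf{B})$ is an epistemic awareness algebra; in particular $\mathbf{A}$ satisfies the awareness axioms of tautology, symmetry, and agglomeration from Definition~\ref{AwareAlg}. Recall that in the Boolean algebra $(\mathcal{E},\subseteq)$ the top element $1$ is $\Omega$, the meet $\sqcap$ is $\cap$, and the complement is the operation $\neg$ from~(\ref{NegDef}). Fixing $\omega\in\Omega$, I would set $S_\omega=\{E\in\mathcal{E}\mid \omega\in\mathbf{A}(E)\}$ and verify each condition by specializing the corresponding algebraic (in)equality for $\mathbf{A}$ to the single point $\omega$.

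First, tautology gives $\mathbf{A}(\Omega)=\Omega$, so $\omega\in\mathbf{A}(\Omega)$ and hence $\Omega\in S_\omega$. Second, for closure under $\neg$: if $E\in S_\omega$, then $\omega\in\mathbf{A}(E)$, and symmetry yields $\mathbf{A}(\neg E)=\mathbf{A}(E)$, so $\omega\in\mathbf{A}(\neg E)$, i.e.\ $\neg E\in S_\omega$; here $\neg E\in\mathcal{E}$ because $\mathcal{E}$ is closed under $\neg$ by Definition~\ref{PossFrames}. Third, for closure under $\cap$: if $E,F\in S_\omega$, then $\omega\in\mathbf{A}(E)\cap\mathbf{A}(F)$, and agglomeration $\mathbf{A}(E)\cap\mathbf{A}(F)\subseteq\mathbf{A}(E\cap F)$ gives $\omega\in\mathbf{A}(E\cap F)$, i.e.\ $E\cap F\in S_\omega$; again $E\cap F\in\mathcal{E}$ since $\mathcal{E}$ is closed under binary intersection.

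There is essentially no obstacle here: the entire mathematical content has already been discharged in Proposition~\ref{FrameToAlg}, and this corollary is merely the pointwise restatement of the three awareness axioms as a closure property of the family $S_\omega$. The only point requiring any care is to confirm that $\neg E$ and $E\cap F$ land back in $\mathcal{E}$, so that $S_\omega$ is genuinely a subfamily of $\mathcal{E}$; this is immediate from the closure requirements built into the definition of a possibility frame. (One may note in passing that, since $S_\omega$ contains the top $\Omega$ and is closed under $\neg$ and $\cap$, it also contains $\varnothing=\neg\Omega$ and is closed under the derived join $\neg(\neg E\cap\neg F)$, so it is in fact a Boolean subalgebra, as anticipated in the discussion preceding the statement.)
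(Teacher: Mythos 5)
Your proof is correct and is exactly the argument the paper intends: the corollary is stated as an immediate consequence of Proposition~\ref{FrameToAlg}, obtained by specializing the tautology, symmetry, and agglomeration axioms for $\mathbf{A}$ to the fixed possibility $\omega$, just as you do. No gaps.
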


We now proceed in the converse direction: given an epistemic awareness algebra, we wish to construct an epistemic possibility frame $\mathscr{F}$ that represents the algebra as $\mathscr{F}^+$. To this end, recall that a \textit{filter} in a Boolean algebra $\mathbb{B}$ is a nonempty set $F$ of elements of $\mathbb{B}$ that is upward closed under the lattice order of $\mathbb{B}$ (if $a\in F$ and $a\leq b$, then $b\in F$) and closed under the meet operation of $\mathbb{B}$ (if $a,b\in F$, then $a\sqcap b\in F$). A filter is \textit{proper} if it does not contain all elements of $\mathbb{B}$. Given an element $a\in \mathbb{B}$, the set $\mathord{\Uparrow}a=\{b\in B\mid a\leq b\}$ is a filter, and a filter $F$ is \textit{principal} if $F=\mathord{\Uparrow}a$ for some $a\in \mathbb{B}$. The following type of construction is used in possibility semantics (\citealt[\S~5.5]{Holliday2015}) but here we must add $\mathcal{A} $ for awareness and modify the treatment of $\mathcal{K} $ and $\mathcal{B} $ due to the awareness-restricted monotonicity of knowledge and belief.

\begin{restatable}{definition}{FilterConstruct}\label{FilterConstruct}Given an epistemic awareness algebra $\mathbb{A}=(\mathbb{B}, A, K, B)$, define   $\mathbb{A}_+=(\Omega,\sqsubseteq, \mathcal{E}, \mathcal{A}, \mathcal{K}, \mathcal{B})$ as follows:
\begin{enumerate}
\item $\Omega$ is the set of all proper filters of $\mathbb{B}$, and $F\sqsubseteq G$ if $F\supseteq G$;
\item $\mathcal{E}=\{\widehat{a}\mid a\in \mathbb{B}\}$ where $\widehat{a}=\{F\in \Omega\mid a\in F\}$;
\item $\mathcal{A} (F)=\{H\in \Omega\mid \mbox{$H$ is the principal filter of an element $a$ such that $ A a\in F$}\}$;
\item\label{FilterConstructK} $\mathcal{K} (F)=\{H\in\Omega\mid \mbox{for all }a_1,\dots,a_n\in\mathbb{B}\mbox{, if }K a_1\sqcup\dots\sqcup K a_n\in F\mbox{, then }a_1\sqcup\dots \sqcup a_n\in H\}$;
\item $\mathcal{B} (F)=\{H\in\Omega\mid \mbox{for all }a_1,\dots,a_n\in\mathbb{B}\mbox{, if }B a_1\sqcup\dots\sqcup B  a_n\in F\mbox{, then }a_1\sqcup\dots\sqcup a_n\in H\}$.
\end{enumerate}
\end{restatable}

In the Appendix, we prove the following main representation theorem.

\begin{restatable}{theorem}{FilterRep}\label{FilterRep} For any epistemic awareness algebra $\mathbb{A}$:
\begin{enumerate}
\item\label{FilterRep1} $\mathbb{A}_+$ is a standard epistemic possibility frame;
\item\label{FilterRep2} the map $a\mapsto\widehat{a}$ is an isomorphism from $\mathbb{A}$ to  $(\mathbb{A}_+)^+$. 
\end{enumerate}
\end{restatable}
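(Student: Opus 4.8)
The plan is to verify the two clauses in turn, extracting from $\mathbb{A}$ the filter-theoretic data that make each frame condition reduce to a Boolean (in)equality. Write $\mathord{\Uparrow}a$ for the principal filter of $a$. I would first record that $a\mapsto\widehat a$ is a Boolean isomorphism onto $\mathcal E$: it is injective and order-reflecting since $\mathord{\Uparrow}a\in\widehat a\setminus\widehat b$ whenever $a\not\le b$, and one checks $\widehat{a\sqcap b}=\widehat a\cap\widehat b$, $\widehat{\neg a}=\neg\widehat a$ (by extending a filter omitting $a$ to one containing $\neg a$), and $\widehat{a\sqcup b}=\widehat a\sqcup\widehat b$ in $\mathcal{RO}(\Omega,\sqsubseteq)$ (using that any proper filter containing $a\sqcup b$ extends to one containing $a$ or one containing $b$). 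The same filter-extension arguments give persistence and refinability for each $\widehat a$, so $\mathcal E\subseteq\mathcal{RO}(\Omega,\sqsubseteq)$; the unit filter $\mathord{\Uparrow}1=\{1\}$ is the maximum $\maximum$; and $\mathrm{max}(\widehat a)=\{\mathord{\Uparrow}a\}$ with $\mathord{\downarrow}\mathord{\Uparrow}a=\widehat a$, giving quasi-principality. This settles the Boolean content of clause \ref{FilterRep2} and much of clause \ref{FilterRep1}.

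For awareness, nonvacuity comes from $A1=1$ (so $\maximum=\mathord{\Uparrow}1\in\mathcal A(F)$); expressibility holds since $\mathord{\downarrow}\mathord{\Uparrow}a=\widehat a\in\mathcal E$; persistence and refinability for $\mathcal A$ follow from $\widehat{Aa}$ being regular open; and joinability reduces, after rewriting $\mathrm{max}(\widehat b\cap\mathord{\downarrow}\mathord{\Uparrow}a)=\{\mathord{\Uparrow}(a\sqcap b)\}$, to the inequality $A(a\sqcap b)\sqcap A(a\sqcap c)\le A(a\sqcap(b\sqcup c))$, which is Lemma \ref{JoinClosure} read inside $F$. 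The decisive identity is $\mathbf A(\widehat a)=\widehat{Aa}$: unpacking the definition, $F\in\mathbf A(\widehat a)$ says that for every $F'\supseteq F$ and every $c$ with $Ac\in F'$ one has $A(a\sqcap c),A(\neg a\sqcap c)\in F'$; the right-to-left inclusion uses agglomeration and symmetry ($Aa=A\neg a$), and the left-to-right inclusion uses the instance $c=1$ with nonvacuity. This simultaneously yields closure of $\mathcal E$ under $\mathbf A$, standardness, and the $A$-clause of the isomorphism.

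For knowledge and belief the key device is, for $\mathcal R\in\{\mathcal K,\mathcal B\}$, the set $\widetilde{\mathcal R}(F)=\{a_1\sqcup\dots\sqcup a_n\mid Ra_1\sqcup\dots\sqcup Ra_n\in F\}$. Using necessitation, $\mathcal R$-agglomeration, distributivity, and factivity/consistency (to exclude $0$), one shows $\widetilde{\mathcal R}(F)$ is a proper filter, whence $\mathcal R(F)=\mathord{\downarrow}\widetilde{\mathcal R}(F)$ is a principal downset. Monotonicity, regularity, factivity ($\widetilde{\mathcal K}(F)\subseteq F$), consistency, and inclusion ($\widetilde{\mathcal K}(F)\subseteq\widetilde{\mathcal B}(F)$ from $Ka\le Ba$) then become one-line filter inclusions via the entailment axioms. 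The explicit-knowledge identity $\mathbf K(\widehat a)=\widehat{Ka}$ (and likewise $\mathbf B$) is where the unusual monotonicity axiom earns its keep: since $\mathbf K(\widehat a)=\{F\mid a\in\widetilde{\mathcal K}(F)\}\cap\widehat{Aa}$ and $\widehat{Ka}\subseteq\widehat{Aa}$ because $Ka\le Aa$, it suffices to show that $a\in\widetilde{\mathcal K}(F)$ and $Aa\in F$ force $Ka\in F$; writing $a=\bigsqcup a_i$ with $\bigsqcup Ka_i\in F$ and applying awareness-restricted monotonicity $Ka_i\sqcap Aa\le Ka$ gives $(\bigsqcup Ka_i)\sqcap Aa\le Ka$, and both conjuncts lie in $F$ (the reverse inclusion is immediate).

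The hard part is the correspondence refinability condition for $\mathcal K,\mathcal B$, and the closely related task of showing $\mathcal E$ is closed under implicit knowledge $E\mapsto\{\omega\mid\mathcal R(\omega)\subseteq E\}$ — subtle precisely because $K$ need not be monotone, so this set is \emph{not} simply $\widehat{Ka}$. For $\mathcal R$-refinability, given $\nu\in\mathcal K(\omega)$ (i.e.\ $\widetilde{\mathcal K}(\omega)\subseteq\nu$) I would take $\omega'$ to be an ultrafilter $U\supseteq\omega$; then $U$ is the only filter refining $\omega'$, so it remains to make $\nu$ compatible with $\widetilde{\mathcal K}(U)$. Using primeness of $U$ and factivity, $\widetilde{\mathcal K}(U)$ is incompatible with $\nu$ exactly when $K\neg d\in U$ for some $d\in\nu$, so it suffices to pick $U$ disjoint from the ideal generated by $\{K\neg d\mid d\in\nu\}$; the filter–ideal separation theorem supplies such a $U$ once $\omega$ is disjoint from that ideal, and this check is where $\widetilde{\mathcal K}(\omega)\subseteq\nu$ and properness combine: if $w\in\omega$ lay below $\bigsqcup_j K\neg d_j$ with $d_j\in\nu$, then $\bigsqcup_j\neg d_j\in\widetilde{\mathcal K}(\omega)\subseteq\nu$ while $\bigsqcap_j d_j\in\nu$, contradicting $(\bigsqcap_j d_j)\sqcap(\bigsqcup_j\neg d_j)=0$. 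I expect this separation argument, together with the parallel identification of the implicit-knowledge sets with some $\widehat b$, to be the main obstacle; everything else is bookkeeping with filters and the defining inequalities.
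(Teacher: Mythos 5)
Your architecture tracks the paper's closely---the same reduction of each frame condition to a Boolean (in)equality, the same key identities $\mathbf{A}(\widehat{a})=\widehat{Aa}$ and $\mathbf{K}(\widehat{a})=\widehat{Ka}$ with the same axioms doing the same work---but you depart from it in two genuinely different ways. First, your observation that $\widetilde{\mathcal{R}}(F)$ is itself a proper filter, so that $\mathcal{R}(F)$ is the principal downset $\mathord{\downarrow}\widetilde{\mathcal{R}}(F)$, is a real streamlining: the paper re-runs the choice-function/distributivity computation (the step at (\ref{DistAglom})) separately inside $\mathcal{R}$-refinability, doxastic consistency, and the preservation of $K$, whereas you pay that cost once; and your derivation of $Ka\in F$ from $a\in\widetilde{\mathcal{K}}(F)$ and $Aa\in F$ via $(\bigsqcup_i Ka_i)\sqcap Aa\leq Ka$ is cleaner than the paper's contrapositive argument. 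Second, your treatment of $\mathcal{R}$-refinability via an ultrafilter $U\supseteq F$ separated from the ideal generated by $\{K\neg d\mid d\in H\}$ is correct---your characterization of incompatibility works because $H$ is upward closed, so $Kc\in U$ with $c\sqcap d=0$ and $d\in H$ gives $\neg c\in H$ and $Kc=K\neg(\neg c)$---but it invokes the prime filter separation theorem. The paper deliberately avoids this: its witness $F'$ is built directly as the filter generated by $F\cup\{\neg\Box c\mid \exists b\in H:\ c\leq\neg b\}$, and the text after Theorem \ref{FilterRep} emphasizes that the representation works without the Axiom of Choice. Your route proves the theorem as stated but sacrifices that feature.

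The one substantive gap is the item you yourself flag and defer: closure of $\mathcal{E}$ under the implicit operators $E\mapsto\{\omega\mid\mathcal{R}(\omega)\subseteq E\}$ required by Definition \ref{EpistemicFrame}.\ref{Eclosure}. You are right that this is the delicate point, and ``I expect this to be the main obstacle'' is not a proof. Concretely, $\{F\mid\mathcal{K}(F)\subseteq\widehat{a}\}=\{F\mid a\in\widetilde{\mathcal{K}}(F)\}$, which is the union of the sets $\widehat{Ka_1\sqcup\dots\sqcup Ka_n}$ over all finite decompositions $a=a_1\sqcup\dots\sqcup a_n$; this is a directed union (concatenate decompositions), and it equals some $\widehat{b}$ exactly when the family of values $Ka_1\sqcup\dots\sqcup Ka_n$ has a maximum---automatic for finite $\mathbb{B}$, but not obviously guaranteed by the axioms in general, since awareness-restricted monotonicity only bounds the family by $Ka\sqcup\neg Aa$. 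Note that you cannot simply import this step: the paper's printed proof verifies closure under the explicit operators via part \ref{FilterRep2} but does not explicitly verify Definition \ref{EpistemicFrame}.\ref{Eclosure} either. So either supply an argument that the directed family has a maximum under the stated axioms, or isolate this as the genuinely open piece of your write-up rather than ``bookkeeping.''
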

\noindent This result is analogous to the Stone Representation Theorem for Boolean algebras (\citealt{Stone1936}) (or more precisely, the Stone-like representation without the Axiom of Choice\footnote{Using the Axiom of Choice, the set $\Omega$ in Definition \ref{FilterConstruct} may be cut down to just the set of proper filters that are maximal (i.e., ultrafilters) or principal.} in \citealt{Holliday2015} and \citealt{BH2020}) but with awareness, knowledge, and belief added alongside Boolean operations.

\begin{remark} In the case when $\mathbb{A}$ is finite, we can cut the set $\Omega$ in Definition \ref{FilterConstruct} down to just the principal filters. In fact, one can typically use a much smaller quasi-principal possibility frame, as in Examples \ref{GameEx} and \ref{OverconfidentEx}; one need only add enough partial possibilities to witness an agent's unawareness of events, rather than a partial possibility for each proposition from $\mathbb{A}$. For example, while the possibility frame in Example~\ref{GameEx} has 37 possibilities, its associated algebra has $2^{20}=1,048,576$ events. In practical modeling, we usually construct a concise possibility frame $\mathscr{F}$ and then calculate its algebra $\mathscr{F}^+$ of events, instead of starting with a very large algebra $\mathbb{A}$ of events and then applying Theorem \ref{FilterRep} to obtain a possibility frame $\mathbb{A}_+$. The point of Theorem \ref{FilterRep} is rather to show that possibility frames do not unreasonably limit what we can model.\end{remark}

\section{Conclusion}\label{Conclusion}

Theorem \ref{FilterRep} shows that epistemic possibility frames are capable of representing any scenario involving awareness of events, plus knowledge and belief---including that of multiple agents, given the obvious multi-agent generalization of everything above---provided some basic axioms are satisfied in the scenario. Thus, as far as event-based approaches to awareness are concerned, epistemic possibility frames provide a highly versatile modeling tool. We conclude by mentioning several avenues for further development.

First, we can immediately use our possibility frames to interpret a formal logical language for reasoning about unawareness and uncertainty, following standard practice in computer science (\citealt{Fagin1995}, \citealt{Halpern2003}) and some work in economics (e.g., \citealt{Board2004}, \citealt{Heifetz2008}, \citealt{Alon2014a}). We simply turn an epistemic possibility frame into an \textit{epistemic possibility model} for a propositional  language by  equipping the frame with a valuation $V$ of atomic formulas such that $V(\mathtt{p})\in \mathcal{E}$ for each atomic formula $\mathtt{p}$ of the language. One can then recursively define the interpretation of complex formulas built up using negation and  conjunction and modalities for awareness, knowledge, and belief for each agent, using the operations in the epistemic awareness algebra arising from the epistemic possibility frame. Finally, one can easily turn our main representation theorem (Theorem \ref{FilterRep}) into a strong completeness theorem for a  modal logic of awareness, knowledge, and belief with axioms matching those of epistemic awareness algebras.

A logical approach can then be extended to more expressive languages than that of propositional modal logic, such as propositional modal logic with \textit{propositional quantifiers}, as investigated in \citealt{Halpern2009,Halpern2013}. In such a language, one may express that an agent knows that there is some event of which she is unaware: $\mathtt{K} \exists\mathtt{p} \, \mathtt{U}  \mathtt{p}$. There is a mathematically elegant semantics for modal logic with propositional quantifiers using complete Boolean algebras (see, e.g., \citealt{Holliday2019}, \citealt{Ding2020}) and hence a corresponding semantics using possibility frames in which $\mathcal{E}=\mathcal{RO}(\Omega,\sqsubseteq)$ (\citealt[\S~5.1]{HollidayForthB}), which one could consider for applications to unawareness. Ding \citeyearpar{Ding2020} solves a related problem: modeling a \textit{modest} agent who believes that she must have some false belief, written as $\mathtt{B}  \exists \mathtt{p}(\mathtt{B} \mathtt{p}\wedge \neg\mathtt{p})$, though she does not know which belief it is.

There is also the possibility of using a formal language not just to reason about \textit{awareness of events} as modeled in this paper but also to model \textit{awareness of sentences}, perhaps even developing a sentence-based approach on top of our event-based approach. For example, assume that for every event $E\in \mathcal{E}$ in our model, there is some atomic formula $\mathtt{p}$ for which $V(\mathtt{p})=E$.  For an arbitrary formula $\varphi$ of the modal language, to formalize awareness of the formula $\varphi$ itself, we could say that $\mathtt{A} \varphi$ is true at a possibility $\omega$ just in case for every subformula $\psi$ of $\varphi$, we have $\omega\in \mathbf{A} \llbracket \psi\rrbracket$, where $\llbracket \psi\rrbracket$ is the set of possibilities at which $\psi$ is true (an event). This would capture the idea that being aware of complex formulas such as $\varphi\wedge\psi$ or $\varphi\vee\neg\varphi$ requires being aware of $\varphi$ and of $\psi$. However, this still has the consequence that if two atomic formulas $\mathtt{p}$ and $\mathtt{q}$ are true in exactly the same states, then $i$ is aware of $\mathtt{p}$ if and only if $i$ is aware of $\mathtt{q}$; to avoid this consequence, one could directly encode awareness of atomic formulas, as in \citealt{Fagin1988}, or one could associate with each possibility its own language and then require for the truth of $\mathtt{A} \varphi$ at $\omega$ that $\varphi$ belong to the languages associated with each possibility witnessing awareness of $\llbracket \varphi\rrbracket$ as in Definition \ref{PossFrameAware}.\ref{PossFrameAware3}.\footnote{Thanks to an anonymous reviewer for this latter suggestion.}
 
 Finally, though here we have focused on knowledge and qualitative belief, we can also add probability to our frames (cf.~\citealt{Aumann1999b}, \citealt{Heifetz2013}). This can be done by assigning to each possibility $\omega$ a set $\mathcal{P}_\omega$ of probability measures on the Boolean algebra $\{E\in \mathcal{E}\mid \omega\in \mathbf{A} (E)\}$ of events of which the agent is aware in $\omega$. The reason for allowing a \textit{set} of measures---besides wanting to allow multi-prior representations of uncertainty---is that a possibility $\omega$ may be \textit{partial}, not settling exactly which probability measure captures the agent's subjective probabilities, leaving us with a set of measures to be narrowed down by further refinements of $\omega$. Appropriate persistence and refinability conditions relating the sets $\mathcal{P}_\omega$ for different possibilities $\omega$  ensure that certain probabilistic events, such as the agent $p$-believing $E$ (\citealt{Monderer1989}) or judging that $E$ is at least as likely as $F$ (cf.~\citealt{Alon2014}, \citealt{Alon2014a}), will themselves belong to $\mathcal{RO}(\Omega,\sqsubseteq)$, so we can require that they belong to $\mathcal{E}$. Thus, a full apparatus of awareness, knowledge, and probability for multiple agents could be developed using possibility frames, enabling applications to decision theory and game theory with unawareness.

 \subsection*{Acknowledgements}
 
For helpful feedback, I thank the anonymous reviewers for \textit{Journal of Mathematical Economics}, as well as Mathias Boehm, Peter Fritz, John Hawthorne, Mikayla Kelley, Harvey Lederman, Matthew Mandelkern, Eric Pacuit, Burkhard Schipper, Snow Zhang, the audience in my course at NASSLLI 2022 at the University of Southern California, and the audience at my talk in the Zoom Workshop on Epistemic Logic with Unawareness in December 2022 organized by Burkhard Schipper.

\appendix

\section{The DLR impossibility theorem}\label{DLR}

\subsection{General formulation}\label{DLRgeneral}

Our goal in this paper is to model awareness of events as opposed to sentences. Famously, Dekel, Lipman, and Rustichini \citeyearpar{Dekel1998} prove theorems that are supposed to raise problems for such models. In this Appendix, we explain why our project is not doomed by these results. Although Dekel et al.~phrase their theorems in terms of state-space models, they are much more general. Let $\mathcal{E}$ be a nonempty set and $\leq$ a preorder on $\mathcal{E}$ with a minimum element $0$. Think of elements of $\mathcal{E}$ as events and $\leq$ as the entailment relation between events. For example,  given a nonempty set $\Omega$, we could take $\mathcal{E}$ to be the powerset of $\Omega$, $\leq \,=\,\subseteq$, and $0=\varnothing$, but this is just one example. Next, we assume maps  $U:\mathcal{E}\to\mathcal{E}$ for \textit{unawareness}, $K:\mathcal{E}\to\mathcal{E}$ for \textit{knowledge}, and $\neg:\mathcal{E}\to\mathcal{E}$ for \textit{negation}.\footnote{Stipulating some operations $U$ and $K$ on $\mathcal{E}$ is of course not to provide any illuminating model of unawareness and knowledge, but this abstract setup will be useful for stating impossibility theorems. The actual models of unawareness cited in this paper  (e.g., \citealt{Heifetz2006}, \citealt{Li2009}, \citealt{Fritz2015}, and our model) all attempt to \textit{represent} unawareness operations using more concrete structures. To provide any additional insight or representational succinctness beyond stipulating a primitive $U:\mathcal{E}\to\mathcal{E}$, the model must derive such an operation from more concrete relations, correspondences, etc.~on a set of states. For an example in which this requirement is not satisfied, note that in a state-space model based on a field of sets $(\Omega,\mathcal{E})$, stipulating an operation $U:\mathcal{E}\to\mathcal{E}$ is equivalent to stipulating a neighborhood function $N_U:\Omega\to\wp(\mathcal{E})$ via the definition: $E\in N_U(\omega)$ if and only if $\omega\in U(E)$. Thus, this repackaging with a neighborhood function (which merely lists the events of which an agent is supposed to be unaware at a state $\omega$) offers no additional insight or representational succinctness.} Suppose these maps satisfy the following axioms for all $E\in\mathcal{E}$: 
\begin{center}\vspace{-.15in}
\begin{minipage}{4in}
\begin{itemize}
\item $U(E)\leq U(U(E))$ (AU Introspection); 
\item $U(E)\leq \neg K(E)$ and $U(E)\leq \neg K\neg K(E)$ (Plausibility);
\item $K(U(E))=0$ (KU Introspection); 
\end{itemize}
\end{minipage}\begin{minipage}{3in}
\begin{itemize}
\item $K(\neg 0)=\neg 0$ (Necessitation);
\item $\neg\neg 0=0$ (Double Negation).
\end{itemize}
\end{minipage}
\end{center}
AU Introspection says that if an agent is unaware of $E$, then she is unaware that she is unaware of $E$, while KU introspection says that it is impossible for an agent to know that she is unaware of a specific event $E$. Necessitation says that the agent knows the trivial event $\neg 0$, namely $\Omega$ in state-space models, and Double Negation says that negation is involutive on $0$ as in Boolean algebras (and many generalizations thereof).

Plausibility is one direction of the Modica-Rustichini \citeyearpar{Modica1994} definition of unawareness in terms of knowledge. In fact, only the second half of Plausibility ($U(E)\leq \neg K\neg K(E)$) is used for the following result.
\begin{proposition}[\citealt{Dekel1998}, Theorem 1(i)]\label{Dekel1} Assuming the axioms above, $U(E)=0$ for all $E\in\mathcal{E}$.
\end{proposition}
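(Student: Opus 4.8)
The plan is to apply the second half of Plausibility not to the event $E$ directly but to the event $U(E)$. Applied to $E$, Plausibility gives $U(E) \leq \neg K \neg K(E)$, but there is no handle on the inner term $K(E)$, so nothing simplifies. The key observation is that for an event of the special form $U(F)$, the inner knowledge term \emph{does} collapse: KU Introspection says $K(U(F)) = 0$. So I would instantiate Plausibility at $U(E)$ and then drive the right-hand side down to $0$ using the remaining axioms.

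Concretely, first I compute $\neg K \neg K(U(E))$ from the inside out. By KU Introspection, $K(U(E)) = 0$, hence $\neg K(U(E)) = \neg 0$. Applying $K$ and invoking Necessitation gives $K(\neg K(U(E))) = K(\neg 0) = \neg 0$. Applying $\neg$ and invoking Double Negation gives $\neg K \neg K(U(E)) = \neg \neg 0 = 0$. Plausibility instantiated at $U(E)$ then reads $U(U(E)) \leq \neg K \neg K(U(E))$, and since the right-hand side equals $0$, we obtain $U(U(E)) \leq 0$.

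Finally I would close the loop with AU Introspection, which supplies $U(E) \leq U(U(E))$; combining this with $U(U(E)) \leq 0$ and the transitivity of the preorder yields $U(E) \leq 0$, and hence $U(E) = 0$. This uses every axiom except the first half of Plausibility, exactly as the statement advertises. I do not expect a serious obstacle: the only genuinely non-mechanical step is the initial decision to feed $U(E)$ rather than $E$ into Plausibility, after which the computation is essentially forced by KU Introspection, Necessitation, and Double Negation. The one point to handle carefully is that $\leq$ is merely a preorder, so the conclusion ``$U(E) = 0$'' should be understood as $U(E) \leq 0$ (equivalently, $U(E)$ is $\leq$-equivalent to $0$, since $0 \leq U(E)$ holds automatically because $0$ is the minimum); no antisymmetry is needed for the intended result.
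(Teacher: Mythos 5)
Your proposal is correct and is essentially identical to the paper's proof: the same chain $U(E)\leq U(U(E))\leq \neg K\neg K(U(E)) = \neg K\neg 0 = \neg\neg 0 = 0$ using AU Introspection, the second half of Plausibility instantiated at $U(E)$, KU Introspection, Necessitation, and Double Negation. Your closing remark about the preorder is a reasonable clarification the paper leaves implicit.
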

\begin{proof} $U(E)\leq U(U(E))\leq \neg K\neg K(U(E))\leq \neg K \neg 0=\neg \neg 0=0$. 
\end{proof}
\noindent Thus, unawareness is contradictory and hence impossible assuming the axioms above.\footnote{Dekel et al.~prove another result (Theorem 1(ii)) that replaces Necessitation with the Monotonicity of $K$, i.e., if $E\leq F$, then $K(E)\leq K(F)$, but this principle is unacceptable assuming knowledge requires awareness, for reasons similar to those for rejecting monotonicity of awareness in Section \ref{Intro} (cf.~\citealt[p.~123]{Modica1994}, \citealt{Modica1999}).}  

Note, by contrast, that our argument concerning overconfidence at the beginning of Section \ref{Intro} targeted instead the converse of Plausibility, namely $\neg K(E)\sqcap \neg K\neg K(E)\leq U(E)$ (Converse Plausibility), where we now assume that for all $E,F\in\mathcal{E}$, there is a greatest lower bound $E\sqcap F$ of $\{E,F\}$ in $(\mathcal{E},\leq)$.

\begin{proposition} In addition to $K$, $U$, and $\neg$, assume a map $B:\mathcal{E}\to\mathcal{E}$ such that for all $E\in\mathcal{E}$, $B(E)\leq \neg K\neg E$ (Noncontradictory Belief and Knowledge). Further assume Converse Plausibility. Then for all $E\in\mathcal{E}$, we have (i) $B(K(E))\sqcap \neg K(E)\leq U(E)$. If in addition $B(E)\sqcap U(E)= 0 $ (Belief Requires Awareness), then (ii) $B(E) \sqcap B(K(E))\sqcap \neg K(E)=0$.
\end{proposition}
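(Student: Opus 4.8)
The plan is to reduce both parts to Converse Plausibility by supplying the right argument to the belief hypothesis. The crucial observation for part (i) is that Noncontradictory Belief and Knowledge should be instantiated not at $E$ but at the event $K(E)$: substituting $F=K(E)$ into $B(F)\leq \neg K\neg F$ yields $B(K(E))\leq \neg K\neg K(E)$. This is exactly the second conjunct appearing on the left-hand side of Converse Plausibility, so the rest is a short chain of inequalities. First I would record the elementary monotonicity of meet in a meet-semilattice: if $a\leq b$ then $a\sqcap c\leq b\sqcap c$, since $a\sqcap c\leq a\leq b$ and $a\sqcap c\leq c$ exhibit $a\sqcap c$ as a lower bound of $\{b,c\}$ and hence $a\sqcap c\leq b\sqcap c$. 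Meeting both sides of $B(K(E))\leq \neg K\neg K(E)$ with $\neg K(E)$ then gives $B(K(E))\sqcap \neg K(E)\leq \neg K(E)\sqcap \neg K\neg K(E)$ (using commutativity of $\sqcap$), and Converse Plausibility bounds the right-hand side by $U(E)$. Chaining these inequalities yields (i).

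For part (ii), I would meet the conclusion of (i) with $B(E)$ on both sides. From $B(K(E))\sqcap \neg K(E)\leq U(E)$ and monotonicity of meet, I get $B(E)\sqcap B(K(E))\sqcap \neg K(E)\leq B(E)\sqcap U(E)$, regrouping the threefold meet by associativity and commutativity of $\sqcap$. Belief Requires Awareness collapses the right-hand side to $0$, so the left-hand side is $\leq 0$; since $0$ is the minimum of $(\mathcal{E},\leq)$, it follows that $B(E)\sqcap B(K(E))\sqcap \neg K(E)=0$.

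Both arguments are one-line inequality chains, so there is no genuine obstacle beyond bookkeeping. The only point requiring a little care is that $\leq$ is merely a preorder rather than a partial order, so the binary meets $\sqcap$ (assumed to exist in the hypotheses of the proposition) are determined only up to the equivalence induced by $\leq$, and equalities such as $B(E)\sqcap U(E)=0$ and the final conclusion should be read up to that equivalence; I would flag explicitly that the semilattice laws of monotonicity, commutativity, and associativity used above are licensed precisely because binary meets were assumed. All the conceptual content lives in the single substitution $E\mapsto K(E)$ in Noncontradictory Belief and Knowledge, which manufactures the term $\neg K\neg K(E)$ needed to trigger Converse Plausibility.
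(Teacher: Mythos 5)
Your proposal is correct and matches the paper's own proof essentially line for line: both instantiate Noncontradictory Belief and Knowledge at $K(E)$ to obtain $B(K(E))\leq \neg K\neg K(E)$, meet with $\neg K(E)$ to trigger Converse Plausibility for (i), and then meet with $B(E)$ and apply Belief Requires Awareness for (ii). The extra remarks on monotonicity of meet and on reading equalities up to the equivalence induced by the preorder are careful but add nothing beyond the paper's argument.
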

\begin{proof} For (i), $B(K(E))\leq \neg K\neg K(E)$, so $B(K(E))\sqcap \neg K(E)\leq \neg K(E)\sqcap \neg K\neg K(E)\leq U(E)$. Then for (ii), $B(E) \sqcap B(K(E))\sqcap \neg K(E)\leq B(E)\sqcap U(E)=0$.
\end{proof}
\noindent Thus, Converse Plausibility essentially renders overconfidence impossible, given that the other axioms are uncontroversial. We will also give an argument against the second half of Plausibility below.

Although Dekel et al.~originally framed their Theorem 1(i) as a result about standard state-space models, Proposition \ref{Dekel1} applies to any model of awareness that provides a structure $(\mathcal{E},\leq,0,U,K,\neg)$. For example, the model of Heifetz, Meier, and Schipper \citeyearpar{Heifetz2006} provides a set $\mathcal{E}$ constructed as certain pairs $(E,S)$ of sets ordered by $(E,S)\leq (E',S')$ if $E\subseteq E'$ and $S'\preceq S$, where $\preceq$ is a complete lattice order (see~\citealt[\S\S~2.1, 2.3, 2.5]{Schipper2013}). Hence the minimum element is $0=(\varnothing,S_{\top})$ where $S_\top$ is the maximum element of $\preceq$. The model also provides maps $U$, $K$, and $\neg$ from $\mathcal{E}$ to $\mathcal{E}$. Since this model allows for unawareness, it must reject one of Dekel et al.'s axioms. Indeed, it rejects $K(U(E))=0$ (KU Introspection). Heifetz et al.~instead set $K(U(E))$ to be a certain non-minimum element $(\varnothing,S(E))$ of $(\mathcal{E},\leq)$; although they denote that non-minimum element by $\emptyset^{S(E)}$ and refer to the axiom $K(U(E))=\emptyset^{S(E)}$ as ``KU Introspection,'' the event $\emptyset^{S(E)}$ is not equal to the minimum element 0 in  $(\mathcal{E},\leq)$, so their ``KU Introspection'' is not Dekel et al.'s axiom.\footnote{In response to this point, I have received the reply that the model of Heifetz et al.~preserves the spirit of Dekel et al.'s KU Introspection even though it does not preserve the mathematical letter of the axiom. However, the point that their model violates the axiom $K(U(E))=0$ is not merely a mathematical point; to instead set $K(U(E))=(\varnothing,S(E))$ is made possible by their abandoning the classical view that events form a Boolean algebra (see Remark \ref{AlgRemark}), which is conceptually significant.} The model of \citealt{Li2009} similarly rejects Dekel et al.'s KU Introspection axiom.

Like Heifetz et al.~and Li, Fritz and Lederman \citeyearpar[\S~2.3]{Fritz2015} respond to Dekel et al.'s result by rejecting their package of axioms. Taking inspiration from a distinction Dekel et al.~draw between ``real'' and ``subjective'' states in state-space models, Fritz and Lederman argue that AU Introspection, Plausibility, and KU Introspection need only hold with respect to some distinguished set $S$ of states, rather than with respect to the set $\Omega$ of all states. This means restricting the axioms as follows: $U(E)\sqcap S\leq U(U(E))$; $U(E)\sqcap S\leq \neg K(E)$ and $U(E)\sqcap S\leq \neg K\neg K(E)$;   and $K(U(E))\sqcap S=0$. Fritz and Lederman \citeyearpar[Theorem~2]{Fritz2015}  show that there are state-space models of unawareness of events satisfying these restricted axioms, plus Necessitation and Double Negation, in which unawareness is possible. More recently, Fukuda \citeyearpar{Fukuda2021} has studied other models violating AU Introspection in particular.

Thus, there is precedent in the literature for rejecting one or more of the axioms of Dekel et al. In the rest of this appendix, we will raise doubts about their axioms in light of the distinction between awareness of events vs.~awareness of sentences.\footnote{After writing this paper, I learned from Harvey Lederman that \citealt{Elliot2022} raises similar doubts about the axioms.}

\subsection{Plausibility}\label{PlausibilityAppendix}

In particular, we will raise doubt about the second half of Plausibility: $U(E)\leq \neg K\neg K(E)$. Suppose $F$ is some in principle \textit{unknowable} event, meaning $K(F)=0$, such that our agent is also unaware of $F$. According to KU Introspection, $U(E)$ is an example of such an unknowable $F$, but we need not use KU Introspection; under highly plausible assumptions, $F:=E\sqcap \neg U(E)\sqcap \neg K(E)$ is unknowable\footnote{$E\sqcap \neg K(E)$ is the classic example of an unknowable event from what is known as Fitch's paradox (\citealt{Fitch1963}); for example, can Ann know the event expressed by ``Bob played left but Ann doesn't know it''? We could use this simpler event and the axiom $K(E\sqcap \neg K(E))=0$, but we will instead derive $K(E\sqcap \neg U(E)\sqcap \neg K(E))=0$ from other axioms.}  (see the proof of Proposition \ref{FitchProp}). Now since $F$ is unknowable,  $\neg K(F)$ is equal to $\Omega$ (or abstractly, $\neg 0$). But any agent knows $\Omega$ (again cf.~\citealt{Stalnaker1984}), being careful not to confuse $\Omega$ with any particular sentence true throughout $\Omega$. Thus, the agent's unawareness of $F$ contradicts Plausibility, which implausibly implies that an agent who knows $\Omega$ must be aware of every unknowable event. It is surely true that knowledge of a \textit{sentence} $\neg\mathtt{K}\varphi$ implies awareness of the \textit{embedded sentence} $\varphi$. But unlike sentences, events do not intrinsically ``embed'' other events; and there is no reason to suppose that an agent who knows $\Omega$ thinks about the event $\Omega$ as the event expressed by the sentence $\neg\mathtt{K}\varphi$.\footnote{Where $A(E)=\neg U(E)$, these points also show the problem with the principle $A(\neg K(E))\leq A(E)$ when $\neg K(E)$ is $\Omega$. Note that this principle follows from $ A (K(E))\leq A(E)$ (one direction of AK-Self Reflection in \citealt[Prop.~3]{Heifetz2006}) and $A(F)=A(\neg F)$ (symmetry), casting doubt on the former when $K(E)=0$. See Footnote \ref{OtherPrinciples} for a restricted principle.} See Example \ref{WatsonKnowledge} for a similar point. Now let us prove the claim above.  

\begin{proposition}\label{FitchProp} Assume Necessitation and Double Negation as above. Further suppose that for all $E,F\in\mathcal{E}$, we have $E\sqcap \neg E=0$ (Noncontradiction), $K(E)\leq E$ (Factivity), and that if $E\leq F$, then $K(E)\sqcap \neg U(F)\leq K(F)$ (Awareness-restricted Monotonicity, which appears in Section \ref{Representation}). Finally, assume the second half of Plausibility. Then for every $E\in \mathcal{E}$, we have $U(E\sqcap \neg U(E)\sqcap\neg K(E))=0$.
\end{proposition}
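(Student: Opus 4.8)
The plan is to follow the classic Fitch-style unknowability argument, adapted to use \emph{Awareness-restricted Monotonicity} in place of full monotonicity of knowledge, and then feed the resulting unknowability into the second half of Plausibility. Write $F:=E\sqcap \neg U(E)\sqcap\neg K(E)$. The first and main goal is to show that $F$ is unknowable, i.e.\ $K(F)=0$. The second, routine, goal is to deduce $U(F)=0$ from $K(F)=0$ using Necessitation and Double Negation.

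For the unknowability step, I would argue as follows. By Factivity, $K(F)\leq F$, and since $F$ is a meet of $E$, $\neg U(E)$, and $\neg K(E)$, we get in particular $K(F)\leq E$, $K(F)\leq \neg U(E)$, and $K(F)\leq \neg K(E)$. Now $F\leq E$, so Awareness-restricted Monotonicity (with the antecedent $F\leq E$) yields $K(F)\sqcap \neg U(E)\leq K(E)$. The crucial observation is that $K(F)\leq \neg U(E)$ lets us discharge the awareness premise: since $K(F)\leq \neg U(E)$ we have $K(F)\sqcap\neg U(E)=K(F)$, and hence $K(F)\leq K(E)$. This is exactly where the restricted monotonicity principle does the work that full monotonicity would otherwise do. Combining $K(F)\leq K(E)$ with $K(F)\leq \neg K(E)$ gives $K(F)\leq K(E)\sqcap \neg K(E)=0$ by Noncontradiction (instantiated at $K(E)$), so $K(F)=0$.

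For the final step, with $K(F)=0$ in hand, the second half of Plausibility gives $U(F)\leq \neg K\neg K(F)$. Since $K(F)=0$ we have $\neg K(F)=\neg 0$, so Necessitation ($K(\neg 0)=\neg 0$) gives $K\neg K(F)=\neg 0$, and then Double Negation ($\neg\neg 0=0$) gives $\neg K\neg K(F)=\neg\neg 0=0$. Hence $U(F)\leq 0$, and since $0$ is the minimum this forces $U(E\sqcap \neg U(E)\sqcap\neg K(E))=U(F)=0$, as desired.

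The only genuinely non-mechanical point is the unknowability step, and within it the single trick of absorbing $\neg U(E)$ into $K(F)$ so that Awareness-restricted Monotonicity delivers $K(F)\leq K(E)$; once that is seen, the contradiction with $K(F)\leq\neg K(E)$ via Noncontradiction is immediate, and the passage from unknowability to $U(F)=0$ is a direct substitution using Necessitation and Double Negation. Everything else is just bookkeeping with the meet structure, which is available because the proposition assumes binary meets exist in $(\mathcal{E},\leq)$.
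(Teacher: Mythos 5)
Your proposal is correct and matches the paper's own proof essentially step for step: Factivity gives $K(F)\leq F\leq\neg U(E)$ so that the awareness premise of Awareness-restricted Monotonicity is absorbed, yielding $K(F)\leq K(E)$, which contradicts $K(F)\leq\neg K(E)$ to give $K(F)=0$, and then Necessitation, Double Negation, and the second half of Plausibility give $U(F)=0$. No gaps; this is the same argument.
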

\begin{proof} Where $E'= E\sqcap \neg U(E)\sqcap \neg K(E)$, we have $K(E')\leq E'\leq \neg U(E)$ by Factivity, which implies $K(E')\leq K(E')\sqcap \neg U(E)\leq K (E)$ using Awareness-restricted Monotonicity with  $E'\leq E$. But we also have $K(E')\leq E'\leq \neg K(E)$ using Factivity. Thus, $K(E')\leq K(E)\sqcap\neg K(E)=0$, so $K(E')=0$. Hence by Necessitation, $K\neg K(E')=\neg 0$. Then by the second half of Plausibility, $U(E')\leq \neg K\neg K(E')= \neg\neg 0=0$.\end{proof}
\noindent For $E'$ as in the proof, we should be able to have $E'\neq 0$ and $E' \neq \neg 0$, so the idea that it is impossible to be unaware of $E'$ is highly counterintuitive; indeed, Example \ref{OverconfidentEx} shows how such unawareness is possible.\footnote{\label{FitchNote}In Example \ref{OverconfidentEx}, if we define $Fraud'=Fraud\cap \mathbf{A} (Fraud)\cap \neg\mathbf{K} (Fraud)$, then $Fraud'=\{f_2,f_4\}$ and $U(Fraud')$ is the set of blue and green states.}  We take Proposition \ref{FitchProp}  to cast doubt on the second half of Plausibility, rather than any of the other axioms. Fortunately, Fact \ref{PossibilityFact} shows that a slight weakening of Plausibility delivers us away from Dekel et al.'s~\citeyearpar{Dekel1998} impossibility theorem to a possibility result.

Finally, we address whether the problems with the Modica-Rustichini definition of unawareness in terms of knowledge might be solved by changing the definition to use belief: $U(E)=\neg B(E)\sqcap \neg B\neg B(E)$. The answer is that under a standard assumption of  models of rational belief, namely that  $B(E\sqcap \neg B(E))=0$ (No Moorean Beliefs\footnote{The name is taken from what is known as Moore's paradox (see \citealt[\S~4.5]{Hintikka2005} and \citealt{Holliday2010}). The standard axiomatization of the   \textit{implicit belief} operator (not requiring awareness), denoted by $L$ in \citealt{Fagin1988}, entails No Moorean Belief for $L$ in place of $B$. Then given that explicit belief (requiring awareness) entails implicit belief, we can derive $B(E\sqcap\neg L(E))=0$ and replace the conclusion of Proposition \ref{MooreProp} with the equally unappealing $U(E\sqcap \neg L(E))=0$.}), the same kind of argument given above against the second half of Plausibility can be applied to the second half of Belief Plausibility, i.e., to $U(E)\leq \neg B\neg B(E)$.\footnote{Note that the belief modification of the Modica-Rustichini definition assumes an agent who is not mistaken about what she believes ($B(B(E))\leq B(E)$). For otherwise we could have an agent who is aware of $E$, does not believe $E$, but believes that she does believe $E$, so she does not believe that she does not believe $E$, contradicting the modified definition. From here it is a short step to the assumption that the agent is not mistaken about what she does not believe and then to No Moorean Beliefs.}

\begin{proposition}\label{MooreProp} Assume Double Negation, Necessitation for $B$ instead of $K$, and No Moorean Beliefs. In addition, assume the second half of Belief Plausibility. Then for all events $E\in\mathcal{E}$, we have  $U(E\sqcap \neg B(E))=0$.
\end{proposition}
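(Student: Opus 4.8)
The plan is to mirror the proof of Proposition \ref{FitchProp}, taking the belief-theoretic Moorean event $E' = E \sqcap \neg B(E)$ in place of the knowledge-theoretic event $E\sqcap\neg U(E)\sqcap\neg K(E)$ used there. The crucial simplification is that, whereas Proposition \ref{FitchProp} had to \emph{derive} the unknowability $K(E')=0$ from Factivity together with Awareness-restricted Monotonicity, here the hypothesis of No Moorean Beliefs already delivers the analogous unbelievability for free: $B(E') = B(E \sqcap \neg B(E)) = 0$ is \emph{literally} the statement of that axiom. So the first step is simply to read off $B(E')=0$.

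From $B(E')=0$ the remainder is the belief transcription of the final two lines of Proposition \ref{FitchProp}. First, $\neg B(E') = \neg 0$. Applying Necessitation for $B$ (i.e.\ $B(\neg 0)=\neg 0$) to this gives $B(\neg B(E')) = B(\neg 0) = \neg 0$. Then the second half of Belief Plausibility, instantiated at $E'$, yields $U(E') \leq \neg B\neg B(E') = \neg\big(B(\neg B(E'))\big) = \neg\neg 0$, which collapses to $0$ by Double Negation. Since $0$ is the minimum of $(\mathcal{E},\leq)$, we obtain $U(E\sqcap\neg B(E)) = U(E') = 0$, as desired.

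There is no genuine obstacle in the calculation itself; the only thing requiring care is the bookkeeping of the nested negations and belief operators in $\neg B\neg B(E')$, making sure that exactly one application of Necessitation (turning $\neg B(E')=\neg 0$ into $B(\neg B(E'))=\neg 0$) followed by Double Negation is what closes the chain. The point genuinely worth flagging is conceptual rather than technical: the proof isolates precisely why belief fares no better than knowledge in this setting, since No Moorean Beliefs plays exactly the role that Fitch-style unknowability played before. Hence the same counterintuitive conclusion follows---that one cannot be unaware of a Moorean event $E\sqcap\neg B(E)$ which may well be neither $0$ nor $\neg 0$---reinforcing the verdict of Section \ref{PlausibilityAppendix} that a second-half-of-Plausibility axiom, whether stated for knowledge or for belief, ought to be rejected.
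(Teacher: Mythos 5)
Your proof is correct and is essentially identical to the paper's own argument: both set $E'=E\sqcap\neg B(E)$, obtain $B(E')=0$ directly from No Moorean Beliefs, apply Necessitation for $B$ to get $B\neg B(E')=\neg 0$, and then close the chain with the second half of Belief Plausibility and Double Negation. The only difference is that you spell out the intermediate bookkeeping of the nested negations a bit more explicitly.
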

\begin{proof} Where $E'=E\sqcap \neg B(E)$, we have $B(E')=0$ by No Moorean Beliefs, so $B\neg B(E')=\neg 0$ by Necessitation, in which case the second half of Belief Plausibility yields $U(E')\leq \neg B\neg B(E')= \neg\neg 0=0$.\end{proof}

\noindent In short, Belief Plausibility implausibly implies that an agent who believes $\Omega$ must be aware of every \textit{unbelievable} event $E'$ as in the proof. To see the implausibility, note that many people at the beginning of World War II were unaware of the unbelievable $E'$: a nuclear weapon will end the war, but we do not believe that a nuclear weapon will end the war. Thus, changing the Modica-Rustichini definition of unawareness to use belief instead of knowledge still does not provide a satisfactory definition of unawareness of events.

\section{Proofs}\label{Proofs}

In this appendix, we give proofs of results in the main text. To make clear where we use our various assumptions, we typeset them in bold. 

\FiniteJoin*
\begin{proof} First, assume (i). We prove (ii) by induction on $n$. For the base case of $n=1$, if $\nu_1\in \mathcal{A} (\omega)\cap \mathord{\downarrow}\nu$, so $\nu_1\sqsubseteq\nu$, then $\mathrm{max}(\mathord{\downarrow}\nu_1\cap \mathord{\downarrow}\nu)=\{\nu_1\}\subseteq \mathcal{A} (\omega)$. For the inductive step, suppose that $\nu\in \mathcal{A} (\omega)$ and $\nu_1,\dots,\nu_{n+1}\in \mathcal{A} (\omega)\cap \mathord{\downarrow}\nu$.  By the inductive hypothesis, we have $\mathrm{max}((\mathord{\downarrow}\nu_1\sqcup\dots\sqcup\mathord{\downarrow}\nu_n)\cap\mathord{\downarrow}\nu)\subseteq \mathcal{A} (\omega)$; moreover, we have $\mathrm{max}(\mathord{\downarrow}\nu_{n+1} \cap \mathord{\downarrow}\nu )=\{\nu_{n+1}\}\subseteq\mathcal{A} (\omega)$. By \textbf{awareness expressibility},   $\mathord{\downarrow}\nu_i \in\mathcal{E}$ for $1\leq i\leq n+1$, so $\mathord{\downarrow}\nu_1\sqcup\dots\sqcup\mathord{\downarrow}\nu_n \in\mathcal{E}$ by Definition \ref{PossFrames}, and $\mathord{\downarrow}\nu_{n+1}\in \mathcal{E}$. By the previous two sentences and \textbf{awareness joinability}, we have $\mathrm{max}( (\mathord{\downarrow}\nu_1\sqcup\dots\sqcup\mathord{\downarrow}\nu_n \sqcup \mathord{\downarrow}\nu_{n+1})\cap\mathord{\downarrow}\nu)\subseteq\mathcal{A} (\omega)$, which establishes (ii) for $n+1$.

Now assume (ii) and that $\mathrm{max}(E)$ is finite for each $E\in\mathcal{E}$. Toward proving (i), suppose $\nu \in \mathcal{A} (\omega)$, $E,E'\in\mathcal{E}$, and that $\mathrm{max}(E\cap \mathord{\downarrow}\nu)\cup \mathrm{max}(E'\cap \mathord{\downarrow}\nu)\subseteq \mathcal{A} (\omega)$.  By the finiteness assumption, we can write $\mathrm{max}(E\cap \mathord{\downarrow}\nu)\cup \mathrm{max}(E'\cap \mathord{\downarrow}\nu)=\{\nu_1,\dots,\nu_n\}$. Then by (ii), $\mathrm{max}((\mathord{\downarrow}\nu_1\sqcup\dots\sqcup\mathord{\downarrow}\nu_n)\cap\mathord{\downarrow}\nu)\subseteq\mathcal{A} (\omega)$. Now we claim that $(\mathord{\downarrow}\nu_1\sqcup\dots\sqcup\mathord{\downarrow}\nu_n)\cap\mathord{\downarrow}\nu= (E\sqcup E')\cap \mathord{\downarrow}\nu$. For the left-to-right inclusion, since $\nu_1,\dots,\nu_n\in E\cup E'$ and $E$ and $E'$ are downsets, we have $\mathord{\downarrow}\nu_1\cup\dots\cup\mathord{\downarrow}\nu_n\subseteq E\cup E'$, so $\rho(\mathord{\downarrow}\nu_1\cup\dots\cup\mathord{\downarrow}\nu_n)\subseteq \rho(E\cup E')$ by Lemma \ref{RhoLem}.1 and hence  $\mathord{\downarrow}\nu_1\sqcup\dots\sqcup\mathord{\downarrow}\nu_n\subseteq E\sqcup E'$ by Theorem \ref{Tarski2}. For the right-to-left inclusion, suppose $\mu\in (E\sqcup E')\cap \mathord{\downarrow}\nu$. To show $\mu\in\mathord{\downarrow}\nu_1\sqcup\dots\sqcup\mathord{\downarrow}\nu_n$, it suffices to show that for every $\mu'\sqsubseteq\mu$, there is a $\mu''\sqsubseteq\mu'$ with $\mu''\in \mathord{\downarrow}\nu_1\cup\dots\cup\mathord{\downarrow}\nu_n$. Given $\mu'\sqsubseteq\mu$ and $\mu\in E\sqcup E'$, there is a $\mu''\sqsubseteq\mu'$ with $\mu''\in E\cup E'$. Without loss of generality, suppose $\mu''\in E$. Then since $\mu''\sqsubseteq \mu'\sqsubseteq\mu\sqsubseteq\nu$, we have $\mu''\in E\cap \mathord{\downarrow}\nu$. By \textbf{awareness expressibility}, $\mathord{\downarrow}\nu\in \mathcal{E}$, so we have $E\cap \mathord{\downarrow}\nu\in\mathcal{E}$. Then since $(\Omega,\sqsubseteq,\mathcal{E})$ is \textbf{quasi-principal} and $\mu''\in E\cap \mathord{\downarrow}\nu$, there is a $\nu_i\in \mathrm{max}(E\cap \mathord{\downarrow}\nu)$ such that $\mu''\sqsubseteq \mu_i$. Hence $\mu''\in \mathord{\downarrow}\nu_1\cup\dots\cup\mathord{\downarrow}\nu_n$, which completes the proof of (i).\end{proof}

Next we prove the two key lemmas that together show that the set of regular open sets of an epistemic possibility frame is closed under $\mathbf{A} $, $\mathbf{K} $, and $\mathbf{B} $.

\ROclosure*

\begin{proof} By Lemma \ref{ROLem}, it suffices to verify persistence and refinability for $\mathbf{A} (E)$. Persistence is immediate from the $\forall \omega'\sqsubseteq \omega$ quantification in the definition of $\mathbf{A} $. As for refinability, suppose $\omega\not\in\mathbf{A}  (E)$, so there is some $\omega'\sqsubseteq \omega$, $\nu \in  \mathcal{A} (\omega')$, and $\nu'\in\mathrm{max}(E\cap\mathord{\downarrow}\nu)\cup \mathrm{max}(\neg E\cap\mathord{\downarrow}\nu)$ such that $\nu'\not\in  \mathcal{A} (\omega')$. Given $\nu'\not\in  \mathcal{A} (\omega')$, by \textbf{awareness refinability} there is an $ \omega''\sqsubseteq \omega'$ such that for all $ \omega'''\sqsubseteq \omega''$, we have $\nu'\not\in \mathcal{A} (\omega''')$.  We claim that for all $\omega'''\sqsubseteq \omega''$, we have $\omega'''\not\in \mathbf{A} (E)$. Assume for contradiction that $\omega'''\in \mathbf{A} (E)$. Since $\omega'''\sqsubseteq \omega''\sqsubseteq \omega'$, we have $\omega'''\sqsubseteq \omega'$, which with $\nu \in  \mathcal{A} (\omega')$ implies $\nu \in  \mathcal{A} (\omega''')$ by \textbf{awareness persistence}. Together $\omega'''\in \mathbf{A}  (E)$ and $\nu \in  \mathcal{A} (\omega''')$ imply $\mathrm{max}(E\cap\mathord{\downarrow}\nu)\cup \mathrm{max}(\neg E\cap\mathord{\downarrow}\nu) \subseteq  \mathcal{A} (\omega''')$. Hence $\nu'\in  \mathcal{A} (\omega''')$, contradicting what we derived above. Thus, $\omega'''\not\in \mathbf{A} (E)$, which establishes refinability for $\mathbf{A}  (E)$.\end{proof}

\ROclosureTwo*

\begin{proof} By Lemma \ref{ROLem}, it suffices to verify persistence and refinability for $\{\omega\in\Omega\mid \mathcal{R} (\omega)\subseteq E\}$.  For persistence, suppose $\omega'\sqsubseteq\omega$. Then by \textbf{$\mathcal{R} $-monotonicity}, $\mathcal{R} (\omega')\subseteq\mathcal{R} (\omega)$, so  $\mathcal{R} (\omega)\subseteq E$ implies ${\mathcal{R} (\omega')\subseteq E}$. For refinability, suppose $\mathcal{R} (\omega)\not\subseteq E$, so there is some $\nu\in \mathcal{R} (\omega)\setminus E$.  Since $\nu\not\in E$ and $E\in\mathcal{RO}(\Omega,\sqsubseteq)$, there is a $\nu'\sqsubseteq \nu$ such that for all $\nu''\sqsubseteq \nu'$, $\nu''\not\in E$. Since $\nu\in \mathcal{R} (\omega)$ and $\nu'\sqsubseteq\nu$, we have $\nu'\in\mathcal{R} (\omega)$ by \textbf{$\mathcal{R} $-regularity}. Then by  \textbf{$\mathcal{R} $-refinability}, there is an $ \omega'\sqsubseteq \omega$ such that for all $ \omega''\sqsubseteq\omega'$ there is a $ \nu''\sqsubseteq \nu'$ with $\nu''\in \mathcal{R} (\omega'')$. But as above, $\nu''\sqsubseteq\nu' $ implies $\nu''\not\in E$, so $\mathcal{R} (\omega'')\not\subseteq E$. Thus, we have shown that for all $\omega'\sqsubseteq \omega$ there is an $\omega''\sqsubseteq\omega'$ with $\mathcal{R} (\omega'')\not\subseteq E$, which completes the proof of refinability.\end{proof}

\noindent By these lemmas and the fact that $\mathcal{RO}(\Omega,\sqsubseteq)$ is closed under intersection, it is also closed under $\mathbf{K} $ and $\mathbf{B}$.

Next we prove that epistemic possibility frames as in Definition \ref{EpistemicFrame} give rise to epistemic awareness algebras as in Definition \ref{AwareAlg}.

\FrameToAlg*

\begin{proof} That $\mathbf{A} $ satisfies the tautology and symmetry axioms of Definition \ref{AwareAlg} is obvious. For the agglomeration axiom, suppose $\omega\in \mathbf{A} (E_1)\cap \mathbf{A} (E_2)$. Toward showing that $\omega\in \mathbf{A} (E_1\cap E_2)$, suppose $\omega'\sqsubseteq \omega$ and   $\nu\in\mathcal{A} (\omega')$, so $\mathord{\downarrow}\nu\in \mathcal{E}$ by \textbf{awareness expressibility}. Since $E_1,E_2\in \mathcal{E}$, we have $E_1\cap \mathord{\downarrow}\nu, E_2\cap\mathord{\downarrow}\nu\in \mathcal{E}$, as well as $\neg E_1\cap \mathord{\downarrow}\nu, \neg E_2\cap \mathord{\downarrow}\nu\in\mathcal{E}$. We must show that $ \mathrm{max}(E_1\cap E_2\cap \mathord{\downarrow} \nu)\cup \mathrm{max}(\neg ( E_1\cap  E_2)\cap \mathord{\downarrow} \nu)\subseteq\mathcal{A}(\omega')$.

First suppose that $\nu'\in \mathrm{max}(E_1\cap E_2\cap \mathord{\downarrow} \nu)$. Then since $\nu'\in E_1\cap \mathord{\downarrow} \nu\in\mathcal{E}$ and the frame is \textbf{quasi-principal} (recall Definition \ref{QP}), there is some $\nu^*\in \mathrm{max}(  E_1\cap\mathord{\downarrow} \nu)$ with $\nu'\sqsubseteq\nu^*$. Then since $\omega\in \mathbf{A}  (E_1)$, $\omega'\sqsubseteq \omega$, $\nu\in \mathcal{A} (\omega')$, and $\nu^*\in  \mathrm{max}( E_1\cap\mathord{\downarrow} \nu)$,  we have $\nu^*\in \mathcal{A} (\omega')$. Moreover, we have not only $\nu'\in E_2\cap \mathord{\downarrow}\nu^*$ from above  but also  $\nu'\in \mathrm{max}( E_2\cap \mathord{\downarrow}\nu^*)$; for if there is some $\nu''$ such that $\nu'\sqsubset \nu'' \in  E_2\cap \mathord{\downarrow}\nu^*$, then given that $\nu^*\in  \mathrm{max}( E_1\cap\mathord{\downarrow} \nu)$ and that $E_1$ is a downset, we have $\nu''\in  E_1\cap  E_2\cap \mathord{\downarrow}\nu$, contradicting the fact that $\nu'\in \mathrm{max}( E_1\cap  E_2\cap \mathord{\downarrow} \nu)$. Then since $\omega\in \mathbf{A}  (E_2)$, $\omega'\sqsubseteq \omega$, $\nu^*\in \mathcal{A} (\omega')$, and $\nu'\in \mathrm{max}( E_2\cap \mathord{\downarrow}\nu^*)$, we have~$\nu'\in \mathcal{A} (\omega')$. Thus, $\mathrm{max}(E_1\cap E_2\cap \mathord{\downarrow} \nu)\subseteq \mathcal{A}(\omega')$.

Now we show $\mathrm{max}(\neg ( E_1\cap  E_2)\cap \mathord{\downarrow} \nu)\subseteq\mathcal{A}(\omega')$. Since $\omega\in \mathbf{A}  (E_1)\cap \mathbf{A}  (E_2)$, $\omega'\sqsubseteq \omega$, and   $\nu\in\mathcal{A} (\omega')$, we have $\mathrm{max}(\neg  E_1\cap \mathord{\downarrow} \nu)\cup \mathrm{max}(\neg  E_2\cap \mathord{\downarrow} \nu)\subseteq  \mathcal{A} (\omega')$.  Thus, by \textbf{awareness joinability}, $\mathrm{max}((\neg E_1\sqcup \neg E_2)\cap \mathord{\downarrow}\nu)\subseteq \mathcal{A}(\omega')$, which with $\neg E_1\sqcup \neg E_2=\neg (E_1\cap E_2)$ yields the desired inclusion. This completes the proof of $\omega\in \mathbf{A} (E_1\cap E_2)$.

Checking the axioms on knowledge and belief from Definition \ref{AwareAlg} is straightforward\end{proof}

Finally, we prove our main representation theorem. For convenience, we repeat the definition of $\mathbb{A}_+$.

\FilterConstruct*

\FilterRep*

\begin{proof} For part \ref{FilterRep1}, let $\mathbb{B}$ be the underlying Boolean algebra of $\mathbb{A}$. Note that for a filter $F$ in $\mathbb{B}$ and $b\in \mathbb{B}$, the smallest filter extending $F\cup\{b\}$ is $\{c\in \mathbb{B}\mid \mbox{for some }a\in F, a\sqcap b\leq c\}$. A basic fact about Boolean algebras is that if $F$ is a proper filter and $b\not\in F$, then the smallest filter extending $F\cup\{\neg b\}$ is proper.

Clearly $(\Omega,\sqsubseteq)$ is a poset. To see that $(\Omega,\sqsubseteq, \mathcal{E})$ is a possibility frame, we must show that $\mathcal{E}$ is a subalgebra of $\mathcal{RO}(\Omega,\sqsubseteq)$. First, we show that each $\widehat{a}$ is regular open. By Lemma \ref{ROLem}, it suffices to show that $\widehat{a}$ satisfies persistence and refinability. For persistence, if $F\in \widehat{a}$, so $a\in F$, and $F'\sqsubseteq F$, so $F'\supseteq F$, then $a\in F'$ and hence $F'\in\widehat{a}$. For refinability, if $F\not\in\widehat{a}$, so $a\not\in F$, then the smallest filter $F'$ extending $F\cup\{\neg a\}$ is proper, so $F'\sqsubseteq F$, and for all $F''\sqsubseteq F'$, we have $\neg a\in F''$, so $a\not\in F''$ since $F''$ is proper, so $F''\not\in\widehat{a}$. Finally,  $\mathcal{E}$ is closed under $\cap$ and $\neg$, as (i) $\widehat{a}\cap\widehat{b}= \widehat{a\sqcap b}$, and (ii) $\neg\widehat{a} = \widehat{\neg a}$. Condition (i) follows from that fact that if $F$ is a filter, then $a,b\in F$ if and only if $a\sqcap b\in F$. For condition (ii), if $F\in \widehat{\neg a}$, so $\neg a\in F$, then for any proper filter $F'$ extending $F$, $a\not\in F'$, so $F'\not\in\widehat{a}$; this shows $F\in \neg\widehat{a}$. Conversely, if $F\not\in\widehat{\neg a}$, so $\neg a\not\in F$, then the smallest filter $F'$ extending $F\cup \{a\}$ is proper, so $F'\sqsubseteq F$; this shows $F\not\in\neg\widehat{a}$.

To show that $(\Omega,\sqsubseteq)$ is quasi-principal, we must show that for all $\widehat{a}\in \mathcal{E}$ and $F\in \widehat{a}$, we have that $F\in \mathord{\downarrow}\mathrm{max}(\widehat{a})$. Indeed, $\mathrm{max}(\widehat{a})=\{\mathord{\Uparrow} a\}$ (recall that $\mathord{\Uparrow} a$ is the principal filter generated by $a$), and from $F\in\widehat{a}$ we have $a\in F$, so $F\supseteq \mathord{\Uparrow} a$ and hence $F\sqsubseteq \mathord{\Uparrow}a$, so $F\in \mathord{\downarrow} \mathrm{max}(\widehat{a})$. Finally, the maximum element $\maximum $ of the poset of proper filters is the principal filter $\{1\}$.

Next we verify the five conditions on $\mathcal{A} $:
\begin{itemize}
\item awareness nonvacuity: for all $F\in \Omega$, $\maximum \in \mathcal{A} (F)$.

We have $\{1\}\in\mathcal{A} (F)$ by the \textbf{tautology} axiom of Definition \ref{AwareAlg}.

\item awareness expressibility: for all $F\in \Omega$ and $H\in \mathcal{A} (F)$, $\mathord{\downarrow}H\in \mathcal{E}$.

If $H\in \mathcal{A} (F)$, then $H=\mathord{\Uparrow}a$ for some $a\in \mathbb{B}$, in which case $\mathord{\downarrow}H=\widehat{a}\in \mathcal{E}$.

\item awareness persistence: if $F'\sqsubseteq F$, then $\mathcal{A} (F)\subseteq \mathcal{A} (F')$.

Immediate from the definitions of $\mathcal{A} $ and $\sqsubseteq$ in $\mathbb{A}_+$.

\item awareness refinability: if  $H\not\in \mathcal{A}  (F)$, then $\exists F'\sqsubseteq F$ $\forall F''\sqsubseteq F'$  $H\not\in \mathcal{A} (F'' )$.

Suppose $H\not\in \mathcal{A}  (F)$. If $H$ is not a principal proper filter, then set $F'=F$, and then for all $F''\sqsubseteq F'$, we have $H\not\in \mathcal{A}  (F'')$. Now suppose $H$ is the principal filter of a nonzero element $a$. Since $H\not\in  \mathcal{A}  (F)$, it follows that $ A  a\not\in F$. Hence the smallest filter $F'$ extending $F\cup \{\neg A  a\}$ is proper. Then for all $F''\sqsubseteq F'$, i.e., $F''\supseteq F'$, we have $\neg A  a\in F''$ and hence $ A  a\not\in F''$ since $F''$ is proper, so $H\not\in \mathcal{A}  (F'' )$.

\item awareness joinability: if $H\in \mathcal{A}  (F)$ and $\mathrm{max}(\widehat{a}\cap \mathord{\downarrow}H)\cup \mathrm{max}(\widehat{b}\cap \mathord{\downarrow}H)\subseteq  \mathcal{A} (F)$, then \[\mathrm{max}((\widehat{a}\sqcup\widehat{b})\cap\mathord{\downarrow}H)\subseteq  \mathcal{A} (F).\]
Assume the antecedent of the above conditional. Given $H\in  \mathcal{A}  (F)$, we have $H=\mathord{\Uparrow}c$ for some $c\in\mathbb{B}$. If $(a\sqcup b)\sqcap c= 0$, then $\mathrm{max}((\widehat{a}\sqcup\widehat{b})\cap\mathord{\downarrow}H)=\mathrm{max}(\varnothing)=\varnothing\subseteq \mathcal{A}(F)$. If $(a\sqcup b)\sqcap c\neq 0$, then $\mathrm{max}((\widehat{a}\sqcup\widehat{b})\cap\mathord{\downarrow}H)=\{\mathord{\Uparrow}((a\sqcup b)\sqcap c)\} = \{\mathrm{\Uparrow} ((a\sqcap c)\sqcup (b\sqcap c))\}$ by the distributive law of Boolean algebras. Now for $d\in \{a,b\}$, either $d\sqcap c=0$, in which case $A(d\sqcap c)\in F$ by the \textbf{tautology} and \textbf{symmetry} axioms, or $d\sqcap c\neq 0$, in which case our assumption that  $\mathrm{max}(\widehat{d}\cap \mathord{\downarrow}H)\subseteq  \mathcal{A} (F)$ implies $A  (d\sqcap c)\in F$. Hence $A ((a\sqcap c)\sqcup (b\sqcap c))\in F$ by Lemma \ref{JoinClosure}, so $\mathord{\Uparrow}((a\sqcap c)\sqcup (b\sqcap c))\in\mathcal{A}  (F)$. Combining this with what we derived above using distributivity, we conclude that $\mathrm{max}((\widehat{a}\sqcup\widehat{b})\cap\mathord{\downarrow}H)\subseteq  \mathcal{A} (F)$.
 
\end{itemize}
Finally, we verify the conditions on $\mathcal{R} \in \{\mathcal{K} ,\mathcal{B} \}$, reasoning about $\Box \in\{K ,B \}$:
\begin{itemize}
\item $\mathcal{R} $-monotonicity: if $F'\sqsubseteq F$, then  $\mathcal{R} (F')\subseteq \mathcal{R} (F)$.

Assume $F'\sqsubseteq F$ and $H\in \mathcal{R} (F')$. Toward showing $H\in\mathcal{R} (F)$, suppose  $\Box  a_1\sqcup\dots\sqcup \Box  a_n\in F$. Then since $F'\sqsubseteq F$, $F'\supseteq F$, so $\Box  a_1\sqcup\dots\sqcup \Box  a_n\in F'$, which with $H\in \mathcal{R} (F')$ implies $a_1\sqcup\dots\sqcup a_n\in H$.

\item $\mathcal{R} $-regularity: $\mathcal{R} (F)\in\mathcal{RO}(\Omega,\sqsubseteq)$. 

By Lemma \ref{RhoLem}, it suffices to show that $\mathcal{R} (F)$ satisfies persistence and refinability. That  $\mathcal{R} (F)$ satisfies persistence is immediate from the definition of $\mathcal{R} $ and the definition of $\sqsubseteq$ as $\supseteq$. For refinability, suppose $H\not\in\mathcal{R} (F)$. Hence there is $\Box  a_1\sqcup\dots\sqcup \Box a_n\in F$ such that $a_1\sqcup\dots\sqcup a_n\not\in H$. It follows that the smallest filter $H'$ extending $H\cup \{\neg (a_1\sqcup\dots\sqcup a_n)\}$ is proper. Now suppose $H''\sqsubseteq H'$, so $H''$ is a proper filter with $H''\supseteq H'$ and hence $\neg (a_1\sqcup\dots\sqcup a_n)\in H''$. Then $a_1\sqcup\dots\sqcup a_n\not\in H''$, so $H''\not\in \mathcal{R} (F)$. 

\item $\mathcal{R} $-refinability: if $H \in \mathcal{R} (F)$, then $\exists F'\sqsubseteq F$ $\forall F''\sqsubseteq F'$ $\exists H'\sqsubseteq H$: $H'\in \mathcal{R} (F'')$. 

Assuming $H\in \mathcal{R} (F)$, we claim that the smallest filter $F'$ extending \[F\cup\{ \neg \Box  c \mid \exists b\in H:c\leq\neg b\}\] is proper. If not, then there are $a\in F$, $b_1,\dots,b_n\in H$, and $c_1,\dots,c_n$ with $c_k\leq \neg b_k$ for $1\leq k\leq n$ such that $a\sqcap \neg\Box  c_1 \sqcap\dots\sqcap\neg\Box  c_n=0$, which implies  $a\leq \neg ( \neg\Box  c_1\sqcap\dots\sqcap\neg\Box  c_n)= \Box  c_1\sqcup\dots\sqcup \Box  c_n$. Hence $\Box  c_1\sqcup\dots\sqcup \Box  c_n\in F$, which with $H\in \mathcal{R} (F)$ implies $c_1\sqcup\dots\sqcup c_n\in H$ and hence $\neg b_1\sqcup\dots\sqcup \neg b_n\in H$, contradicting the fact that $b_1,\dots,b_n\in H$ and $H$ is a proper filter. Thus, $F'$ is indeed proper, so $F'\sqsubseteq F$. Now consider any $F''\sqsubseteq F'$, so $F''\supseteq F'$. We claim that the smallest filter $H'$ extending \[H\cup \{a_1\sqcup\dots\sqcup a_n \mid \Box a_1\sqcup\dots\sqcup \Box a_n\in F''\}\] is proper. If not, there is some $b\in H$ and a family $\{a_1^j\sqcup\dots\sqcup a_{n_j}^j\}_{j\in J}$ for a nonempty finite $J$ such that $\Box a_1^j\sqcup\dots\sqcup \Box a_{n_j}^j\in F''$ and
\[ \underset{j\in J}{\bigsqcap} (a_1^j\sqcup\dots\sqcup a_{n_j}^j)\leq \neg b.\]
Let $\Lambda$ be the set of all choice functions $\lambda$ such that for $j\in J$,  $1\leq \lambda(j)\leq n_j$. Then for each $\lambda\in \Lambda$,
\[\underset{j\in J}{\bigsqcap} a_{\lambda(j)}^j\leq\neg b. \]
Hence for each $\lambda\in \Lambda$, we have $\neg \Box  \underset{j\in J}{\bigsqcap} a_{\lambda(j)}^j\in F'$ by construction of $F'$ and hence $\neg \Box  \underset{j\in J}{\bigsqcap} a_{\lambda(j)}^j\in F''$. On the other hand, 
\begin{equation} \underset{j\in J}{\bigsqcap} (\Box  a_1^j\sqcup\dots\sqcup \Box  a_{n_j}^j)\leq \underset{\lambda\in \Lambda}{\bigsqcup}\, \underset{j\in J}{\bigsqcap} \Box  a_{\lambda(j)}^j \leq \underset{\lambda\in \Lambda}{\bigsqcup} \Box \underset{j\in J}{\bigsqcap}  a_{\lambda(j)}^j,\label{DistAglom}\end{equation}
using Boolean distributivity for the first inequality and \textbf{knowledge/belief agglomeration} for the second, since $J$ is finite. Then since $\underset{j\in J}{\bigsqcap} (\Box  a_1^j\sqcup\dots\sqcup \Box  a_{n_j}^j)\in F''$, we have $\underset{\lambda\in \Lambda}{\bigsqcup} \Box \underset{j\in J}{\bigsqcap}  a_{\lambda(j)}^j\in F''$. But we concluded above that for each $\lambda\in \Lambda$ we have $\neg \Box  \underset{j\in J}{\bigsqcap} a_{\lambda(j)}^j\in F''$, which implies $\underset{\lambda\in\Lambda}{\bigsqcap} \neg \Box  \underset{j\in J}{\bigsqcap} a_{\lambda(j)}^j\in F''$ and hence $\neg\underset{\lambda\in\Lambda}{\bigsqcup}  \Box  \underset{j\in J}{\bigsqcap} a_{\lambda(j)}^j\in F''$, contradicting the fact that $F''$ is proper. Thus, we conclude that $H'$ is proper. Hence $H'\sqsubseteq H$, and $H'\in\mathcal{R} (F'')$ by construction.

\item epistemic factivity: $F\in \mathcal{K} (F)$.\\
Immediate from \textbf{factivity of knowledge} and the definition of $\mathcal{K} $.
\item doxastic consistency: $\mathcal{B} (F)\neq\varnothing$.

We claim that the smallest filter $H$ extending $\{a_1\sqcup\dots\sqcup a_n\mid B a_1\sqcup\dots\sqcup B a_n\in F\}$ is proper. If not, then there is a family $\{a_1^j\sqcup\dots\sqcup a_{n_j}^j\}_{j\in J}$ for a nonempty finite $J$ such that $B a_1^j\sqcup\dots\sqcup B a_{n_j}^j\in F$ for each $j\in J$ and $\underset{j\in J}{\sqcap} (a_1^j\sqcup\dots\sqcup a_{n_j}^j)=0$. As above, let $\Lambda$ be the set of all choice functions $\lambda$ such that for $j\in J$,  ${1\leq \lambda(j)\leq n_j}$. It follows that for each $\lambda\in\Lambda$, $\underset{j\in J}{\bigsqcap} a_{\lambda(j)}^j=0$. Then  $\underset{j\in J}{\bigsqcap} (B a_1\sqcup\dots\sqcup B a_n)\leq \underset{\lambda\in\Lambda}{\bigsqcup}\,B \underset{j\in J}{\bigsqcap}  a_{\lambda(j)}^j \leq 0$ using the same reasoning as in (\ref{DistAglom}) for the first inequality and  \textbf{consistency of belief} for the second. Thus, $0\in F$, contradicting the fact that $F$ is proper.  Hence $H$ is proper, and by construction, $H\in \mathcal{B} (F)$.

\item doxastic inclusion: $\mathcal{B} (F)\subseteq \mathcal{K} (F)$.\\
Immediate from \textbf{knowledge-belief entailment} and the definitions of $\mathcal{B} $ and $\mathcal{K} $.
\end{itemize}
This completes the proof of part \ref{FilterRep1}.

For part \ref{FilterRep2}, we already observed that $\widehat{\cdot}$ preserves meet and complement in (i) and (ii) in the second paragraph of the proof, so next we prove that $\widehat{\cdot}$ preserves the awareness operation. Suppose $F\in\widehat{ A  b}$, so $ A  b\in F$. Hence $ A \neg b\in F$ by the \textbf{symmetry} axiom. Now consider any $F'\sqsubseteq F$, so $F'\supseteq F$, and suppose $H\in \mathcal{A} (F' )$. Hence $H=\mathord{\Uparrow}c$ for some $c$ such that $ A  c\in F'$. Then since $ A  b\in F'$ and $ A \neg b\in F'$, we have $ A  (b\sqcap c)\in F'$ and $ A  (\neg b\sqcap c)\in F'$ by  \textbf{agglomeration}. For $d\in \{b,\neg b\}$, if $d\sqcap c\neq 0$, then from $ A  (d\sqcap c)\in F'$ we have $\mathord{\Uparrow} (d\sqcap c)\in\mathcal{A}  (F')$ and hence $\mathrm{max}(\widehat{d}\cap \mathord{\downarrow}H)=\{\mathord{\Uparrow} (d\sqcap c)\}\subseteq \mathcal{A}  (F')$; on the other hand, if $d\sqcap c = 0$, then $\mathrm{max}(\widehat{d}\cap \mathord{\downarrow}H)=\mathrm{max}(\varnothing)=\varnothing\subseteq \mathcal{A} (F')$. Hence  $\mathrm{max}(\widehat{b}\cap \mathord{\downarrow}H) \cup \mathrm{max}(\neg \widehat{b}\cap \mathord{\downarrow}H)\subseteq \mathcal{A} (F')$, which shows $F\in \mathbf{A}  \widehat{b}$. Now suppose $F\not\in \widehat{ A  b}$, so $ A  b\not\in F$. Then $b\neq 0$ by the \textbf{tautology}  and \textbf{symmetry} axioms, and  $\mathord{\Uparrow}b\not\in \mathcal{A} (F)$. Then since $\{1\}\in \mathcal{A} (F)$ by the \textbf{tautology} axiom and $\mathrm{max}(\widehat{b}\cap \mathord{\downarrow}\{1\})= \{\mathord{\Uparrow}b\}$, it follows that $F\not\in \mathbf{A}  \widehat{b}$. This completes the proof that $\widehat{ A  b}=\mathbf{A} \widehat{b}$.

It is now easy to see that the frame is standard: for if $H\in \mathcal{A} (F)$, then $H$ is the principal filter of a nonzero element $a$ such that $A  a\in F$, in which case by the previous paragraph, $F\in \mathbf{A} \widehat{a}$, so $F\in \mathbf{A} \mathord{\downarrow}H$.

Finally, we show that $\widehat{\cdot}$ preserves the knowledge operation (the proof for belief is analogous). Suppose $F\in\widehat{K b}$, so $K  b\in F$. Then by \textbf{knowledge-awareness entailment}, $A  b\in F$, so by our reasoning above, $F\in \mathbf{A} \widehat{b}$. Moreover, $K  b\in F$  implies that for each $H\in\mathcal{K} (F)$, we have $b\in H$ and hence $H\in\widehat{b}$, so $\mathcal{K} (F)\subseteq\widehat{b}$. Thus, by definition of $\mathbf{K} $, $F\in \mathbf{K} \widehat{b}$. Now suppose $F\not\in\widehat{K b}$, so $K  b\not\in F$ and hence $b\neq 1$ by \textbf{necessitation}. Case 1: $A b\not\in F$. Then by our reasoning above, $F\not\in \mathbf{A} \widehat{b}$, which implies $F\not\in \mathbf{K} \widehat{b}$. Case 2: $A b\in F$. Then we claim that the smallest filter $H$ extending
\[\{a_1\sqcup\dots\sqcup a_n\mid K a_1\sqcup\dots\sqcup K a_n\in F\}\cup \{\neg b\}\] 
is proper. If not, then there is a family $\{a_1^j\sqcup\dots\sqcup a_{n_j}^j\}_{j\in J}$ for a nonempty finite $J$ with $K a_1^j\sqcup\dots\sqcup K a_{n_j}^j\in F$ for each $j\in J$ and $\underset{j\in J}{\sqcap}(a_1^j\sqcup\dots\sqcup a_{n_j}^j)\leq  b$. As before,  let $\Lambda$ be the set of all choice functions $\lambda$ such that for $j\in J$,  $1\leq \lambda(j)\leq n_j$. Then for each $\lambda\in\Lambda$, $\underset{j\in J}{\bigsqcap} a_{\lambda(j)}^j\leq  b$ and hence

\begin{equation}A  b\sqcap \underset{j\in J}{\bigsqcap} K  a_{\lambda(j)}^j\leq A   b\sqcap K  \underset{j\in J}{\bigsqcap} a_{\lambda(j)}^j \leq K   b,\label{KnowledgeEqs}\end{equation}
using  \textbf{knowledge agglomeration} for the first inequality and \textbf{awareness-restricted monotonicity} for the second. Then
\begin{equation*} A   b\sqcap \underset{j\in J}{\bigsqcap} (K  a_1^j\sqcup\dots\sqcup K  a_{n_j}^j)\leq A   b\sqcap \underset{\lambda\in \Lambda}{\bigsqcup}\, \underset{j\in J}{\bigsqcap} K  a_{\lambda(j)}^j\leq \underset{\lambda\in \Lambda}{\bigsqcup}(A   b\sqcap \underset{j\in J}{\bigsqcap} K  a_{\lambda(j)}^j ) \leq K   b ,\label{DistStep2}\end{equation*}
using Boolean distributivity for the first two inequalities and then (\ref{KnowledgeEqs}) for the last inequality. Then since the element on the far left belongs to $F$, so does $K  b$, contradicting what we derived above. Thus, $H$ is indeed a proper filter. By construction $H\in \mathcal{K} (F)$, and $b\not\in H$ since $H$ is proper, so $H\not\in\widehat{b}$. Hence $F\not\in\mathbf{K} \widehat{b}$.\end{proof}

\bibliographystyle{plainnat}
\bibliography{awareness}

\providecommand{\noopsort}[1]{}
\begin{thebibliography}{67}
\providecommand{\natexlab}[1]{#1}
\providecommand{\url}[1]{\texttt{#1}}
\expandafter\ifx\csname urlstyle\endcsname\relax
  \providecommand{\doi}[1]{doi: #1}\else
  \providecommand{\doi}{doi: \begingroup \urlstyle{rm}\Url}\fi

\bibitem[Alon and Heifetz(2014)]{Alon2014a}
Shiri Alon and Aviad Heifetz.
\newblock The logic of {K}nightian games.
\newblock \emph{Economic Theory Bulletin}, 2\penalty0 (2):\penalty0 161--182,
  2014.

\bibitem[Alon and Lehrer(2014)]{Alon2014}
Shiri Alon and Ehud Lehrer.
\newblock Subjective multi-prior probability: A representation of a partial
  likelihood relation.
\newblock \emph{Journal of Economic Theory}, 151:\penalty0 476--492, 2014.

\bibitem[Aumann(1976)]{Aumann1976}
Robert~J. Aumann.
\newblock Agreeing to disagree.
\newblock \emph{Annals of Statistics}, 4:\penalty0 1236--1239, 1976.

\bibitem[Aumann(1999{\natexlab{a}})]{Aumann1999}
Robert~J. Aumann.
\newblock Interactive epistemology {I}: {K}nowledge.
\newblock \emph{International Journal of Game Theory}, 28\penalty0
  (3):\penalty0 263--300, 1999{\natexlab{a}}.

\bibitem[Aumann(1999{\natexlab{b}})]{Aumann1999b}
Robert~J. Aumann.
\newblock Interactive epistemology {II}: {P}robability.
\newblock \emph{International Journal of Game Theory}, 28\penalty0
  (3):\penalty0 301--314, 1999{\natexlab{b}}.

\bibitem[Belardinelli and Rendsvig(2020)]{Belardinelli2020}
Gaia Belardinelli and Rasmus~K. Rendsvig.
\newblock Awareness logic: {A} {K}ripke-based rendition of the
  {H}eifetz-{M}eier-{S}chipper model.
\newblock In Manuel~A. Martins and Igor Sedl\'{a}r, editors, \emph{Dynamic
  Logic. New Trends and Applications. DaLi 2020}, pages 22--50, Cham, 2020.
  Springer.

\bibitem[{\noopsort{Benthem}}{van Benthem}(2001)]{Benthem1980}
Johan {\noopsort{Benthem}}{van Benthem}.
\newblock Correspondence theory.
\newblock In D.~M. Gabbay and F.~Guenthner, editors, \emph{Handbook of
  Philosophical Logic}, volume~3, pages 325--408. Springer, Dordrecht, 2nd
  edition, 2001.

\bibitem[{\noopsort{Benthem}}{van Benthem} et~al.(2017){\noopsort{Benthem}}{van
  Benthem}, Bezhanishvili, and Holliday]{Benthem2017}
Johan {\noopsort{Benthem}}{van Benthem}, Nick Bezhanishvili, and Wesley~H.
  Holliday.
\newblock A bimodal perspective on possibility semantics.
\newblock \emph{Journal of Logic and Computation}, 27\penalty0 (5):\penalty0
  1353--1389, 2017.

\bibitem[Bezhanishvili and Holliday(2020)]{BH2020}
Nick Bezhanishvili and Wesley~H. Holliday.
\newblock Choice-free {S}tone duality.
\newblock \emph{The Journal of Symbolic Logic}, 85\penalty0 (1):\penalty0
  109--148, 2020.

\bibitem[Blackburn et~al.(2001)Blackburn, de~Rijke, and Venema]{Blackburn2001}
Patrick Blackburn, Maarten de~Rijke, and Yde Venema.
\newblock \emph{Modal Logic}.
\newblock Cambridge University Press, New York, 2001.

\bibitem[Board(2004)]{Board2004}
Oliver Board.
\newblock Dynamic interactive epistemology.
\newblock \emph{Games and Economic Behavior}, 49:\penalty0 49--80, 2004.

\bibitem[Burris and Sankappanavar(1981)]{Burris1981}
Stanley Burris and H.~P. Sankappanavar.
\newblock \emph{A Course in Universal Algebra}.
\newblock Springer-Verlag, New York, 1981.

\bibitem[Dekel et~al.(1998)Dekel, Lipman, and Rustichini]{Dekel1998}
Eddie Dekel, Barton~L. Lipman, and Aldo Rustichini.
\newblock Standard state-space models preclude unawareness.
\newblock \emph{Econometrica}, 66\penalty0 (1):\penalty0 159--173, 1998.

\bibitem[Dietrich(2018)]{Dietrich2018}
Franz Dietrich.
\newblock Savage's theorem under changing awareness.
\newblock \emph{Journal of Economic Theory}, 176:\penalty0 1--54, 2018.

\bibitem[Ding(2021)]{Ding2020}
Yifeng Ding.
\newblock On the logic of belief and propositional quantification.
\newblock \emph{Journal of Philosophical Logic}, 50:\penalty0 1143--1198, 2021.

\bibitem[Ding et~al.(2019)Ding, Holliday, and Zhang]{Ding2019}
Yifeng Ding, Wesley~H. Holliday, and Cedegao Zhang.
\newblock When do introspection axioms matter for multi-agent epistemic
  reasoning?
\newblock In Lawrence~S. Moss, editor, \emph{Proceedings of TARK 2019}, volume
  297, pages 121--139. Electronic Proceedings in Theoretical Computer Science,
  2019.

\bibitem[Dominiak and Tserenjigmid(2022)]{Dominiak2022}
Adam Dominiak and Gerelt Tserenjigmid.
\newblock Ambiguity under growing awareness.
\newblock \emph{Journal of Economic Theory}, 199\penalty0 (105256), 2022.

\bibitem[Elliot(2020)]{Elliot2022}
Edward Elliot.
\newblock Unawareness and implicit belief.
\newblock Manuscript, https://philpapers.org/archive/ELLUAI.pdf, 2020.

\bibitem[Fagin and Halpern(1988)]{Fagin1988}
Ronald Fagin and Joseph~Y. Halpern.
\newblock Belief, awareness, and limited reasoning.
\newblock \emph{Artificial Intelligence}, 34:\penalty0 39--76, 1988.

\bibitem[Fagin et~al.(1995)Fagin, Halpern, Moses, and Vardi]{Fagin1995}
Ronald Fagin, Joseph~Y. Halpern, Yoram Moses, and Moshe~Y. Vardi.
\newblock \emph{Reasoning about Knowledge}.
\newblock MIT Press, Cambridge, Mass., 1995.

\bibitem[Fitch(1963)]{Fitch1963}
Frederic~B. Fitch.
\newblock A logical analysis of some value concepts.
\newblock \emph{The Journal of Symbolic Logic}, 28\penalty0 (2):\penalty0
  135--142, 1963.

\bibitem[Fritz and Lederman(2015)]{Fritz2015}
Peter Fritz and Harvey Lederman.
\newblock Standard state space models of unawareness.
\newblock In \emph{Proceedings of TARK 2015}, pages 141--158. Electronic
  Proceedings in Theoretical Computer Science, 2015.

\bibitem[Fukuda(2021)]{Fukuda2021}
Satoshi Fukuda.
\newblock Unawareness without {AU} {I}ntrospection.
\newblock \emph{Journal of Mathematical Economics}, 94\penalty0 (102456), 2021.

\bibitem[Galanis(2016)]{Galanis2016}
Spyros Galanis.
\newblock The value of information in risk-sharing environments with
  unawareness.
\newblock \emph{Games and Economic Behavior}, 97:\penalty0 1--18, 2016.

\bibitem[Galanis(2018)]{Galanis2018}
Spyros Galanis.
\newblock Speculation under unawareness.
\newblock \emph{Games and Economic Behavior}, 109:\penalty0 598--615, 2018.

\bibitem[Geanakoplos(1989)]{Geanakoplos1989}
John Geanakoplos.
\newblock Game theory without partitions, and applications to speculation and
  consensus.
\newblock Cowles Foundation Discussion Paper no.~914, 1989.

\bibitem[Givant and Halmos(2009)]{Givant2009}
Steven Givant and Paul Halmos.
\newblock \emph{Introduction to Boolean Algebras}.
\newblock Springer, New York, 2009.

\bibitem[Grant et~al.(2015)Grant, Kline, O'Callaghan, and Quiggin]{Grant2015}
Simon Grant, J.~Jude Kline, Patrick O'Callaghan, and John Quiggin.
\newblock Sub-models for interactive unawareness.
\newblock \emph{Theory and Decision}, 79:\penalty0 601--613, 2015.

\bibitem[Guarino(2020)]{Guarino2020}
Pierfrancesco Guarino.
\newblock An epistemic analysis of dynamic games with unawareness.
\newblock \emph{Games and Economic Behavior}, 120:\penalty0 257--288, 2020.

\bibitem[Halpern(2001)]{Halpern2001}
Joseph~Y. Halpern.
\newblock Alternative semantics for unawareness.
\newblock \emph{Games and Economic Behavior}, 37:\penalty0 321--339, 2001.

\bibitem[Halpern(2003)]{Halpern2003}
Joseph~Y. Halpern.
\newblock \emph{{Reasoning about Uncertainty}}.
\newblock MIT Press, 2003.

\bibitem[Halpern and R\^{e}go(2009)]{Halpern2009}
Joseph~Y. Halpern and Leandro~C. R\^{e}go.
\newblock Reasoning about knowledge of unawareness.
\newblock \emph{Games and Economic Behavior}, 67\penalty0 (2):\penalty0
  503--525, 2009.

\bibitem[Halpern and R\^{e}go(2013)]{Halpern2013}
Joseph~Y. Halpern and Leandro~C. R\^{e}go.
\newblock Reasoning about knowledge of unawareness revisited.
\newblock \emph{Mathematical Social Sciences}, 65\penalty0 (2):\penalty0
  73--84, 2013.

\bibitem[Halpern and R\^{e}go(2008)]{Halpern2008}
Joseph~Y. Halpern and Leandro~Chaves R\^{e}go.
\newblock Interactive unawareness revisisted.
\newblock \emph{Games and Economic Behavior}, 62:\penalty0 232--262, 2008.

\bibitem[Heifetz et~al.(2006)Heifetz, Meier, and Schipper]{Heifetz2006}
Aviad Heifetz, Martin Meier, and Burkhard~C. Schipper.
\newblock {Interactive unawareness}.
\newblock \emph{Journal of Economic Theory}, 130:\penalty0 78--94, 2006.

\bibitem[Heifetz et~al.(2008)Heifetz, Meier, and Schipper]{Heifetz2008}
Aviad Heifetz, Martin Meier, and Burkhard~C. Schipper.
\newblock A canonical model for interactive unawareness.
\newblock \emph{Games and Economic Behavior}, 62:\penalty0 304--324, 2008.

\bibitem[Heifetz et~al.(2013)Heifetz, Meier, and Schipper]{Heifetz2013}
Aviad Heifetz, Martin Meier, and Burkhard~C. Schipper.
\newblock Unawareness, beliefs, and speculative trade.
\newblock \emph{Games and Economic Behavior}, 77:\penalty0 100--121, 2013.

\bibitem[Hintikka(1962)]{Hintikka2005}
Jaakko Hintikka.
\newblock \emph{Knowledge and Belief: An Introduction to the Logic of the Two
  Notions}.
\newblock Cornell University Press, Ithaca, NY, 1962.

\bibitem[Holliday(2014)]{Holliday2014}
Wesley~H. Holliday.
\newblock Partiality and adjointness in modal logic.
\newblock In R.~Gor\'{e}, B.~Kooi, and A.~Kurucz, editors, \emph{Advances in
  Modal Logic, Vol. 10}, pages 313--332. College Publications, London, 2014.

\bibitem[Holliday(2015)]{Holliday2015}
Wesley~H. Holliday.
\newblock Possibility frames and forcing for modal logic.
\newblock Forthcoming in \textit{The Australasian Journal of Logic}, UC
  Berkeley Working Paper in Logic and the Methodology of Science,
  https://escholarship.org/uc/item/0tm6b30q, 2015.

\bibitem[Holliday(2019)]{Holliday2019}
Wesley~H. Holliday.
\newblock A note on algebraic semantics for {S5} with propositional
  quantifiers.
\newblock \emph{Notre Dame Journal of Formal Logic}, 60\penalty0 (2):\penalty0
  311--332, 2019.

\bibitem[Holliday(2021)]{HollidayForthB}
Wesley~H. Holliday.
\newblock Possibility semantics.
\newblock In Melvin Fitting, editor, \emph{Selected Topics from Contemporary
  Logics}, Landscapes in Logic, pages 363--476. College Publications, 2021.
\newblock https://escholarship.org/uc/item/9ts1b228.

\bibitem[Holliday and Icard(2010)]{Holliday2010}
Wesley~H. Holliday and Thomas~F. Icard, III.
\newblock {Moorean Phenomena in Epistemic Logic}.
\newblock In Lev Beklemishev, Valentin Goranko, and Valentin Shehtman, editors,
  \emph{Advances in Modal Logic}, volume~8, pages 178--199. College
  Publications, London, 2010.

\bibitem[Humberstone(1981)]{Humberstone1981}
I.~L. Humberstone.
\newblock From worlds to possibilities.
\newblock \emph{Journal of Philosophical Logic}, 10\penalty0 (3):\penalty0
  313--339, 1981.

\bibitem[Karni and Vier\o(2017)]{Karni2017}
Edi Karni and Marie-Louise Vier\o.
\newblock Awareness of unawareness: {A} theory of decision making in the face
  of ignorance.
\newblock \emph{Journal of Economic Theory}, 168:\penalty0 301--328, 2017.

\bibitem[Kripke(1963)]{Kripke1963}
Saul~A. Kripke.
\newblock Semantical analysis of modal logic {I}. {N}ormal modal propositional
  calculi.
\newblock \emph{Zeitschrift f\"{u}r Mathematische Logik und Grundlagen der
  Mathematik}, 9:\penalty0 67--96, 1963.

\bibitem[Li(2009)]{Li2009}
Jing Li.
\newblock Information structures with unawareness.
\newblock \emph{Journal of Economic Theory}, 144:\penalty0 977--993, 2009.

\bibitem[Lismont and Mongin(1994)]{Lismont1994}
Luc Lismont and Phillippe Mongin.
\newblock On the logic of common belief and common knowledge.
\newblock \emph{Theory and Decision}, 37:\penalty0 75--106, 1994.

\bibitem[Modica and Rustichini(1994)]{Modica1994}
Salvatore Modica and Aldo Rustichini.
\newblock Awareness and partitional information structures.
\newblock \emph{Theory and Decision}, 37:\penalty0 107--124, 1994.

\bibitem[Modica and Rustichini(1999)]{Modica1999}
Salvatore Modica and Aldo Rustichini.
\newblock Unawareness and partitional information structures.
\newblock \emph{Games and Economic Behavior}, 27:\penalty0 265--298, 1999.

\bibitem[Monderer and Samet(1989)]{Monderer1989}
Dov Monderer and Dov Samet.
\newblock Approximating common knowledge with common beliefs.
\newblock \emph{Games and Economic Behavior}, 1:\penalty0 170--190, 1989.

\bibitem[Piermont(2017)]{Piermont2017}
Evan Piermont.
\newblock Introspective unawareness and observable choice.
\newblock \emph{Games and Economic Behavior}, 106:\penalty0 134--152, 2017.

\bibitem[Piermont(2024)]{Piermont2024}
Evan Piermont.
\newblock Algebraic semantics for relative truth, awareness, and possibility.
\newblock \emph{The Review of Symbolic Logic}, 17\penalty0 (1), 2024.

\bibitem[Quiggin(2016)]{Quiggin2016}
John Quiggin.
\newblock The value of information and the value of awareness.
\newblock \emph{Theory and Decision}, 80:\penalty0 167--185, 2016.

\bibitem[Schipper(2013)]{Schipper2013}
Burkhard~C. Schipper.
\newblock Awareness-dependent subjective expected utility.
\newblock \emph{International Journal of Game Theory}, 42:\penalty0 725--753,
  2013.

\bibitem[Schipper(2014)]{Schipper2014}
Burkhard~C. Schipper.
\newblock Unawareness---{A} gentle introduction to both the literature and the
  special issue.
\newblock \emph{Mathematical Social Sciences}, 70:\penalty0 1--9, 2014.

\bibitem[Schipper(2015)]{Schipper2015}
Burkhard~C. Schipper.
\newblock Awareness.
\newblock In Hans van Ditmarsch, Joseph~Y. Halpern, Wiebe van~der Hoek, and
  Barteld Kooi, editors, \emph{Handbook of Epistemic Logic}, pages 77--146.
  College Publications, London, 2015.

\bibitem[Schipper(2021{\natexlab{a}})]{Schipper2021}
Burkhard~C. Schipper.
\newblock Introduction to the special issue on unawareness.
\newblock \emph{The B.E. Journal of Theoretical Economics}, 21\penalty0
  (2):\penalty0 355--360, 2021{\natexlab{a}}.

\bibitem[Schipper(2021{\natexlab{b}})]{Schipper2021b}
Burkhard~C. Schipper.
\newblock Discovery and equilibrium in games with unawareness.
\newblock \emph{Journal of Economic Theory}, 198\penalty0 (105365),
  2021{\natexlab{b}}.

\bibitem[Stalnaker(2006)]{Stalnaker2006}
Robert Stalnaker.
\newblock On logics of knowledge and belief.
\newblock \emph{Philosophical Studies}, 128:\penalty0 169--199, 2006.

\bibitem[Stalnaker(1984)]{Stalnaker1984}
Robert~C. Stalnaker.
\newblock \emph{Inquiry}.
\newblock MIT Press, Cambridge, Mass., 1984.

\bibitem[Stone(1936)]{Stone1936}
M.~H. Stone.
\newblock The theory of representation for {B}oolean algebras.
\newblock \emph{Transactions of the American Mathematical Society}, 40\penalty0
  (1):\penalty0 37--111, 1936.

\bibitem[Takeuti and Zaring(1973)]{Takeuti1973}
Gaisi Takeuti and Wilson~M. Zaring.
\newblock \emph{Axiomatic Set Theory}.
\newblock Springer-Verlag, New York, 1973.

\bibitem[Tarski(1937)]{Tarski1937}
Alfred Tarski.
\newblock {\"{U}ber additive und multiplikative Mengenk\"{o}rper und
  Mengenfunktionen}.
\newblock \emph{Sprawozdania z Posiedze\'{n} Towarzystwa Naukowego
  Warszawskiego, Wydzia{\l} III, Nauk Matematyczno-fizycznych}, 30:\penalty0
  151--181, 1937.

\bibitem[Yablo(2014)]{Yablo2014}
Stephen Yablo.
\newblock \emph{Aboutness}.
\newblock Oxford University Press, Oxford, 2014.

\bibitem[Yamamoto(2017)]{Yamamoto2017}
Kentar\^{o} Yamamoto.
\newblock Results in modal correspondence theory for possibility semantics.
\newblock \emph{Journal of Logic and Computation}, 27\penalty0 (8):\penalty0
  2411--2430, 2017.

\bibitem[Zhao(2021)]{Zhao2016}
Zhiguang Zhao.
\newblock Algorithmic correspondence and canonicity for possibility semantics.
\newblock \emph{Journal of Logic and Computation}, 31\penalty0 (2):\penalty0
  523--572, 2021.

\end{thebibliography}

\end{document}